\def\ps@pprintTitle{ \let\@oddhead\@empty \let\@evenhead\@empty \def\@oddfoot{\hfill\thepage} \def\@evenfoot{\thepage\hfill}} \makeatother
\setlist[enumerate]{leftmargin=.5in}
\setlist[itemize]{leftmargin=.5in}
\newtheorem{theorem}{Theorem}[section]
\newtheorem{lemma}[theorem]{Lemma}
\newtheorem{proposition}[theorem]{Proposition}
\theoremstyle{definition}
\newtheorem{assumption}[theorem]{\sc Assumption}
\newtheorem{remark}[theorem]{\bf Remark}
\def\D{\mathcal{D}}
\def\A{\mathcal{A}}
\def\mfK{{\mathfrak{K}}}
\def\F{{\mathcal{F}}}
\def\G{{\mathcal{G}}}
\def\K{{\mathcal{K}}}
\def\M{\mathcal{M}}
\def\mcT{\mathcal{T}}
\def\Z{\mathcal{Z}}
\def\bZ{\bm{\Z}}
\def\mfQ{{\mathfrak{Q}}}
\def\mfN{{\mathfrak{N}}}
\def\mfJ{{\mathfrak{J}}}
\def\mcM{{\mathcal{M}}}
\def\mcMbar{\overline{\mcM}}
\def\bmcMbar{\bm{\mcMbar}}
\def\bmcMt{\bm{\tilde{\mcM}}}
\def\mcE{\mathcal{E}}
\def\bmcE{\bm{\mcE}}
\def\bGamma{\bm{\Gamma}}
\def\bpi{\bm{\pi}}
\def\bC{\bm{C}}
\def\bM{\bm{M}}
\def\bY{\bm{Y}}
\def\bbeta{\bm{\beta}}
\def\bbetah{\widehat{\bbeta}}
\def\E{\mathbb{E}}
\def\P{\mathbb{P}}
\def\P{{\mathbb{P}}}
\def\R{\mathbb{R}}
\def\HT{\mathbb{H}^2_T}
\def\LTk{\mathbb{L}^{2,k}_T}
\def\HTk{\mathbb{H}^{2,k}_T}
\def\Ek{\E^\Pk}
\def\EP{\E^\P}
\def\Pk{{\P^k}}
\def\Q{\mathbb{Q}}
\newcommand{\bm}{\boldsymbol}
\def\g{\bm{g}}
\def\bnubar{\bm{\nubar}}
\def\bqbar{\bm{\qbar}}
\def\ba{\bm{a}}
\def\bLambda{\bm{\Lambda}}
\def\blambda{\bm{\lambda}}
\def\bPsi{\bm{\Psi}}
\def\bphi{\bm{\phi}}
\def\bg{\bm{g}}
\def\blambda{\bm{\lambda}}
\def\bnubarN{\bnubar^{\N}}
\def\bnubarkN{\bnubar^{k,\N}}
\def\bG{\bm{G}}
\def\Lambdat{\tilde{\Lambda}}
\def\bgh{\hat{\bg}}
\def\T{\intercal}
\def\Ahat{\widehat{A}}
\def\bAhat{\bm{\Ahat}}
\def\nubar{\overline{\nu}}
\def\qbar{\overline{q}}
\def\bmcEt{\tilde{\bmcE}}
\renewcommand{\hbar}{\overline{h}}
\def\N{(N)}
\def\nuj{\nu^j}
\def\nujst{\nu^{j,\ast}}
\def\nubarN{\nubar^{\N}}
\def\nubarkN{\nubar^{k,\N}}
\def\nubark{\nubar^k}
\def\bnubarst{\bnubar^{\ast}}
\def\nubarkst{\nubar^{k,\ast}}
\def\qbar{\bar{q}}
\def\Hbar{\overline{H}}
\def\mMbar{\overline{\M}}
\def\bgt{\tilde{\bg}}
\def\invmean{\overline{m}}
\def\mbar{\bar{m}}
\def\bmbar{\bm{\mbar}}
\def\kp{k^\prime}
\newcommand{\1}[1]{\mathds{1}_{\left\{ {#1} \right\} }}
\newcommand{\smallqfm}[1]{\left( \begin{smallmatrix} {#1}_t \\ \qj{#1}_t \end{smallmatrix} \right)}
\newcommand{\qj}[1]{q^{j,{#1}}}
\newcommand{\qbark}[1]{\bar{q}^{k,{#1}}}
\newcommand{\OPnrm}[1]{\bigl\lVert {#1} \bigr\lVert_2}
\newcommand{\nrm}[1]{\left\lVert {#1} \right\lVert}
\newcommand{\bnu}{{\nubarN}}
\DeclareMathOperator*{\argmax}{arg\,max}
\newcommand{\mcB}{{\mathcal{B}}}
\newcommand{\hg}{{\widehat{g}}}
\newcommand{\tT}{{t\in[0,T]}}
\newcommand{\tomega}{{\widetilde\omega}} 
\begin{document}

\begin{frontmatter}

\title {
\textbf{Mean-Field Games with Differing Beliefs for Algorithmic Trading}\tnoteref{t1}\\[0.5em]
\textit{Forthcoming in Mathematical Finance}}
\tnotetext[t1]{SJ would like to acknowledge the support of the Natural Sciences and Engineering Research Council of Canada (NSERC), funding reference numbers RGPIN-2018-05705 and RGPAS-2018-522715. Data sharing is not applicable to this article as no new data were created or analyzed in this study.}


\author[author1]{Philippe Casgrain}
\ead{p.casgrain@utoronto.ca}

\author[author1]{Sebastian Jaimungal}
\ead{sebastian.jaimungal@utoronto.ca}
\address[author1] {Department of Statistical Sciences, University of Toronto}

\begin{abstract}
Even when confronted with the same data, agents often disagree on a model of the real-world. Here, we address the question of how interacting heterogenous agents, who disagree on what model the real-world follows, optimize their trading actions. The market has latent factors that drive prices, and agents account for the permanent impact they have on prices. This leads to a large stochastic game, where each agents' performance criteria are computed under a different probability measure. We analyse the mean-field game (MFG) limit of the stochastic game and show that the Nash equilibrium is given by the solution to a non-standard vector-valued forward-backward stochastic differential equation. Under some mild assumptions, we construct the solution in terms of expectations of the filtered states. Furthermore, we prove the MFG strategy forms an $\epsilon$-Nash equilibrium for the finite player game. Lastly, we present a least-squares Monte Carlo based algorithm for computing the equilibria and {show through simulations that increasing disagreement may increase price volatility and trading activity}.
\end{abstract}

\end{frontmatter}

\section{Introduction}

Financial markets are immensely complicated dynamic systems which incorporate the interactions of millions of individuals on a daily basis. Market participants vary immensely, both in terms of their trading objectives and in their beliefs on the assets they are trading. All of these participants compete with one another in an attempt to achieve their own personal objectives in the most efficient way possible. Traded assets may also be driven by latent factors, and agents must dynamically incorporate data into their trading decisions.

In this paper, we propose a game theoretic model in which a large population of heterogeneous agents all trade the same asset. This model considers heterogeneity not only from the point of view of an individual's trading objectives and risk appetite, but also from the point of view of each agent's beliefs regarding the performance of the asset they are trading. We pay particular attention to the information each agent is privy to, in an attempt to render the framework as realistic as possible, while maintaining analytical tractability.
We study the equilibrium of these markets by using the theory of mean-field games (MFGs), which studies the system as the number of participating agents becomes arbitrarily large. The general theory of mean-field games already has a large body of research associated with it. The original works stem from \cite{huang2006large}, \cite{HuangCaines_TAC07}, and \cite{lasry2007mean}. {Among the many extensions and generalizations which explore the broad theory of MFGs as well as their applications, we highlight the following works:~\cite{huang2010large} and \cite{nourian2013e} who investigate MFGs with combinations major and minor agents, \cite{carmona2013probabilistic} who develop a probabilistic analysis of MFGs, as well as the works of~\cite{cirant2015multi,bensoussan2018mean} who introduce MFGs with heterogeneous populations of agents.}
This theory has seen applications in various financial contexts, such as \cite{gueant2011mean} who explores various applications of MFGs in economics, \cite{carmona2013mean} and \cite{huang2017robust} who study systemic risk, \cite{jaimungal2015mean} who studies algorithmic trading in the presence of a major agent and a population of minor agents,  \cite{cardaliaguet2016mean} who investigates optimal execution, and \cite{firoozi2015varepsilon}, \cite{firoozi2016mean} who look at MFGs with partial information on states and apply it to algorithmic trading.

{
Other works that study differing beliefs of market participants include \cite{bayraktar2017mini}, who study a system where agents' believe the asset price is an arithmetic Brownian motion with a latent (constant) drift, and agents disagree on the prior distribution of this latent drift, as well as on the temporary and permanent impact trading has on prices. The authors do not seek an equilibrium, but rather look at how the differences in belief may cause mini-flash crashes. \cite{bouchard2018equilibrium} study a model where agents, with differing risk aversion who receive random endowments, trade assets who's drifts are determined in equilibrium. Under certain assumptions, the equilibria results in asset prices having a permanent price impact component due to the existence of trading costs (temporary price impact). \cite{choi2018smart} study how traders who penalize deviations from a target strategy, and have their own private information, form an equilibria.
}

In contrast to other work on MFGs, as well as its specific application to algorithmic trading, here, motivated by \cite{casgrain_jaimungal_2016}, we include latent states so that agents do not have full information about the system dynamics.  Furthermore, motivated by \cite{firoozi2016mean} and \cite{casgrain2018algorithmic}, who study a stochastic game with latent factors where agents have the same model beliefs, here, we study how varying beliefs among the agents affect the optimal trading behaviour. In our model, we express the belief of agents as a probability measure on the dynamics of the asset price process and of any latent processes that may be driving them. As far as the authors are aware, this is the first time that MFG with varying beliefs have been treated in the literature. This generalization is quite non-trivial, nonetheless, we succeed in characterizing the model equilibrium as the solution to a non-standard forward-backward stochastic differential equation (FBSDE) defined across the collection of belief measures. We are able to present a closed form representation for the solution of the MFG and it incorporates all of the differing market's beliefs into the decisions of the individual agents.

Our key result, is the optimal mean-field trading rate $\bnubarst_t$ for the collection of sub-populations (within which agents have the same belief) can be written as
\[
		\bnubarst_t = \bg_{1,t} + \bg_{2,t} \, \bqbar_t^{\bnubarst},
\]
where $\bg_{2,t}$ is a deterministic matrix-valued process, $\bg_{1,t}$ is stochastic and encodes the various beliefs of the agents and their expectations of the future dynamics of the asset price, and $\bqbar_t^{\bnubarst}$ is the corresponding mean-field inventories. Moreover, the individual agents' trading rates within a subpopulation-$k$ can be written as
\[
\nujst_t = \nubarkst_t
+ \tfrac{1}{2 a_k} \, h_{2,t}^k
\,( \qj{\nujst}_t - \qbark{\nubarkst}_t )
			 \;,
\]
where $h_{2,t}^k$ is a deterministic function of time. Hence, agents speed up or slow down relative to the mean-field trading rate depending on whether their current inventory is above or below their subpopulation's mean-field inventory. The model setup does not penalize deviations from the mean-field, yet agents tend to revert to the mean-field -- this is in constrast to many MFG formulations where the deviation from the mean-field is explicitly penalized.

%
%

We structure the remainder of the paper as follows. Section~\ref{sec:The-Model-Description} introduces the market model and the stochastic game that agents participate in. Section~\ref{sec:Solving-The-MFG} begins by introducing the MFG limit of the stochastic game and then proves the collection of optimal strategies in the MFG may be represented as the solution to a system of coupled FBSDEs. Next, we solve the system of FBSDEs, and find the mean-field and each individual agents' strategy. Section~\ref{sec:Example-Model-Subsection} provides a specific example of a model where the assumptions in the key results are satisfied. In Section~\ref{sec:epsilon-Nash} we prove the solution to the MFG satisfies the $\epsilon$-Nash equilibrium property in the finite population game. Lastly, Section~\ref{sec:Numerical-Experiments} provides a least-square Monte Carlo approach to computing certain expecations, as well as simulated examples of a market model with agents having differing beliefs.

\section{The Model} \label{sec:The-Model-Description}

In this section, we provide the  market model and the participating agents' performance criteria. Our model closely resembles the model for the stochastic game in \cite{casgrain2018algorithmic}. The stochastic game  here aims to characterize a population of agents with several sources of heterogeneity. As in \cite{casgrain2018algorithmic}, here, agents have varying trading objectives. In addition, however, agents are also characterized by their beliefs regarding the model driving the asset price process. In the remainder of this section, we present the trading mechanics which each of the agents use to interact with the market, as well as the objectives each of the agents seek to achieve with their actions.

\subsection{The Population of Agents}

The market consists of a population of $N>0$ rational heterogeneous agents trading a single asset. Agents are indexed with an integer $j\in\mfN:=\{1,\dots,N\}$. The total population of agents is divided into $K\in\{1,\dots,N\}$ disjoint sub-populations, which are indexed by $k\in\mfK := \{1,\dots, K\}$. $K$ is assumed to be constant and independent of $N$. All agents within a fixed sub-population have homogeneous beliefs and performance criteria. The set
\begin{equation}
	\K_k^{\N} := \{ j \in \mfN \;\colon \text{ $j$ is in sub-population $k$}  \},
\qquad \forall k\in\mfK
	\;,
\end{equation}
denotes the set of agents within sub-population $k$, and the superscript $\N$ indicates the explicit dependence on the total number of agents. We also define $N_k^{\N}:=|\K_k^{\N}|$ to be the total number of agents within sub-population $k$. We further assume the number of agents contained in each of the sub-populations remains stable as we take the population limit to infinity. More specifically, we require that the proportion of agents contained within population $k$ satisfies
\begin{equation}
	\lim_{N\rightarrow\infty} p_k^{\N} = p_k \in (0,1)
\qquad\text{where}\qquad p_k^{\N}=\frac{N_k^{\N}}{N}
	\;.
\end{equation}

\subsection{The Agent's State Processes} \label{sec:The-State-Processes}

We work on the filtered probability space $(\Omega,\mathfrak{G}=\{\G_t\}_{\tT},\P)$ completed by the null sets of $\P$ and where $T\in(0,\infty)$ is some fixed time horizon. All of processes defined in the remainder of this section are $\G$-adapted, unless otherwise specified, and  the notation $\EP[\cdot]$ represents expectation with respect to the measure $\P$.

All agents have the ability to buy and sell the asset over the fixed trading period $[0,T]$, after which all trading activity comes to a halt. Each agent $j\in\mfN$ controls the amount they wish to purchase or sell at a continuous rate denoted $\nuj = (\nuj_t)_{\tT}$, where $\nuj_t>0$ ($\nuj_t<0$) indicates the rate of buy (sell) orders the agent sends to the market. {At the start of the trading period, each agent holds a random amount $\mfQ_0^j$ of the asset. This may be interpreted as each agent having private information about their own holdings, whereas other market participant know only its distribution.}
Agents keep track of their holdings in the traded asset with the inventory process $\qj{\nuj} = (\qj{\nuj}_t)_{\tT}$, where the superscript indicates the explicit dependence on the agent's controlled rate of trade. The relationship between agent-$j$'s  trading rate and their inventory process is
\begin{equation}
	\qj{\nuj}_t = \mfQ_0^j + \int_0^t \nuj_t \, dt\;,
\end{equation}
and may be interpreted as each agent buying or selling an amount $\epsilon \,\nuj_t$ in each small time interval $[t,t+\epsilon)$.
\begin{assumption}
We make the technical assumption that the initial inventory holdings of all agents have a bounded variance, so that $\exists \;0<\mathfrak{C}<\infty$ for which $\EP (\mfQ_0^j)^2 < \mathfrak{C}, \forall j \in \mfN$.
Moreover, we assume that the mean of the starting inventory levels are the same within a given sub-population, so that $\EP[\mfQ_0^j] = \overline{m}_k$ for each $j\in\K_k^{\N}$.
\end{assumption}

Buying and selling actions of agents impact the price of the traded asset in a manner to be specified below. As well, agents believe the asset midprice follows (potentially) different models. We incorporate differing beliefs into our model by assigning a probability measure $\Pk$ to each sub-population $k\in\mfK$.
The various measures correspond to the model that agents in a particular sub-population believes to represent the true dynamics of the asset price.

We define the asset price process $S^{\bnu}=(S^{\bnu}_t)_{\tT}$, where the superscript $\bnu=(\nu^j)_{j\in\mfN}$ indicates the dependence of the price on the actions of all agents in the market.  It is useful to define the average trading rate $\nubarkN=(\nubarkN_t)_{\tT}$  of all agents within sub-population $k$ as
\begin{equation}
	\nubarkN_t = \frac{1}{N_k^{\N}} \sum_{j\in\K_{k}^{\N}} \nuj_t\,.
\end{equation}
Each agent in sub-population $k$ then believes the asset price process follows the dynamics
\begin{equation} \label{eq:Midprice-Dynamics}
	S^{\bnu}_t = S_{0} +
	\int_0^t \left\{
	A_u^k +
	\sum_{k^\prime\in\mfK} \lambda_{k,k^\prime} \, p_{k^\prime}^{\N}\, \nubar^{k^\prime,\N}_u
	\right\} \, du + M_t^k
	\;,
\end{equation}
where for each $k\in\mfK$, $A^k = (A_t^k)_{\tT}$ is a $\G$-predictable process, $M^k = (M_t^k)_{[0,T]}$ is a $\G$-adapted $\Pk$-martingale, and $\lambda_{k,k^\prime} > 0$ $\forall k,k^\prime \in \K$ are constants. We also assume  the initial inventory holdings of each agent ${\mfQ_0^j}_{j\in\N}$ are all independent of both $\{A^k\}_{k\in\mfK}$ and $\{M^k\}_{k\in\mfK}$ in each measure $\Pk$.

The measure $\Pk$ effectively specifies the sub-population-$k$'s asset price model through the processes $A^k$ and $M^k$, as well as the scale of the market impact of each sub-population, through set of constants $\{\lambda_{k,k^\prime}\}_{k^\prime\in\K}$.
\begin{assumption}
We make the technical assumptions that $A^k\in\HTk$ and $M^k\in\LTk$, where
\begin{subequations}
\begin{align}
	\HTk &= \left\{ f_t:\Omega\times[0,T] \rightarrow \R \colon \Ek \int_0^T \lVert f_t \rVert^2 \, dt < \infty \right\}
\qquad
	 \text{and}
	\\
	\LTk &= \left\{ f_t:\Omega\times[0,T] \rightarrow \R \colon \Ek \lVert f_t \rVert^2 < \infty \;, \forall \tT \right\}
	\;,
\end{align}
\label{eqn:Integrability}%
\end{subequations}%
for each $k\in\mfK$ and where $\lVert \cdot \rVert$ represents the Euclidean norm.
\end{assumption}
\begin{assumption}
	We assume that $\Pk\sim\P$ for all $k\in\mfK$ and the law $\mfQ_0^j$ under each measure $\Pk$ is the same as that under the measure $\P$.
\end{assumption}

\begin{assumption}
	We assume that for each $k\in\mfK$, $A^k$ and $M^k$ are uncontrolled -- i.e., are unaffected by the agents' actions.
\end{assumption}

Our asset price process model does not require explicitly specifying $A^k$ and/or $M^k$ in advance. Rather, we only require the integrability conditions in \eqref{eqn:Integrability}, hence there is great flexibility in the class of models our approach accommodates. For example, $A^k$ and $M^k$ can be discontinuous or non-Markov, as long as they satisfy the appropriate integrability conditions. The key assumption we make is that price impact from the order-flow of all agents' trading is linear.

\begin{remark}
{Agents do not change their assigned belief measure, even after observing data from $(S^{\bnu}_t)_{\tT}$, however, their prior assumptions are updated to posterior estimates as time flows.}
\end{remark}


Each agent tracks their total accumulated cash process $X^{j,\nuj}_t = ( X_t^{j,\nuj} )_{\tT}$ throughout the trading period. When buying and selling the asset, each agent pays an instantaneous cost that is linearly proportional to amount of shares transacted. This cost is expressed through the controlled dynamics of the cash process. For an agent $j\in\K_k^{\N}$, their corresponding cash process is
\begin{equation}
	X^{j,\nuj}_t = X^{j}_0 - \int_0^t  \left( S_t^{\nubarN} + a_k \,\nuj_u  \right) \,\nuj_u \, du
	\;,
\end{equation}
where $a_k>0$ is a parameter that is unique to a sub-population $k$ and sets the scale of the instantaneous cost. {The penalty $-a_k\int_0^t(\nu_u^j)^2\,du$ may be interpreted as a cost of trading too quickly, or a tractable proxy for the cost of crossing the bid-ask spread, as in~\cite{bouchard2018equilibrium}. It is also straightforward to include the influence of other agents in this cost, i.e., replace $S_t^{\nubarN} + a_k \,\nuj_u$  by $S_t^{\nubarN} + \sum_{k\in\mfK} a_{k} \, p_{k}^{\N}\, \nubar^{k,\N}_u$.}

\subsection{Information Restriction} \label{sec:Info-Restriction}

In this market model, agents have restricted information over the course of the trading period. More specifically, agents have access only to the information generated by the paths of the asset price process $S^{\bnu}$, their own inventory process $q^{j,\nu^j}$, and the average order flow of each sub-population, $\bnubarkN = \left(\nubarkN\right)_{k\in\mfK}$.
We express this information restriction in our model by restricting the sigma-algebra to which an agent's strategy may be adapted. For each $j\in\mfN$, we only allow agent-$j$ to choose strategies contained within the set of asmissible strategies,
\begin{equation} \label{eq:Admissible-Set-Def}
	\A^j := \left\{ \omega \in \HT ,\; \omega \text{ is $\F^j$-predictable} \right\}
	\;,
\end{equation}
where we define $\HT = \bigcap_{k\in\mfK} \HTk$, and
\begin{equation}
	\F_t^j = \sigma\left( (S^{\bnu}_u, \bnubarkN_u )_{u\in[0,t)} \right)
	\vee
	\sigma\left( \mfQ_0^j \right)
	\;,
\end{equation}
which is the sigma-algebra generated by the paths of the asset price process, the total order-flow proces,, and the starting inventory level for agent $j$.
In definition~\eqref{eq:Admissible-Set-Def}, we deliberately restrict ourselves to processes in $\HT$, to guarantee that $S_t^{\bnu} \in \HTk$ for all $k\in\mfK$.

\subsection{The Agent's Optimization Problem} \label{sec:Optimization-Problem}

Each agent chooses their  trading strategy to maximize an objective functional that measures their performance over the course of the trading period $[0,T]$. For each $j\in\mfN$ let $\A^{-j}:=\bigtimes_{i\in\mfN , i\neq j} \A^i$. Each agent-$j$ within a sub-population $k\in\mfK$, chooses a control $\nu_j\in\A^j$ to maximize a functional $H_j:\A^j \times \A^{-j} \rightarrow \R$ defined as follows
\begin{equation} \label{eq:Objective-Function-Definition}
	H_j(\nuj,\nu^{-j})
	=
	\Ek \left[
	X_T^{\nuj} +
	\qj{\nuj}_T \left( S_T^{\nubarN} - \Psi_k \qj{\nuj}_T \right)
	- \phi_k \int_0^T ( \qj{\nuj}_u )^2 \, du
	\right],
\end{equation}
where $\Psi_k>0$ and $\phi_k \geq 0$ are parameters that vary across, but are constant within, sub-populations. In definition~\eqref{eq:Objective-Function-Definition}, we use the notation $\nu^{-j} := \left(\nu^1 , \dots , \nu^{j-1},\nu^{j+1},\dots,\nu^N\right)$ to indicate the dependence of the objective functional on the actions of all other agents in the population.

The objective functional corresponds to the agent trying to maximize a weighted average of three separate quantities. The first term $X_T^{\nuj}$ corresponds to the total amount of cash the agent has accumulated up until time $T$. The second term, $\qj{\nuj}_T \left( S_T^{\nubarN} - \Psi_k \qj{\nuj}_T \right)$ corresponds to the cost of liquidating all of the agent's leftover inventory at time $T$, minus a liquidation penalty controlled by the parameter $\Psi_k$. The last term, $- \phi_k \int_0^T ( \qj{\nuj}_u )^2 \, du$ is a running risk-aversion penalty that is controlled by the parameter $\phi_k$, which incentivizes the agent to keep their market exposure low during the trading period. {As demonstrated in~\cite{cartea2017algorithmic}, this term may also be interpreted as stemming from an agent's model uncertainty with respect to a continuum of measures, absolutely continuous with respect to the reference measure, and penalize those candidate measures with relative entropy.}

Each agent within sub-population $k$ has an objective functional that is computed by taking expectations under the measure $\Pk$. Hence, agents incorporate their own beliefs on the asset price dynamics. Furthermore, each functional $H_j$ depends on the actions of all other players ($\nu^{-j}$) through the dynamics of the asset price $S_t^{\bnu}$, which implicitly appear in the definition~\eqref{eq:Objective-Function-Definition}.

By expanding the dynamics of each of the state processes present in \eqref{eq:Objective-Function-Definition}, and by using integration by parts, we may re-write the agent's objective functional as
\begin{equation} \label{eq:Objective-Function-Alternative-Representation}
	H_j(\nuj,\nu^{-j})
	=
	C_0^j +
	\Ek \left[
	\int_0^T
	\qj{\nuj}_t dS_t^{\nubarN}
	-
	\begin{pmatrix}
		\nuj_t \\ \qj{\nuj}_t
	\end{pmatrix}^\T
	\begin{pmatrix}
	a_k & \Psi_k \\ \Psi_k & \phi_k
	\end{pmatrix}
	\begin{pmatrix}
		\nuj_t \\ \qj{\nuj}_t
	\end{pmatrix}
	\, dt
	\right],
\end{equation}
where $C_0^j$ is a term that is constant with respect to $\nuj$ and $\nu^{-j}$. Each agent's behaviour is characterized entirely by the objective functional they are trying to maximize. From \eqref{eq:Objective-Function-Alternative-Representation}, it is clear that the objective functional is parametric so that the agent's preferences can be entirely described by the tuple $\left( a_k , \phi_k , \Psi_k , \Pk \right)$ and their starting inventory $\mfQ_0^j$.

The market model  defined above forms a stochastic game in which all participating agents are competing to maximize each of their own objectives. We wish to find and study this market at its Nash equilibrium. This equilibrium can be described more formally as the collection of admissible strategies $\{\nuj \in \A^j \colon j\in\mfN\}$ which satisfies the condition
\begin{equation} \label{eq:Finite-Nash-Equilibrium-Def}
	\nujst = \argmax_{\omega\in \A^j} H_j\left(\omega,\nu^{-j}\right)
	\,, \hspace{2em} \forall j \in \mfN
	\;.
\end{equation}
Obtaining this collection of strategies for the stochastic game with a finite number of players proves to be a difficult task.
{As we make no assumptions on $\{\nuj\}$ beyond the measurability and integrability assumptions required in the definition of $\A^j$, and in particular do not use a feed-back from, a Nash equilibrium satisfying~\eqref{eq:Finite-Nash-Equilibrium-Def} will be an open-loop equilibrium in general.} One of the main obstacles in finding a solution to this problem is that each agent's strategy is adapted to different filtration $\F^j$. Furthermore, each of the objective functionals defined in equation~\eqref{eq:Objective-Function-Alternative-Representation} are expressed one of $K$ different measures from the collection of measures $\{\Pk\}_{k\in\mfK}$, each representing the beliefs of a particular individual. These two features make the finite-population stochastic game difficult to solve directly. It is, however, possible to solve the stochastic game in the infinite population limit, and use the result as an approximation for the finite population game.

\section{Solving the Mean-Field Stochastic Game} \label{sec:Solving-The-MFG}

As the stochastic game presented in Section~\ref{sec:Optimization-Problem} presents obstacles when aiming to solve it directly, we now take a different avenue. In this section, we study the stochastic game as the population limit tends towards infinity. The resulting limit is that of a stochastic Mean Field Game (MFG) that we can solve. Although we do not explicitly solve the finite player game presented in Section~\eqref{sec:The-Model-Description}, by establishing an $\epsilon$-Nash equilibrium property in Section~\ref{sec:epsilon-Nash}, we show that the equilibrium solution obtained for the MFG provides an approximation to the finite population game, provided that the population size is large enough.

This section begins by taking the population limit as $N\rightarrow\infty$, to obtain new objective functionals for the agents resulting in a stochastic MFG. Next, using convex analysis methods, we characterize the Nash-equilibrium as the solution to a coupled system of FBSDEs. We then conclude by presenting a solution to this FBSDE problem, and thus an exact representation of each agent's optimal control at the Nash-equilibrium.

\subsection{The Limiting Mean-Field Game}
\label{sec:Limiting-MFG}

Agent-$j$'s objective functional \eqref{eq:Objective-Function-Definition} only depends on the population size $N$ through the dynamics of the mid-price process $S_t^{\nubarN}$, which is given by the dynamics in equation~\eqref{eq:Midprice-Dynamics}.
\begin{assumption}
To proceed, we assume that the limiting trading rate exists, in particular, there exist processes $\nubark = (\nubark_t)_{\tT}$ for $k\in\mfK$ such that $\nubark\in\HT$ and
\begin{equation} \label{eq:Limiting-Mean-Field-Assumption}
	\lim_{N\rightarrow\infty} \nubarkN_t = \nubark_t
	\,,\qquad
	\P \times \mu
	\text{ a.e.,}
\end{equation}
where $\mu$ is the Lebesgue measure on the Borel sigma-algebra $\mcB_{[0,T]}$, and where $\P\times\mu$ is the canonical product measure of $\P$ and $\mu$.
\end{assumption}
As each individual $\nuj$ is $\F^j$-predictable, $\nubark$ must be $\left(\bigvee_{j\in\mfN} \F^j\right)$-predictable.
Moreover, by our assumption that $\Pk\sim\P$ for each $k\in\mfK$, the limit~\eqref{eq:Limiting-Mean-Field-Assumption} also holds $\Pk\times \mu$ almost everywhere. From now on, we refer to each of the processes $\nubark$ as the mean-field trading rate for sub-population-$k$.

Using the assumption that $p_k^{\N} \rightarrow p_k$ for all $k\in\mfK$ along with \eqref{eq:Limiting-Mean-Field-Assumption}, we find that in the infinite population limit, from the perspective of agent-$j$ from sub-population $k$, the dynamics of the asset price process is
\begin{equation} \label{eq:Limiting-Midprice-Dynamics}
	S^{\nubar}_t = S_{0} +
	\int_0^t \left\{
	A_u^k +
	\sum_{k^\prime\in\mfK} \lambda_{k,k^\prime} \, p_{k^\prime}\, \nubar^{k^\prime}_u
	\right\} \, du + M_t^k
	\;.
\end{equation}
{In this limit, a single individual's impact on the price becomes negligible, thus the resulting mean-field trading rate $\nubark$ is unaffected by a single agent's trading rate $\nuj$. Therefore, in the limit, each agent's objective $H_j$ no longer depends on the whole collection of trading rates $\nu^{-j}$, but instead only depends on the collection of mean-field processes $\{\nubark\}_{k\in\mfK}$, which considerably simplifies the dependence structure within the game. This can be interpreted as agents becoming `price takers' in the mean-field limit resulting from the aggregate of agents' infinitesimal impact.}

By using the objective functional representation in~\eqref{eq:Objective-Function-Alternative-Representation}, expanding $dS_t^{\nubar}$ from~\eqref{eq:Limiting-Midprice-Dynamics}, and noticing that the martingale components vanish under expectation, we may write the agents objective functional in the infinite population limit as
\begin{equation} \label{eq:Objective-Function-Limit}
	\Hbar_j^{\bnubar}(\nuj)
	=
	C_0^j +
	\Ek \left[
	\int_0^T
	\left\{
	\qj{\nuj}_t
	\left( A_t^k + \blambda_k^\T \, \bnubar_t \right)
	-
	\begin{pmatrix}
		\nuj_t \\ \qj{\nuj}_t
	\end{pmatrix}^\T
	\begin{pmatrix}
	a_k & \Psi_k \\ \Psi_k & \phi_k
	\end{pmatrix}
	\begin{pmatrix}
		\nuj_t \\ \qj{\nuj}_t
	\end{pmatrix}
	\right\}
	\, dt
	\right],
\end{equation}
where for each $k\in\mfK$ we define $\blambda_k\in\R^K$ as $\blambda_k=(\lambda_{k,k^\prime} \, p_{k^\prime})_{k^\prime \in \K}$ and where we define $\bnubar_t\in \R^K$ as $\bnubar_t = (\nubark)_{k\in\mfK}$. In the expression for $\Hbar^{\bnubar}_j$ in~\eqref{eq:Objective-Function-Limit} we suppress the argument $\nu^{-j}$ as, in this infinite population limit, their effect is felt through the mean-fields for each subpopulation. We use the superscript in the notation for $\Hbar_j^{\bnubar}$ to indicate the dependence on the set of mean-fields.

Our new objective is to obtain the Nash-equilibrium in this newly defined mean-field game. The Nash equilibrium for the MFG consists of finding the infinite collection of controls $\{\nuj\}_{j=1}^\infty$ that satisfies the optimality condition
\begin{equation} \label{eq:MFG-Optimization-Problem}
	\nujst = \argmax_{\omega\in\A^j}\Hbar_j^{\bnubarst} (\omega)\,,
\end{equation}
as well as the consistency condition
\begin{equation}
\label{eqn:MFG-consistency-requirement}
	\nubarkst = \lim_{N \nearrow \infty} \frac{1}{N_k^{\N}} \!\! \sum_{j\in\K_k^{\N}} \nujst_t\,,
\end{equation}
for all $k\in\mfK$.

In the limit, the explicit dependence of an agent's actions in another agent's  objective functional is replaced with an implicit dependence through the consistency condition.

\subsection{The Agent's Optimality Condition}

To solve the optimization problem described in Section~\ref{sec:Limiting-MFG}, we must determine what strategy maximizes the rhs of equation~\eqref{eq:MFG-Optimization-Problem} for all agents. This is achieved by using tools from infinite dimensional convex-analysis or variational calculus along the lines of \cite{bank2017hedging} (who investigate a single agent tracking problem that does not incorporate price information) and \cite{casgrain2018algorithmic} (who look at a multi-agent setting with price imformation, but where all agents use the same model). First, we demonstrate that each function $\Hbar^{\bnubar}_j$ is a strictly concave functional of $\nuj$. Next, as $\Hbar^{\bnubar}_j$ is a functional with an infinite-dimensional argument, we show that each functional $\Hbar^{\bnubar}_j$ is G\^ateaux differentiable within the space $\A^j$ and compute the G\^ateaux derivative explicitly. General results in convex optimization then state that if the derivative vanishes at a point within the space $\A^j$, it must be the point at which $\Hbar^{\bnubar}_j$ attains its supremum. The lemmas that follow give us the required properties for $\Hbar^{\bnubar}_j$.
\begin{lemma} \label{thm:Convex-Lemma}
	The functional $\Hbar^{\bnubar}_j$ defined in equation~\eqref{eq:Objective-Function-Limit} is strictly concave in $\A^j$ up to $\P\times\mu$ null sets.
\end{lemma}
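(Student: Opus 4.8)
The plan is to exploit that $\Hbar^{\bnubar}_j$ is a quadratic functional of $\nuj$, so that strict concavity reduces to positive-definiteness of an associated bilinear form. Since $\qj{\nuj}_t=\mfQ_0^j+\int_0^t\nuj_u\,du$, the map $\nuj\mapsto(\nuj_t,\qj{\nuj}_t)$ is affine, so from \eqref{eq:Objective-Function-Limit} we may write $\Hbar^{\bnubar}_j(\nuj)=L(\nuj)-B(\nuj,\nuj)$, where $L$ is an affine functional (collecting $C_0^j$, the $\mfQ_0^j$-dependent pieces, and $\Ek\int_0^T\qj{\nuj}_t(A_t^k+\blambda_k^\T\bnubar_t)\,dt$) and $B$ is the symmetric bilinear form
\[
B(\nuj,\nuj)=\Ek\left[\int_0^T \begin{pmatrix}\nuj_t\\ \qj{\nuj}_t\end{pmatrix}^{\T} \begin{pmatrix} a_k & \Psi_k\\ \Psi_k & \phi_k\end{pmatrix}\begin{pmatrix}\nuj_t\\ \qj{\nuj}_t\end{pmatrix}dt\right].
\]
For $\nu,\nu'\in\A^j$, $\theta\in(0,1)$, and $\delta:=\nu-\nu'$, bilinearity of $B$ and affineness of $L$ give $\Hbar^{\bnubar}_j(\theta\nu+(1-\theta)\nu')-\theta\Hbar^{\bnubar}_j(\nu)-(1-\theta)\Hbar^{\bnubar}_j(\nu')=\theta(1-\theta)\,B(\delta,\delta)$. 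Hence it suffices to prove $B(\delta,\delta)>0$ whenever $\delta$ is not $\P\times\mu$-a.e.\ zero.

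The crux is that the instantaneous cost matrix $\left(\begin{smallmatrix} a_k & \Psi_k\\ \Psi_k & \phi_k\end{smallmatrix}\right)$ need \emph{not} be positive semidefinite --- its determinant $a_k\phi_k-\Psi_k^2$ can be negative --- so the integrand is not pointwise nonnegative and a structural argument is required. The saving feature is that inventory is the running integral of the rate: writing $Q^{\delta}_t:=\qj{\nu}_t-\qj{\nu'}_t=\int_0^t\delta_u\,du$, one has $Q^{\delta}_0=0$, so integration by parts (pathwise, $Q^{\delta}$ being absolutely continuous in $t$) gives $\int_0^T\delta_t\,Q^{\delta}_t\,dt=\tfrac12(Q^{\delta}_T)^2$. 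Therefore
\[
B(\delta,\delta)=\Ek\left[a_k\int_0^T\delta_t^2\,dt+\Psi_k(Q^{\delta}_T)^2+\phi_k\int_0^T(Q^{\delta}_t)^2\,dt\right]\;\geq\; a_k\,\Ek\int_0^T\delta_t^2\,dt,
\]
using $\Psi_k>0$ and $\phi_k\geq 0$ to drop the last two nonnegative terms. Since $a_k>0$ and, by $\Pk\sim\P$, a $\P\times\mu$-null set is the same as a $\Pk\times\mu$-null set, the right-hand side is strictly positive exactly when $\delta$ differs from $0$ on a set of positive $\P\times\mu$ measure. This gives strict concavity of $\Hbar^{\bnubar}_j$ on $\A^j$ modulo $\P\times\mu$-null sets.

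One should first check $\Hbar^{\bnubar}_j$ is finite and $B$ well-defined on $\A^j$: for $\nu\in\A^j\subset\HT\subset\HTk$ we have $\Ek\int_0^T\nu_t^2\,dt<\infty$, whence $(\qj{\nu}_t)^2\leq 2(\mfQ_0^j)^2+2T\int_0^T\nu_u^2\,du$ together with $\EP(\mfQ_0^j)^2<\mathfrak{C}$ (law-invariant, since the law of $\mfQ_0^j$ agrees under $\Pk$ and $\P$) yields $\Ek\int_0^T(\qj{\nu}_t)^2\,dt<\infty$, so $B$ is finite; the affine term is then finite by Cauchy--Schwarz using $A^k\in\HTk$ and $\nubark\in\HT$ for all $k\in\mfK$, recalling \eqref{eqn:Integrability}. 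I do not expect a real obstacle: the whole content of the lemma is the one-line observation that the integration-by-parts identity converts the indefinite quadratic form into a manifestly nonnegative one bounded below by $a_k\,\Ek\int_0^T\delta_t^2\,dt$; the remainder is routine integrability bookkeeping and standard quadratic-functional algebra.
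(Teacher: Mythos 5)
Your proposal is correct and follows essentially the same route as the paper's proof: reduce strict concavity to positivity of the quadratic form $\Ek\int_0^T (\Delta_t,q^\Delta_t)\,\bGamma_k\,(\Delta_t,q^\Delta_t)^\T dt$ for a nonzero difference $\Delta$, handle the potentially indefinite cross term $2\Psi_k\Delta_t q^\Delta_t$ by the same integration-by-parts identity $\int_0^T\Delta_t q^\Delta_t\,dt=\tfrac12 (q^\Delta_T)^2\ge 0$, and conclude from $a_k>0$ together with $\Pk\sim\P$ to transfer null sets. Your added integrability check is harmless extra bookkeeping; otherwise the two arguments coincide.
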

\begin{proof}
See \ref{sec:Proof-Convex-Lemma}.
\end{proof}

\begin{lemma} \label{thm:Differentiable-Lemma}
	For an agent-$j$ in sub-population $k$, the functional $\Hbar^{\bnubar}_j$ defined in equation~\eqref{eq:Objective-Function-Limit} is everywhere G\^ateaux differentiable in $\A^j$. The G\^ateaux derivative at a point $\nu\in\A^j$ in a direction $\omega\in\A^j$ can be expressed as
\begin{equation} \label{eq:Gateaux-Derivative-Expression}
\begin{split}
	\left\langle \D \Hbar_j^{\bnubar}(\nu) , \omega \right\rangle
	= &
	\Ek \left[
	\int_0^T
	\omega_t \,
	\Bigg\{
	-2a_k\nu_t - 2\Psi_k \qj{\nu}_T \right.
\\
& \left.\qquad\qquad\qquad
+ \int_t^T \left(
	\,\Ek\left[ A^k_u + \blambda_k^\T\, \bnubar_u \,\lvert \, \F_u^j \right] - 2\phi_k \,\qj{\nu}_u \right) \, du
	\Bigg\}
	\, dt
	\right]
	\;.
\end{split}
\end{equation}
\end{lemma}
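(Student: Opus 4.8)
The plan is to exploit the fact that, because the inventory map $\nu\mapsto\qj{\nu}$ is affine, $\Hbar_j^{\bnubar}$ is a quadratic functional of $\nuj$, so its directional difference quotient is literally a polynomial in the perturbation parameter. Concretely, for $\nu,\omega\in\A^j$ set $Q^\omega_t:=\int_0^t\omega_u\,du$, so that $\qj{\nu+\epsilon\omega}_t=\qj{\nu}_t+\epsilon\,Q^\omega_t$; substituting $\nu+\epsilon\omega$ into \eqref{eq:Objective-Function-Limit} and expanding the quadratic form and the linear term in powers of $\epsilon$ yields $\Hbar_j^{\bnubar}(\nu+\epsilon\omega)=\Hbar_j^{\bnubar}(\nu)+\epsilon\,L_j(\nu,\omega)+\epsilon^2 R_j(\omega)$, with $R_j(\omega)=-\Ek\int_0^T\bigl(\begin{smallmatrix}\omega_t\\Q^\omega_t\end{smallmatrix}\bigr)^{\!\T}\bigl(\begin{smallmatrix}a_k&\Psi_k\\\Psi_k&\phi_k\end{smallmatrix}\bigr)\bigl(\begin{smallmatrix}\omega_t\\Q^\omega_t\end{smallmatrix}\bigr)dt$ and
\[
L_j(\nu,\omega)=\Ek\!\left[\int_0^T\!\Big\{Q^\omega_t\big(A_t^k+\blambda_k^\T\bnubar_t\big)-2a_k\,\omega_t\,\nu_t-2\Psi_k\big(\omega_t\,\qj{\nu}_t+Q^\omega_t\,\nu_t\big)-2\phi_k\,Q^\omega_t\,\qj{\nu}_t\Big\}\,dt\right].
\]
This first step is pure bookkeeping; the substance lies in steps two and three.

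The second step is to verify that $L_j(\nu,\omega)$ and $R_j(\omega)$ are finite for all $\nu,\omega\in\A^j$, so that the limit defining $\langle\D\Hbar_j^{\bnubar}(\nu),\omega\rangle$ exists and simply equals $L_j(\nu,\omega)$ with no further analysis. Here I would use the standing integrability assumptions: from $\nu,\omega\in\HT\subseteq\HTk$ and $\EP(\mfQ_0^j)^2<\mathfrak{C}$ one gets $\Ek\bigl[\sup_{\tT}|\qj{\nu}_t|^2\bigr]\le 2\mathfrak{C}+2T\,\Ek\int_0^T\nu_t^2\,dt<\infty$, and likewise $\Ek\bigl[\sup_{\tT}|Q^\omega_t|^2\bigr]<\infty$; then Cauchy--Schwarz together with $A^k\in\HTk$ and $\blambda_k^\T\bnubar\in\HT\subseteq\HTk$ shows each product $Q^\omega_tA_t^k$, $Q^\omega_t\,\blambda_k^\T\bnubar_t$, $\omega_t\nu_t$, $\omega_t\qj{\nu}_t$, $Q^\omega_t\nu_t$, $Q^\omega_t\qj{\nu}_t$ and $\omega_t\qj{\nu}_T$ is in $L^1(\Pk\times\mu)$. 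The same estimates incidentally show $\Hbar_j^{\bnubar}$ is $\R$-valued on $\A^j$ in the first place.

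The third step is to bring $L_j(\nu,\omega)$ into the stated form \eqref{eq:Gateaux-Derivative-Expression} by three elementary operations, each licensed by the integrability just established. (i) Fubini on the triangle $\{0\le t\le u\le T\}$ gives $\int_0^T Q^\omega_t Z_t\,dt=\int_0^T\omega_t\bigl(\int_t^T Z_u\,du\bigr)dt$, applied to $Z=A^k+\blambda_k^\T\bnubar$ and to $Z=\qj{\nu}$ (the latter producing the $-2\phi_k\qj{\nu}_u$ piece of the inner integral). (ii) Integration by parts, $\int_0^T Q^\omega_t\nu_t\,dt+\int_0^T\omega_t\qj{\nu}_t\,dt=Q^\omega_T\qj{\nu}_T=\bigl(\int_0^T\omega_t\,dt\bigr)\qj{\nu}_T$, collapses the two $\Psi_k$ terms into $-2\Psi_k\int_0^T\omega_t\,\qj{\nu}_T\,dt$. (iii) The tower property: since $\omega$ is $\F^j$-predictable and the filtration is increasing, $\omega_t$ is $\F_u^j$-measurable for $u\ge t$, so $\Ek[\omega_t Z_u]=\Ek[\omega_t\,\Ek[Z_u\mid\F_u^j]]$, which lets us replace $A_u^k+\blambda_k^\T\bnubar_u$ by its $\F_u^j$-conditional expectation inside the inner integral. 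Reassembling the pieces yields \eqref{eq:Gateaux-Derivative-Expression}. The only place needing genuine care is the integrability bookkeeping of step two --- in particular the $L^2(\Pk)$ control of $\sup_{\tT}|\qj{\nu}_t|$ and $\sup_{\tT}|Q^\omega_t|$ needed to invoke Cauchy--Schwarz and Fubini legitimately --- since once that is in hand the differentiability is automatic, the difference quotient being an honest polynomial in $\epsilon$.
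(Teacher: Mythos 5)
Your proposal is correct and follows essentially the same route as the paper's proof: expand the quadratic functional so the difference quotient is a polynomial in $\epsilon$, identify the linear term as the G\^ateaux derivative, then use Fubini, the integration-by-parts identity collapsing the $\Psi_k$ cross terms into $-2\Psi_k\,\qj{\nu}_T$, and the tower property with the $\F^j$-predictability of $\omega$ to reach \eqref{eq:Gateaux-Derivative-Expression}. The only difference is that you spell out the $L^2$ bounds on $\sup_{\tT}|\qj{\nu}_t|$ and $\sup_{\tT}|Q^\omega_t|$ justifying Fubini and Cauchy--Schwarz, which the paper merely asserts.
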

\begin{proof}
See \ref{sec:Proof-Differentiable-Lemma}.
\end{proof}

Therefore, since $\Hbar^{\bnubar}_j$ is concave, the supremum of $\Hbar^{\bnubar}_j$ is attained at a point $\nu\in\A^j$ if and only if the expression~\eqref{eq:Gateaux-Derivative-Expression} vanishes for all   $\omega\in\A^j$. Moreover, the strict concavity of $\Hbar^{\bnubar}_j$ guarantees that such a point is unique up to $\P\times\mu$ null sets.
Indeed, as the following theorem shows, the collection of points $\{\nuj\}_{j=1}^\infty$ that ensures~\eqref{eq:Gateaux-Derivative-Expression} vanishes for all $j\in\mfN$, and for all $\omega\in\A^j$, coincides with the solution of an infinite-dimensional system of FBSDE.
\begin{theorem} \label{thm:FBSDE-Optimality-Condition}
We have that
	\begin{equation}
		\nujst := \argmax_{\nu\in\A^j} \Hbar^{\bnubarst}_j(\nu)
	\end{equation}
for all $j\in\mfN$ if and only if for each agent-$j$ in sub-population $k$, $\nujst\in\HT$ and $\nujst$ is the unique strong solution to the FBSDE
\begin{equation} \label{eq:Optimal-FBSDE-System}
\left\{
\begin{split}
-d(2\,a_k\,\nujst_t)
&= \left(\, \Ek\!\left[\,
A_t^k + \blambda_k^\T\, \bnubarst_t \,\lvert\, \F_t^j
\,\right]
- 2\,\phi_k \,\qj{\nujst}_t \right) dt - d\mcM_t^j\,,
		\\
2\,a_k \, \nujst_T &= -2\,\Psi_k \,\qj{\nujst}_T\,,
\end{split}
\right.
\end{equation}
where $\mcM^j\in\HT$ is an $\F^j$-adapted $\Pk$-martingale and where
\begin{equation} \label{eq:Optimality-Consistency-Condition}
	\nubarkst_t = \displaystyle\lim_{N\rightarrow\infty} \tfrac{1}{N_k^{\N}} \sum_{j\in\K^{\N}_k} \nujst_t
	\;,
\end{equation}
for all $k\in\mfK$.
\end{theorem}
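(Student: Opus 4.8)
The plan is to build entirely on the two preceding lemmas. By Lemma~\ref{thm:Convex-Lemma} the functional $\Hbar^{\bnubarst}_j$ is strictly concave on $\A^j$, and by Lemma~\ref{thm:Differentiable-Lemma} it is everywhere G\^ateaux differentiable with the derivative given by~\eqref{eq:Gateaux-Derivative-Expression}; standard infinite-dimensional convex analysis then says a point $\nu\in\A^j$ solves $\nu=\argmax_{\omega\in\A^j}\Hbar^{\bnubarst}_j(\omega)$ if and only if $\langle\D\Hbar^{\bnubarst}_j(\nu),\omega\rangle=0$ for all $\omega\in\A^j$, and such a point is unique up to $\P\times\mu$-null sets. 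Since the consistency requirement~\eqref{eq:Optimality-Consistency-Condition} is literally~\eqref{eqn:MFG-consistency-requirement}, the whole theorem reduces to showing, for a fixed collection of mean-fields $\bnubarst$ and each $j$ in sub-population $k$, that this first-order condition is equivalent to the FBSDE~\eqref{eq:Optimal-FBSDE-System}.

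For the forward implication I would start from $\langle\D\Hbar^{\bnubarst}_j(\nu),\omega\rangle=0$, written via~\eqref{eq:Gateaux-Derivative-Expression} as $\Ek\!\left[\int_0^T \omega_t\, G_t^{\nu}\, dt\right]=0$ for every $\omega\in\A^j$, where
\[
	G_t^{\nu} := -2a_k\nu_t - 2\Psi_k \qj{\nu}_T + \int_t^T\!\Big(\Ek[A_u^k+\blambda_k^\T\bnubarst_u\mid\F_u^j] - 2\phi_k \qj{\nu}_u\Big)\,du\;.
\]
Because each admissible $\omega_t$ is $\F_t^j$-measurable, a (stochastic) Fubini and tower-property argument replaces $G_t^\nu$ by its predictable $\F^j$-projection, so $\Ek\!\left[\int_0^T \omega_t\, \Ek[G_t^{\nu}\mid\F_t^j]\, dt\right]=0$ for all $\omega\in\A^j$; testing against truncations of $t\mapsto\Ek[G_t^{\nu}\mid\F_t^j]$ — which lie in $\A^j$ since, by $\Pk\sim\P$, every bounded $\F^j$-predictable process is in $\HT=\bigcap_{k\in\mfK}\HTk$ — forces $\Ek[G_t^{\nu}\mid\F_t^j]=0$ for $\P\times\mu$-a.e.\ $(t,\omega)$. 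Since $\qj{\nu}$ and $t\mapsto\Ek[A_t^k+\blambda_k^\T\bnubarst_t\mid\F_t^j]$ are $\F^j$-adapted, this rearranges into
\[
	2a_k\nu_t = \Ek\!\left[\,-2\Psi_k \qj{\nu}_T + \int_t^T\!\Big(\Ek[A_u^k+\blambda_k^\T\bnubarst_u\mid\F_u^j] - 2\phi_k \qj{\nu}_u\Big)\,du \;\Big|\; \F_t^j\,\right]\;.
\]
Setting $\mcM^j_t := 2a_k\nu_t + \int_0^t\!\big(\Ek[A_u^k+\blambda_k^\T\bnubarst_u\mid\F_u^j] - 2\phi_k \qj{\nu}_u\big)\,du$, this identity says precisely that $\mcM^j$ is the closed $\F^j$-adapted $\Pk$-martingale with terminal value $-2\Psi_k \qj{\nu}_T + \int_0^T(\cdots)\,du$; differentiating recovers the forward equation of~\eqref{eq:Optimal-FBSDE-System}, and evaluating the last display at $t=T$ (where $\qj{\nu}_T$ is $\F_T^j$-measurable) gives the terminal condition $2a_k\nu_T=-2\Psi_k \qj{\nu}_T$. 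A bookkeeping step then checks $\mcM^j\in\HT$: using $A^k\in\HTk$, $M^k\in\LTk$, $\bnubarst\in\HT$, $\EP(\mfQ_0^j)^2<\mathfrak{C}$, $\nu\in\HT$, together with Cauchy--Schwarz and Jensen, the terminal value is in $L^2(\Pk)$ for every $k\in\mfK$, so $\mcM^j$ is square integrable under each $\Pk$.

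For the converse, suppose $\nujst\in\HT$ solves~\eqref{eq:Optimal-FBSDE-System} with~\eqref{eq:Optimality-Consistency-Condition}. Integrating the forward equation over $[t,T]$, applying $\Ek[\,\cdot\mid\F_t^j]$, and using the $\Pk$-martingale property of $\mcM^j$ and the terminal condition, reverses the computation above to give $\Ek[G_t^{\nujst}\mid\F_t^j]=0$ $\P\times\mu$-a.e., hence $\langle\D\Hbar^{\bnubarst}_j(\nujst),\omega\rangle=\Ek[\int_0^T\omega_t\,\Ek[G_t^{\nujst}\mid\F_t^j]\,dt]=0$ for all $\omega\in\A^j$; by the concavity in Lemma~\ref{thm:Convex-Lemma}, $\nujst$ is the maximizer of $\Hbar^{\bnubarst}_j$. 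Uniqueness of the strong solution of~\eqref{eq:Optimal-FBSDE-System} is then inherited from strict concavity: any two solutions would both maximize the same strictly concave functional, hence coincide up to $\P\times\mu$-null sets.

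I expect the main obstacle to be the fundamental-lemma step and the measurability bookkeeping around it: justifying the interchange of the $dt$-integral with the $\F^j_t$-conditional expectation (a Fubini argument that uses the $\F^j_t$-measurability of $\omega_t$ and passes to the predictable projection of $G^\nu$), verifying that the truncated conditional expectations used as test directions are genuinely admissible — $\F^j$-predictable and in $\HT=\bigcap_{k\in\mfK}\HTk$, which relies on $\Pk\sim\P$ — and keeping the $L^2$ estimates uniform over $k\in\mfK$ so that $\mcM^j\in\HT$ and the control reconstructed from it stays in $\HT$. Once these points are secured, the algebraic passage between the first-order condition and the FBSDE, and the converse direction, are routine.
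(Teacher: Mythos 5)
Your proposal follows essentially the same route as the paper: Lemmas~\ref{thm:Convex-Lemma} and~\ref{thm:Differentiable-Lemma} plus standard convex-analysis results reduce the theorem to the equivalence of the vanishing G\^ateaux derivative with the FBSDE, necessity is obtained by a fundamental-lemma argument, and sufficiency and uniqueness follow from the implicit conditional-expectation representation, the tower property, and strict concavity. The only difference is cosmetic: where you test against bounded truncations of $t\mapsto\Ek[G_t^{\nu}\mid\F_t^j]$, the paper tests directly against $\tomega_t=\Ek[G_t^{\nu}\mid\F_t^j]$ itself, verifying $\tomega\in\HT$ by an explicit Jensen/triangle-inequality bound; both are valid and lead to the same conclusion.
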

\begin{proof}See \ref{sec:Proof-FBSDE-Optimality-Condition}.
\end{proof}

Theorem~\ref{thm:FBSDE-Optimality-Condition} reduces the convex optimization problem \eqref{eq:Objective-Function-Limit}, \eqref{eq:MFG-Optimization-Problem}, and \eqref{eqn:MFG-consistency-requirement} into an infinite system of FBSDEs. The forward component comes from the latent drift processes $A^k$ and inventory processes $\qj{\nujst}$, while the backwards component comes from the trading rates $\nujst$. The coupling in this system appears through the mean-field processes $\bnubar$, which averages out all of the actions of other agents within the game. A few difficulties are immediately apparent in the FBSDE~\eqref{eq:Optimal-FBSDE-System}. Firstly, each individual FBSDE, corresponding to a particular agent's trading rate, is written in terms of a martingale that is specific to the agent's sub-population, and the measure under which the process is a martingale corresponds to the agent's belief about the drift process $A^k$. Secondly, the conditional expected value $\Ek\left[ \,A_t^k + \blambda_k^\T \,\bnubarst_t \,\lvert \,\F_t^j\right]$ appears in the driver of the FBSDE. This is a projection of the mean-fields onto the agent's filtration, and appears because the agent cannot directly observe the strategies of other individuals. This projection of the mean-fields adds another layer of difficulty.

Recall that a solution to the FBSDE~\eqref{eq:Optimal-FBSDE-System} for agent-$j$ consists of a pair of processes $(\nujst,\mcM^j)$ that satisfies the SDE and terminal condition in~\eqref{eq:Optimal-FBSDE-System} $\mathbb{P}\times\mu$ almost everywhere. For the requirements of Theorem~\ref{thm:FBSDE-Optimality-Condition} to be met, a solution must simultaneously meet the consistency condition~\eqref{eq:Optimality-Consistency-Condition} $\mathbb{P}\times\mu$ almost everywhere. If we can find a set of solutions, we can guarantee it is unique up to $\P\times \mu$ null sets due to the strict convexity of the objective functional and the `if and only if' nature of the statement.

\subsection{Solving the Optimality FBSDE}
\label{sec:Solving-The-Optimality-FBSDE}

In this section, we solve the FBSDE~\eqref{eq:Optimal-FBSDE-System}, and hence provide an exact form for the Nash-equilibrium for the infinite population mean-field game. The key to obtaining a solution lies in first postulating a structure for the solution of~\eqref{eq:Optimal-FBSDE-System}. This form then suggests a vector valued FBSDE that the mean-field processes $\nubark$ must satisfy, which are independent of any individual agent's strategy. The resulting non-standard FBSDE system, is defined across the set of $K$ measures $\{\Pk\}_{k\in\mfK}$ and introduces an obstacle in solving it directly. The key step in obtaining a solution lies in representing the FBSDE in terms of a single measure, and solving it there.

Due to the linear form of the FBSDE~\eqref{eq:Optimal-FBSDE-System}, it is natural to assume that the solution is affine. As such, for an agent-$j$ within a sub-population $k$, we seek for optimal controls of the form
\begin{equation} \label{eq:Trader-j-Ansatz-Structure}
	2 \,a_k \,\nujst_t = 2 \,a_k \,\nubarkst_t + h_{2,t}^k\, \left( \qj{\nujst}_t - \qbark{\nubarkst}_t \right),
\end{equation}
where $h^k_{2,t}:[0,T]\rightarrow \R$ is an unknown deterministic, continuously differentiable, function of time, and where we define the mean-field inventory process $\qbark{\nubarkst} = (\qbark{\nubarkst}_t)_{\tT}$ for sub-population $k$ as
\[
\qbark{\nubarkst}_t = \mbar_k + \int_0^t \nubarkst_u \, du\,.
\]
Plugging this ansatz into~\eqref{eq:Optimal-FBSDE-System} and simplifying, we find that
\begin{equation} \label{eq:Ansatz-Optimal-FBSDE-Plugged-In}
\begin{split}
	0 = \hphantom{+}&\,
	\Big\{ \partial_t h_{2,t}^k + \tfrac{1}{2a_k} (h_{2,t}^k)^2 - 2\,\phi_k \Big\}
	\left( \qj{\nujst} - \qbark{\nubark}_t \right) \, dt
	\\ +
	&\left\{
	d(2\, a_k \,\nubarkst_t) +
	\left( \Ek[\,A_t^k + \blambda_k^\T \,\bnubarst_t \,\lvert \F_t^j] - 2\,\phi_k \,\qbark{\nubarkst}_t \right) \,dt - d\mcM_t^j
	\right\}
	\;,
\end{split}
\end{equation} \label{eq:Ansatz-Optimal-Boundary-Plugged-In}
along with the boundary condition that
\begin{equation}
	0 =  \{ h_{2,T}^k + 2\,\Psi_k \} ( \qj{\nujst}_T - \qbark{\nubarkst}_T )
	+ \{ 2 a_k \nubarkst_T + 2\Psi_k \qbark{\nubarkst}_T \}
	\;,
\end{equation}
which must both hold $\Pk \times \mu$ almost everywhere. Therefore, to solve the FBSDE~\eqref{eq:Optimal-FBSDE-System}, it is sufficient for us to make the terms in the curly brackets of equation~\eqref{eq:Ansatz-Optimal-FBSDE-Plugged-In} and in the boundary condition~\eqref{eq:Ansatz-Optimal-Boundary-Plugged-In} vanish independently of one another. Collecting these equations, we obtain a first-order Riccati-type ODE for $h_{2,t}^k$,
\begin{equation} \label{eq:h2k-ODE-Equation}
\left\{
\begin{split}
		 \partial_t h_{2,t}^k +  \tfrac{1}{2a_k} (h_{2,t}^k)^2 - 2\phi_k =&\,0\,, \\
		h_{2,T}^k =& -2\Psi_k\,,
\end{split}
\right.
\end{equation}
as well as a linear FBSDE for the mean-field process $\nubark_t$
\begin{equation} \label{eq:Mean-Field-FBSDE-1}
\left\{
\begin{split}
-d(2\, a_k \,\nubarkst_t) =&
\left( \Ek[\,A_t^k + \blambda_k^\T \,\bnubarst_t\, \lvert \,\F_t^j] - 2\,\phi_k \,\qbark{\nubarkst}_t \right) \,dt - d\mcM_t^j\,,
		\\
2\,a_k \,\nubarkst_T =& -
2\,\Psi_k \,\qbark{\nubarkst}_T\,,
\end{split}
\right.
\end{equation}
where $\mcM^j = \left( \mcM^j_t \right)_{t\in [0,T]}\in\HT$ is an $\F^j$-adapted $\Pk$-martingale.

Let us point out here that the ansatz for $\nujst$ found in equation~\eqref{eq:Trader-j-Ansatz-Structure} satisfies the consistency condition as long as there exist solutions to the equations~\eqref{eq:h2k-ODE-Equation} and \eqref{eq:Mean-Field-FBSDE-1}. This can be most easily seen by taking the average of \eqref{eq:Trader-j-Ansatz-Structure} over $j\in\K_k^{N}$ and taking the limit as $N\rightarrow\infty$.

{The FBSDE~\eqref{eq:Mean-Field-FBSDE-1} suggests that the solution $\nubarkst$ should be an $\F^j$-adapted process. Equation~\eqref{eq:Mean-Field-FBSDE-1}, however, holds for any agent-$j'$ for which $j'\in\K_k$, therefore, $\nubarkst$ must be $\F^{j'}$-adapted for any $j'\in\K_k$. Consequently, each $\nubarkst$ must be adapted to the filtration generated by the intersection $\bigcap_{j'\in\K_k} \F_t^{j'}$. Computing this intersection, we find that $\bigcap_{j'\in\K_k} \F_t^{j'} = \bigcap_{j'\in\K_k} \sigma\left( (S_u,\bnubarst_u,q_u^{j,\nujst})_{u\in[0,t]}\right) \subseteq \sigma\left( (S_u,\bnubarst_u)_{u\in[0,t]}\right)$, which does not depend on the sub-population $k$. This is easy to see since (\textit{i}) each $q_0^j$ is not measurable with respect to $\sigma(q_0^i)$ for any $i\neq j$ and (\textit{ii}) for any $j\in\mfN$, $\nu^j$ is not measurable with respect to $\sigma(\bnubarst)$ by definition from~\eqref{eq:Optimality-Consistency-Condition}. Thus, for each $k\in\K$, we have that $\nubarkst$ is an $\F$-adapted process, where we define $\F_t := \bigwedge_{j\in\K_k} \F^j_t = \sigma\left( (S_u,\bnubarst_u)_{u\in[0,t]}\right)$.} As a consequence, we find that $\nubarkst$ should satisfy the FBSDE
\begin{equation}\label{eq:Mean-Field-FBSDE-2}
\left\{
\begin{split}
-d(2\, a_k \,\nubarkst_t) =&
\left( \Ek[\,A_t^k + \blambda_k^\T \,\bnubarst_t\, \lvert \,\F_t] - 2\,\phi_k \,\qbark{\nubarkst}_t \right) \,dt - d\mcMbar_t^k\,,
		\\
2\,a_k \,\nubarkst_T =& -
2\,\Psi_k \,\qbark{\nubarkst}_T\,,
\end{split}
\right.
\end{equation}
where $\mcMbar^k=(\mcMbar_t^k)_{\tT}$ is an $\F$-adapted, $\Pk$-martingale, and the expectation appearing in the drift is conditional on $\F_t$ not $\F^j_t$.

By stacking the FBSDEs~\eqref{eq:Mean-Field-FBSDE-2} over all values of $k\in\mfK$, we may obtain a vector-valued FBSDE for the process $\bnubarst$. To this end, define the column vector of filtered drift processes $\bAhat=(\bAhat_t)_{\tT}$ where $\bAhat_t = \left( \Ek[A_t^k \lvert \F_t] \right)_{k\in\mfK}$. Next, as $\bnubarst_t$ is $\F_t$-measurable, stacking the FBSDEs~\eqref{eq:Mean-Field-FBSDE-2} over all values of $k\in\mfK$, we have
\begin{equation} \label{eq:Mean-Field-FBSDE-Stacked}
\left\{
\begin{split}
-d(2 \,\ba \, \bnubarst_t) =&
		\left( \bAhat_t + \bLambda \, \bnubarst_t - 2\,\bphi \, \bqbar^{\bnubarst}_t \right) \,dt - d\bmcMbar_t\,,
		\\
2 \, \ba \, \bnubarst_T =& - 2 \, \bPsi \, \bqbar^{\bnubarst}_T\,,
\end{split}
\right.
\end{equation}
where $\ba,\bphi,\bPsi$ and $\bLambda$ are all real-valued $K \times K$ matrices defined as
\begin{align*}
	\ba &= \text{diag}\left( \{a_k\}_{k\in\mfK}\right) \,, \hspace{3em}
	\bphi = \text{diag}\left( \{\phi_k\}_{k\in\mfK}\right)\,,  \\ \vspace{5cm}
	\bPsi &= \text{diag}\left( \{\Psi_k\}_{k\in\mfK}\right) \,, \hspace{3em}
	\bLambda =
	\begin{pmatrix}
		\lambda_{1,1} \, p_1 & \dots & \lambda_{1,K} \, p_K \\
		\vdots &  & \vdots \\
		\lambda_{K,1} \, p_1 & \dots & \, \lambda_{K,K} p_K
	\end{pmatrix}
	\;,
\end{align*}
where $\bqbar^{\bnubarst}_t = \boldsymbol{\invmean}_0 + \int_0^t \bnubarst_u \, du$, and  $\bmcMbar = ({\mcMbar}^k)_{k\in\mfK}$ is a column vector of the $\F$-adapted processes, where as a reminder, $\mcMbar^k_t \in \HT$, $\forall k\in\mfK$ and the $k$-th element $\mcMbar^k$ is a $\Pk$-martingale.

From the linear structure of the FBSDE~\eqref{eq:Mean-Field-FBSDE-Stacked}, we can further simplify the problem by seeking for affine solutions of the form
\begin{equation} \label{eq:Ansatz-Mean-Field}
	2 \ba \bnubarst_t =
	\bg_{1,t} +
	\bg_{2,t} \, \bqbar_t^{\bnubarst}\,,
\end{equation}
where $\bg_{2,t}:[0,T] \rightarrow \R^{K \times K}$ is a deterministic and continuously differentiable function of time, and $\bg_{1} = (\bg_{1,t})_{\tT}\in\HT$ is an $\R^k$-valued stochastic process. Plugging the ansatz into~\eqref{eq:Mean-Field-FBSDE-Stacked}, and following through with the same logical steps as before, we find that the ansatz holds true so long as $\bg_{2}$ is the solution to the Ricatti-type matrix-ODE
\begin{equation} \label{eq:g2-ODE}
	\begin{cases}
		&\partial_t \bg_{2,t} = \left(\bLambda + \bg_{2,t} \right)
		\left( 2 \ba \right)^{-1} \bg_{2,t} - 2 \bphi\,,
		\\
		&\hspace{0.6em}\bg_{2,T} = - 2\bPsi\,,
	\end{cases}
\end{equation}
and when $\bg_{1,t}$ solves the BSDE,
\begin{equation} \label{eq:g1-BSDE}
	\begin{cases}
		&-d\bg_{1,t} = \left( \Ahat_t +
		\left( \bLambda + \bg_{2,t} \right) \left( 2 \ba \right)^{-1} \bg_{1,t}
		\right) \, dt
		- d\bmcMbar_t\,,
		\\
		&\hspace{1.1em} \bg_{1,T} = 0\,,
	\end{cases}
\end{equation}
where $\bmcMbar$ is the same vector of processes present in FBSDE~\eqref{eq:Mean-Field-FBSDE-Stacked}.

At this point, we have succeeded in reducing the search for a Nash-equilibrium to solving (i) two deterministic ordinary differential equations (ODEs)~\eqref{eq:h2k-ODE-Equation} and \eqref{eq:g2-ODE}, and (ii) a non-standard linear BSDE~\eqref{eq:g1-BSDE}. The ODEs are straightforward to solve, however, BSDE poses some further challenges.

{One of the primary obstacles in solving the BSDE~\eqref{eq:g1-BSDE} is that each component of $\bg_{1}$ incorporates a process that is a martingale under a different probability measure. Recall that the components of $\bmcMbar = \{\mcMbar^k\}_{k\in\mfK}$ are required to be martingales with respect to the $k$ different measures $\{\Pk\}_{k\in\mfK}$. Each measure is what agents within sub-population $k$ use to compute expectations, and agents within that sub-population assume the asset has drift $A^k$ in excess of the order-flow from all agents. The key step in solving the BSDE is to re-cast it in terms of martingales under a single probability measure. This introduces non-trivial drfit adjustments, however, we find that it is indeed possible to solve the modified BSDE explicitly.}

Consider the $k$th dimension of the BSDE~\eqref{eq:g1-BSDE}
\begin{equation}
\label{eqn:BSDEgk}
	-d g_{1,t}^k = \left( \Ahat_t^k + \bG_t^k \,\bg_{1,t}^k \right) \, dt
	- d{\mcMbar}^k_t
	\;,
\end{equation}
where ${\mcMbar}^k$ is a $\Pk$-martingale, and where $\bG_t^k$ is defined as the $k$-th row of the deterministic matrix-valued function $\bG_t = \left(\bLambda + \bg_{2,t}\right) \left( 2 \ba \right)^{-1}$. The solution of BSDE~\eqref{eqn:BSDEgk} can be expressed implicitly as follows
\begin{equation} \label{eq:g1k-Pk-Implicit-Solution}
	g_{1,t}^k =
	\E^\Pk \left[\left.
\,
	\int_t^T
	\left\{
	\Ahat_u^k + \bG_u^k \,\bg_{1,u}
	\right\}
	\, du
	\, \right\lvert \,\F_t\,
	\right].
\end{equation}

Next, we aim to represent~\eqref{eq:g1k-Pk-Implicit-Solution} in terms of expectation under another measure $\Q$ such that $\Q\sim\P^k$ for all $k$. By the assumption that $\P^k \sim \P^{k^\prime}$ for all $k,k^{\prime}\in\mfK$, there always exists such a measure. For example, $\Q=\Pk$ for some $k$. Given this measure, define the $\F$-adapted Radon-Nikodym derivative processes
\begin{equation}
	Z_t^{\Q,k} = \left.\frac{d\Pk}{d\Q}\right\lvert_{\F_t}:=\E\left[\left. \frac{d\Pk}{d\Q}\,\right|\,\F_t\right], \quad \forall k \in \mfK\,.
\end{equation}
Using this process, we find that we may write equation~\eqref{eq:g1k-Pk-Implicit-Solution} as an expected value under the $\Q$ measure as,
\begin{equation} \label{eq:g1k-Q-Implicit-Solution}
	Z_t^{\Q,k} g_{1,t}^k =
	\E^\Q \left[\left.
	\int_t^T
	\left\{
	 Z_u^{\Q,k} \,\Ahat_u^k +  Z_u^{\Q,k} \, \bG_u^k\, \bg_{1,u}
	\right\}
	\, du
	\,\right\lvert \,\F_t
	\right]
	\;.
\end{equation}

Defining the diagonal $\R^{K \times K}$ valued process $\bZ^{\Q}=(\bZ^{\Q}_t)_{\tT}$, where $\bZ^{\Q}_t = \text{diag}(Z_t^{\Q,k})_{k\in\mfK}$, allows us to write a linear BSDE for $\bZ_t^{\Q} \bg_{1,t} = \left(Z_t^{\Q,k} g_{1,t}^k\right)_{k\in\mfK}$ using a single measure $\Q$. More specifically, from~\eqref{eq:g1k-Q-Implicit-Solution}, we have that
\begin{equation} \label{eq:g1-BSDE-Q-measure}
	-d\left( \bZ^{\Q}_t \bg_{1,t} \right)
	= \left( \bZ^{\Q}_t \bAhat_t +
	\bZ^{\Q}_t \bG_t \, \bg_{1,t}
	\right) \,  dt
	- d\bmcMt_t
	\;,
\end{equation}
where $\bmcMt = (\bmcMt_t)_{\tT}$ is an $\R^K$-valued $\Q$-martingale.
The BSDE~\eqref{eq:g1-BSDE-Q-measure} is linear and its solution can be expressed in closed form. The following theorem provides a {representation for the solution of $\bg_{1}$ as well as $\{h_{2}^k\}_{k\in\mfK}$, and $\bg_{2}$.

\begin{theorem}[Solutions to the Mean-Field BSDEs]
\label{thm:Solution-g1-g2-h2}
~\\
\begin{enumerate}[label=\emph{\Roman*})]

\item Let $\Q$ be any probability measure such that $\Q \sim \P$. Then the BSDE~\eqref{eq:g1-BSDE} admits a closed form solution,
\begin{equation} \label{eq:proposition-solution-g1}
	\bg_{1,t} = \E^{\Q}\left[\int_t^T
			\left. (\bmcE_t^{\Q})^{-1} \, \bmcE_u^{\Q}
			 \,\bAhat_u
			\, du \, \right \lvert \F_t \, \right]
	\;,
\end{equation}
where $\bmcE_t$ is the solution to the forward matrix-valued SDE
\begin{equation} \label{eq:proposition-mcE}
	d\bmcE_t^{\Q} = \bmcE_t^{\Q} \left( \,
	\bG_t \, dt + (\bZ_t^{\Q})^{-1} d\bZ_t^{\Q} \,
	\right), \qquad \bmcE_0^{\Q} = \bZ_0^{\Q},
\end{equation}
where the deterministic matrix valued function $\bG_t := \left(\bLambda + \bg_{2,t}\right) \left( 2 \ba \right)^{-1}$ and
\begin{equation} \label{eq:Def-ZtQ}
	\bZ_t^{\Q} = \text{diag}\left(
\left.
\frac{d\Pk}{d\Q} \right \lvert_{\F_t}
\right)_{k\in\mfK}.
\end{equation}

\item There exists a unique solution $\bg_{2,t}$ to the matrix valued ODE~\eqref{eq:g2-ODE} that is bounded over the interval $[0,T]$.

Moreover, let $\bm{Y}_t:[0,T]\rightarrow\R^{2K\times K}$ be defined as
	\begin{equation}
		\bm{Y}_t =
		e^{(T-t)\bm{B}}
		\left( \bm{I}^{(K \times K)} ,\; -2\,\bPsi \right)^\T,
	\end{equation}
	where $\bm{B} \in \R^{2K \times 2K}$ is the block matrix
	\begin{equation}
		\bm{B} =
		\begin{pmatrix}
			\bm{0}^{(K\times K)} & -(2\ba)^{-1} \\
			-2\bphi & \bLambda (2\ba)^{-1}
		\end{pmatrix},
	\end{equation}
then, using the matrix partition $\bm{Y}_t = \left( \bm{Y}_{1,t} , \bm{Y}_{2,t} \right)^\T$, where $\bm{Y}_{1,t},\bm{Y}_{2,t} \in \R^{K\times K}$, the function $\bm{g}_{2,t}$ may be expressed as
\begin{equation}\label{eqn:proposition-solution-g2}
	\bg_{2,t} = \bm{Y}_{2,t} \, \bm{Y}_{1,t}^{-1}
	\;.
\end{equation}

\item The ODE~\eqref{eq:h2k-ODE-Equation} admits the unique solution
\begin{equation}
\label{eqn:proposition-solution-h2k}
		h_{2,t}^k = -2 \xi_k \left(
		\frac{
		\Psi_k \cosh\left( -\gamma_k (T-t)\ \right)
		- \xi_k \sinh\left( -\gamma_k (T-t)\ \right) }{
		 \xi_k \cosh\left( -\gamma_k (T-t)\ \right)
		- \Psi_k \sinh\left( -\gamma_k (T-t)\ \right)
		}
		\right), \qquad \forall k\in\mfK,
	\end{equation}
	where the constants $\gamma_k = \sqrt{\phi_k/a_k}$ and $\xi_k=\sqrt{\phi_k a_k}$. Moreover, $h_{2,t}^k \leq 0$ for all $\tT$.

\end{enumerate}

\end{theorem}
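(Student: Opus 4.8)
The three parts are essentially decoupled, and I would prove them in the order III, II, I, since Part~I relies on the boundedness of $\bg_2$ furnished by Part~II for its integrability estimates.

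\textbf{Part III.} Equation~\eqref{eq:h2k-ODE-Equation} is a scalar Riccati ODE, which I would linearise via $h_{2,t}^k=2a_k\,\dot u_t/u_t$: this converts~\eqref{eq:h2k-ODE-Equation} into the constant-coefficient equation $\ddot u_t=\gamma_k^2 u_t$ with $\gamma_k=\sqrt{\phi_k/a_k}$, whose general solution is a linear combination of $\cosh(\gamma_k(T-t))$ and $\sinh(\gamma_k(T-t))$; imposing $h_{2,T}^k=-2\Psi_k$ fixes the coefficient ratio (equal to $\Psi_k/\xi_k$, with $\xi_k=a_k\gamma_k=\sqrt{\phi_k a_k}$) and, after simplification, yields~\eqref{eqn:proposition-solution-h2k}. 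One may instead simply verify~\eqref{eqn:proposition-solution-h2k} by direct differentiation, handling $\phi_k=0$ separately. Uniqueness follows from Cauchy--Lipschitz together with the fact that~\eqref{eqn:proposition-solution-h2k} is bounded on $[0,T]$, so the local solution extends without blow-up. For the sign claim: when $T-t\ge 0$ one has $\cosh(\gamma_k(T-t))\ge 1$ and $\sinh(\gamma_k(T-t))\ge 0$, and since $\Psi_k,\xi_k>0$ the numerator and denominator of the bracketed term in~\eqref{eqn:proposition-solution-h2k} are strictly positive, so $h_{2,t}^k<0$. Alternatively one can argue dynamically: if $h_2^k$ vanished at some $t_0<T$ then $\partial_t h_2^k(t_0)=2\phi_k\ge 0$, which together with $h_{2,T}^k=-2\Psi_k<0$ forces a further zero in $(t_0,T)$ at which $\partial_t h_2^k\le 0$ would be required, a contradiction.

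\textbf{Part II.} For the matrix Riccati ODE~\eqref{eq:g2-ODE} I would use the classical linearising substitution $\bg_{2,t}=\bm{Y}_{2,t}\bm{Y}_{1,t}^{-1}$: plugging it in and matching the terms proportional to $\bm{Y}_{1,t}^{-1}$ and to $\bg_{2,t}\bm{Y}_{1,t}^{-1}$ shows that~\eqref{eq:g2-ODE} holds as soon as $(\bm{Y}_{1,t},\bm{Y}_{2,t})^\T$ solves the linear system $\partial_t(\bm{Y}_{1,t},\bm{Y}_{2,t})^\T=\bm{B}\,(\bm{Y}_{1,t},\bm{Y}_{2,t})^\T$ with terminal data $\bm{Y}_{1,T}=\bm{I}$, $\bm{Y}_{2,T}=-2\bPsi$, i.e. $(\bm{Y}_{1,t},\bm{Y}_{2,t})^\T=e^{(T-t)\bm{B}}(\bm{I},-2\bPsi)^\T$, exactly the claimed $\bm{Y}_t$. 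Since $\bm{Y}_t$ is continuous on the compact interval $[0,T]$ it is bounded there, so the entire content of the statement reduces to showing that $\bm{Y}_{1,t}$ stays invertible on $[0,T]$ --- equivalently, that the solution of~\eqref{eq:g2-ODE}, continued backward from $\bg_{2,T}=-2\bPsi$, does not blow up before $t=0$. \emph{This is the main obstacle.} I would handle it by a continuation argument: the right-hand side of~\eqref{eq:g2-ODE} is locally Lipschitz, so a unique maximal solution exists, and it suffices to produce an a priori bound on $\lVert\bg_{2,t}\rVert$ that is uniform over the existence interval. Such a bound can be obtained by exploiting that $\ba,\bphi,\bPsi$ are diagonal with strictly positive entries --- so the diagonal part of~\eqref{eq:g2-ODE} is controlled by the scalar Riccatis $h_2^k$ of Part~III, which are globally bounded --- combined with a Gr\"onwall estimate absorbing the coupling matrix $\bLambda$, which enters only linearly over the finite horizon; alternatively one may deduce boundedness from the finiteness of the value of the associated linear--quadratic representative-agent problem, whose strict concavity is Lemma~\ref{thm:Convex-Lemma}. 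Uniqueness of the bounded solution is then Cauchy--Lipschitz.

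\textbf{Part I.} I would take as given the single-measure linear BSDE~\eqref{eq:g1-BSDE-Q-measure} for $\bm{v}_t:=\bZ_t^{\Q}\bg_{1,t}$, which is derived in the text, together with the forward matrix SDE~\eqref{eq:proposition-mcE} for $\bmcE_t^{\Q}$. The key step is to show that $\bm{\Theta}_t:=\bmcE_t^{\Q}(\bZ_t^{\Q})^{-1}$ is a \emph{finite-variation} process solving $d\bm{\Theta}_t=\bm{\Theta}_t\,\widehat{\bG}_t\,dt$ with $\bm{\Theta}_0=\bm{I}$, where $\widehat{\bG}_t:=\bZ_t^{\Q}\bG_t(\bZ_t^{\Q})^{-1}$: an Itô/product-rule computation shows that the martingale and bracket contributions to $d\bm{\Theta}_t$ cancel exactly, and this cancellation is precisely the reason the $(\bZ_t^{\Q})^{-1}d\bZ_t^{\Q}$ term is present in~\eqref{eq:proposition-mcE}. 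In particular $\bm{\Theta}_t$ is invertible (it is the fundamental matrix of a pathwise linear ODE), hence so is $\bmcE_t^{\Q}=\bm{\Theta}_t\bZ_t^{\Q}$, since $\bZ_t^{\Q}$ is diagonal with a.s. strictly positive entries. Writing $\bmcE_t^{\Q}\bg_{1,t}=\bm{\Theta}_t\bm{v}_t$ and applying Itô with~\eqref{eq:g1-BSDE-Q-measure}, the $\widehat{\bG}$-drifts cancel and one obtains $d(\bmcE_t^{\Q}\bg_{1,t})=-\bmcE_t^{\Q}\bAhat_t\,dt+\bm{\Theta}_t\,d\bmcMt_t$, so $\bmcE_t^{\Q}\bg_{1,t}+\int_0^t\bmcE_u^{\Q}\bAhat_u\,du$ is a $\Q$-local martingale. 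Upgrading to a true martingale --- using $A^k\in\HTk$ and conditional Jensen to control $\bAhat$, the boundedness of $\bg_2$ (hence of $\bG$ and of $\bmcE^{\Q}$) from Part~II, and integrability of the densities $d\Pk/d\Q$ --- lets one take $\E^{\Q}[\,\cdot\mid\F_t]$; using $\bg_{1,T}=0$ and left-multiplying by the $\F_t$-measurable $(\bmcE_t^{\Q})^{-1}$ gives~\eqref{eq:proposition-solution-g1}. Uniqueness is inherited from the chain of equivalences ending at Theorem~\ref{thm:FBSDE-Optimality-Condition}, which forces the Nash equilibrium, and hence the solution of~\eqref{eq:g1-BSDE}, to be unique up to $\P\times\mu$-null sets (or from standard linear-BSDE theory). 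The only genuinely delicate point in the whole argument remains the non-blow-up / invertibility claim of Part~II: the asymmetry of $\bLambda$ precludes a short monotone-energy argument and makes the a-priori-bound-plus-continuation route the natural one.
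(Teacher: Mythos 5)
Your Parts I and III are sound and essentially follow the paper's route. For Part III the paper simply asks the reader to verify \eqref{eqn:proposition-solution-h2k} and gives the same hyperbolic-function sign argument you give. For Part I the paper works directly with the multi-measure BSDE~\eqref{eq:g1-BSDE}, expands $d(\bmcE^{\Q}_t\bg_{1,t})$ and invokes the Girsanov--Meyer theorem to see that the martingale and bracket terms recombine into a $\Q$-martingale increment, whereas you start from the already-reduced single-measure BSDE~\eqref{eq:g1-BSDE-Q-measure} and factor $\bmcE^{\Q}_t=\bm{\Theta}_t\bZ^{\Q}_t$; this is the same computation organized differently, and your attention to the invertibility of $\bmcE^{\Q}$ and the local-to-true martingale upgrade is, if anything, more explicit than the paper's.

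The genuine gap is in Part II, and it sits exactly at the point you flag as ``the main obstacle.'' Your reduction is correct: once one knows that a bounded solution of \eqref{eq:g2-ODE} exists on all of $[0,T]$ (equivalently, that $\bm{Y}_{1,t}$ stays invertible), the representation \eqref{eqn:proposition-solution-g2} and uniqueness follow from the standard linearization. But the a priori bound you sketch does not establish this. A Gr\"onwall estimate can only absorb terms that are (at most) linear in $\lVert\bg_{2,t}\rVert$; the dangerous term in \eqref{eq:g2-ODE} is the genuinely quadratic one, $\bg_{2,t}(2\ba)^{-1}\bg_{2,t}$, which is precisely what can produce finite-time blow-up and which Gr\"onwall cannot control. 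The proposed comparison with the scalar Riccati equations $h_2^k$ also fails: the diagonal entries of the matrix Riccati are not autonomous, since $\bigl(\bg_{2}(2\ba)^{-1}\bg_{2}\bigr)_{kk}=\sum_{k'}(2a_{k'})^{-1}(\bg_2)_{kk'}(\bg_2)_{k'k}$ involves all off-diagonal entries (and the $\bLambda$ term couples rows as well), so no entry-wise scalar comparison with \eqref{eq:h2k-ODE-Equation} is available. The alternative appeal to ``finiteness of the value of the associated LQ problem'' via Lemma~\ref{thm:Convex-Lemma} is also not enough as stated: that lemma gives strict concavity of a single agent's objective for a \emph{given} mean field, while $\bg_2$ solves the aggregated fixed-point (equilibrium) system, and for non-symmetric matrix Riccati equations of this type global existence can genuinely fail without additional structure. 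The paper closes this gap by time-reversing \eqref{eq:g2-ODE} and invoking Theorem~2.3 of \cite{freiling2000non} for non-symmetric Riccati ODEs, verifying its hypotheses explicitly with the choices $D=\bm{I}^{(K\times K)}$ and $C=5\bPsi$ (using that $\ba,\bphi,\bPsi$ are diagonal and positive, and checking that the resulting block matrix $L$ satisfies $L+L^{\T}<0$ through its eigenvalues), and then obtains uniqueness and the formula $\bg_{2,t}=\bm{Y}_{2,t}\bm{Y}_{1,t}^{-1}$ from the representation theorem in \cite{freiling2002survey}. Some argument of this kind --- exploiting the specific sign structure of $\bLambda$, $\ba$, $\bphi$, $\bPsi$, not just local Lipschitzness plus Gr\"onwall --- is indispensable, so as written your Part II is incomplete.
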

\begin{proof}
See \ref{sec:Proof-Solution-g1-g2-h2}.
\end{proof}

This theorem shows that $\bg_1$ may be expressed in terms of any measure $\Q\sim\P$, which includes any of the $\{\Pk\}_{k\in\mfK}$. The representations for $\bg_{1}$, $\bg_{2}$ and $h_{2}^k$ in \eqref{eq:proposition-solution-g1}, \eqref{eqn:proposition-solution-g2} and \eqref{eqn:proposition-solution-h2k}, respectively, together with the form of $\nujst$ in \eqref{eq:Trader-j-Ansatz-Structure}, provides us with a candidate for the optimal control in the population limit. It only remains to ensure that this optimal control is indeed admissible, i.e., $\nujst\in\A^j$. The following theorem provides sufficient conditions  for this to hold.

\begin{theorem} \label{thm:Final-Solution-Statement}
	Let us assume that $\bg_1\in\HT$.
	Then the optimality equation~\eqref{eq:Optimal-FBSDE-System} admits the solution
	\begin{equation} \label{eq:Solution-Structure-Statement}
\nujst_t = \nubarkst_t
+ \tfrac{1}{2 a_k} \, h_{2,t}^k
\,( \qj{\nujst}_t - \qbark{\nubarkst}_t )
			 \;,
	\end{equation}
and the mean-field trading rate process $\bnubarst = \left( \nubarkst \right)_{k\in\mfK}$ may be written
	\begin{equation}
\label{eqn:meanfield-opt-trading-rate}
		\bnubarst_t = \bg_{1,t} + \bg_{2,t} \, \bqbar_t^{\bnubarst},
	\end{equation}
	where $\bg_{1}$, $\bg_{2}$ and $h_{2}^k$ are the functions defined in Theorem~\ref{thm:Solution-g1-g2-h2}, and the mean-field inventory process $\bqbar^{\bnubarst}=(\qbark{\nubarkst})_{k\in\mfK}$ is
\[
\bqbar^{\bnubar}_t = \bmbar + \int_0^t \bnubarst_u \, du\,.
\]

Moreover, the collection of proposed optimal solutions satisfies
	\begin{equation}
		\nujst = \argmax_{\omega \in \A^j} \Hbar^{\bnubarst}_j(\omega)			
	\end{equation}
	for all $j\in\mfN$.
\end{theorem}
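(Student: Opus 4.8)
The proof is a verification argument. The plan is to take the functions $\bg_1,\bg_2$ and $\{h_2^k\}_{k\in\mfK}$ produced by Theorem~\ref{thm:Solution-g1-g2-h2}, use them to construct candidate mean-field and individual trading rates through~\eqref{eqn:meanfield-opt-trading-rate} and~\eqref{eq:Solution-Structure-Statement}, verify that these are admissible and that they solve the optimality FBSDE~\eqref{eq:Optimal-FBSDE-System} together with the consistency condition~\eqref{eq:Optimality-Consistency-Condition}, and then invoke the ``if'' direction of Theorem~\ref{thm:FBSDE-Optimality-Condition} to conclude optimality.

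First I would make the construction precise and establish admissibility, which is exactly the role of the standing hypothesis $\bg_1\in\HT$. Since Theorem~\ref{thm:Solution-g1-g2-h2} furnishes a bounded, continuously differentiable deterministic $\bg_2$, the linear ODE $d\bqbar^{\bnubarst}_t=(\bg_{1,t}+\bg_{2,t}\,\bqbar^{\bnubarst}_t)\,dt$ with $\bqbar^{\bnubarst}_0=\bmbar$ has a unique pathwise solution, obtained by variation of constants with a bounded deterministic fundamental matrix; defining $\bnubarst_t:=\bg_{1,t}+\bg_{2,t}\,\bqbar^{\bnubarst}_t$ and applying Cauchy--Schwarz together with $\bg_1\in\HT$ then gives $\bqbar^{\bnubarst},\bnubarst\in\HT$, while continuity and $\F$-adaptedness of $\bg_1$ make $\bnubarst$ $\F$-predictable, hence $\F^j$-predictable. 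For $j\in\K_k^{\N}$, subtracting $d\qbark{\nubarkst}_t$ from $d\qj{\nujst}_t$ and using the ansatz~\eqref{eq:Trader-j-Ansatz-Structure} gives a scalar linear ODE with deterministic, bounded coefficient, whose solution is $\qj{\nujst}_t-\qbark{\nubarkst}_t=(\mfQ_0^j-\mbar_k)\exp\bigl(\tfrac{1}{2a_k}\int_0^t h_{2,u}^k\,du\bigr)$; this determines $\nujst$, and boundedness of $h_2^k$ on $[0,T]$, the bounded-variance assumption on $\mfQ_0^j$ and $\bnubarst\in\HT$ yield $\nujst\in\HTk$ for every $k$, hence $\nujst\in\HT$, while $\F_0^j$-measurability of $\mfQ_0^j$ gives $\F^j$-predictability; therefore $\nujst\in\A^j$.

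The second step is to verify the FBSDEs, essentially by reversing the derivation in Section~\ref{sec:Solving-The-Optimality-FBSDE} now that the ingredients are known to exist. Differentiating the affine ansatz~\eqref{eq:Ansatz-Mean-Field} and substituting the matrix Riccati ODE~\eqref{eq:g2-ODE} for $\bg_2$ and the BSDE~\eqref{eq:g1-BSDE} for $\bg_1$ shows that $\bnubarst$ solves the stacked mean-field FBSDE~\eqref{eq:Mean-Field-FBSDE-Stacked} with martingale term the same $\bmcMbar$ as in~\eqref{eq:g1-BSDE}, the terminal condition following from $\bg_{1,T}=0$ and $\bg_{2,T}=-2\bPsi$; reading off the $k$-th coordinate recovers~\eqref{eq:Mean-Field-FBSDE-2}. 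Substituting the individual ansatz~\eqref{eq:Trader-j-Ansatz-Structure} into~\eqref{eq:Optimal-FBSDE-System} collapses, exactly as in~\eqref{eq:Ansatz-Optimal-FBSDE-Plugged-In}, to the Riccati ODE~\eqref{eq:h2k-ODE-Equation} for $h_2^k$ (which it solves, by Theorem~\ref{thm:Solution-g1-g2-h2}) together with~\eqref{eq:Mean-Field-FBSDE-1}. The one nonroutine point is moving from the $\F$-conditioned equation~\eqref{eq:Mean-Field-FBSDE-2} to the $\F^j$-conditioned~\eqref{eq:Mean-Field-FBSDE-1}: because $\bnubarst$ is $\F$-adapted and $\mfQ_0^j$ is independent of $\{A^k\}_{k\in\mfK}$ (and $\{M^k\}_{k\in\mfK}$, hence of $\F$) under each $\Pk$, the conditional expectations $\Ek[A^k_t+\blambda_k^\T\bnubarst_t\mid\F^j_t]$ and $\Ek[A^k_t+\blambda_k^\T\bnubarst_t\mid\F_t]$ agree $\Pk\times\mu$-almost everywhere, and the same independence makes $\F^j=\F\vee\sigma(\mfQ_0^j)$ an independent enlargement of $\F$ under $\Pk$, so the $\F$-adapted $\Pk$-martingale $\mcMbar^k$ of~\eqref{eq:g1-BSDE} stays a $\Pk$-martingale for $\F^j$; one may therefore take $\mcM^j=\mcMbar^k$. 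The terminal condition $2a_k\nujst_T=-2\Psi_k\qj{\nujst}_T$ is immediate from $h_{2,T}^k=-2\Psi_k$ and $2a_k\nubarkst_T=-2\Psi_k\qbark{\nubarkst}_T$.

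Finally, averaging~\eqref{eq:Trader-j-Ansatz-Structure} over $j\in\K_k^{\N}$ and letting $N\to\infty$, using a law of large numbers for the starting inventories so that $\tfrac{1}{N_k^{\N}}\sum_{j\in\K_k^{\N}}\mfQ_0^j\to\mbar_k$, shows the empirical average of $\{\nujst\}$ converges to a limit whose inventory gap against $\qbark{\nubarkst}$ solves a homogeneous scalar linear ODE with zero initial value, hence is identically zero; thus the limit equals $\nubarkst$ and~\eqref{eq:Optimality-Consistency-Condition} holds. With $\nujst\in\HT$, with $\nujst$ a strong solution of~\eqref{eq:Optimal-FBSDE-System}, and with~\eqref{eq:Optimality-Consistency-Condition} in force, the ``if'' direction of Theorem~\ref{thm:FBSDE-Optimality-Condition} yields $\nujst=\argmax_{\omega\in\A^j}\Hbar^{\bnubarst}_j(\omega)$ for every $j\in\mfN$, and strict concavity (Lemma~\ref{thm:Convex-Lemma}) identifies this maximizer with the constructed control. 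I expect the main obstacle to be the measure-theoretic bookkeeping of the third paragraph --- reconciling the single-measure, $\F$-adapted BSDE solution of Theorem~\ref{thm:Solution-g1-g2-h2} with the family of $\F^j$-adapted, $\Pk$-martingale conditions required in~\eqref{eq:Optimal-FBSDE-System} --- together with carrying the hypothesis $\bg_1\in\HT$ through the construction to secure $\nujst\in\A^j$.
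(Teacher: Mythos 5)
Your proposal is correct and follows essentially the same route as the paper's proof: verify admissibility of the constructed controls using $\bg_1\in\HT$, boundedness of $\bg_2$ and $h_2^k$, and the explicit inventory-gap solution; check the consistency condition via the law of large numbers for the initial inventories; and conclude by the ``if'' direction of Theorem~\ref{thm:FBSDE-Optimality-Condition}. The only difference is that you spell out the passage between the $\F$-conditioned and $\F^j$-conditioned equations (independent enlargement by $\sigma(\mfQ_0^j)$), a step the paper handles implicitly by appealing to the derivation in Section~\ref{sec:Solving-The-Optimality-FBSDE}.
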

\begin{proof}
See \ref{sec:Proof-Final-Solution-Statement}.
\end{proof}

Theorem~\ref{thm:Final-Solution-Statement} guarantees, under the technical assumption that $\bg_{1}\in \HT$, our proposed solution forms a Nash-equilibrium for the limiting mean-field game. Moreover, Theorem~\eqref{thm:FBSDE-Optimality-Condition} guarantees that the solution is unique up to $\P\times\mu$ null sets.  The condition $\bg_{1}\in \HT$ holds for the class of models presented in Sections~\ref{sec:Example-Model-Subsection} and \ref{sec:Numerical-Experiments}.
While these models are not exhaustive, they provide an instructive class to study.

\subsection{Properties of the Optimal Control}
\label{sec:Properties-of-Optimal}

The optimal solution provided in Theorem~\eqref{thm:Final-Solution-Statement} admits many interesting properties. Firstly, the mean-field trading rate in \eqref{eqn:meanfield-opt-trading-rate} contains two parts: (i) a `risk control' portion $\bg_{2,t}\,\bqbar_t^{\bnubarst}$, which is independent of the dynamics of the asset price process; and (ii) an `alpha trading' or statistical arbitrage portion $\bg_{1}$.

The `risk control' portion ($\bg_{2,t} \, \bqbar_t^{\bnubarst}$) survives even when $A^k=0\;\forall k\in\mfK$, i.e., the midprice process subtracted from total order-flow is a martingale and induces interactions between the various sub-populations due to the their permanent impact.
It can be shown through numerical examples that this function scales with the parameter matrix $\phi$ and $\Psi$ to make agents liquidate their inventories faster when either $\phi$ or $\Psi$ become large, thereby controlling the risk agents take while trading.

{In the `alpha trading' portion ($\bg_{1,t}$), agents adjust their trading based on a weighted average of $\bAhat$, the estimated drift of the asset price for all agents. The weighting process $\bmcE$ encodes both information about the `risk' portion of the algorithm, $\bg_{2}$, as well as information about all other agent's measures through the process $\bZ$, which implicitly appears through the dynamics of $\bmcE$. The weighting function compensates for the differing models agents use for the asset price, and adjusts the individual trading rates to account for the price impact due to `alpha trading' of all other agents.
}

The Nash equilibrium, provided in Theorem \ref{thm:Final-Solution-Statement}, resembles the one obtained in \cite{casgrain2018algorithmic}, with the main differences lying in the expression for the value of the function $\bg_1$. The differences are important and reveal themselves in two ways.

First, here, we have a stochastic weighting process $\bmcE$ defined by the SDE~\eqref{eq:proposition-mcE} which replaces the deterministic time-ordered exponential function present \cite{casgrain2018algorithmic}. In fact, we can view $\bmcE$ as the natural extension of the time-ordered exponential appearing in \cite{casgrain2018algorithmic} to the case of stochastic processes. Second, to determine the correction to trading, rather than weighting a single estimate of future alpha as in \cite{casgrain2018algorithmic}, all posterior estimated alphas' $\Ahat^k$ under all measures $\Pk$, $k\in\mfK$, play a role. Finally, when $\Pk=\P^{k^\prime}$ for all $k,k^{\prime}\in\mfK$, the optimal controls in Theorem \ref{thm:Final-Solution-Statement} match the one presented in \cite{casgrain2018algorithmic}.

Thus far, we discussed the optimal mean-field strategy. The individual agents' trading rates also admit an interesting structure. An arbitrary agent trades at their own sub-population mean-field rate $\nubark$ plus a correction term proportional to the difference between their individual inventory and the mean-field inventory: $(\qj{\nujst}-\qbark{\nubark})$. This difference can be solved for in terms of the difference between the initial inventory of the agent and its sub-population prior mean:
\begin{equation}
	(\qj{\nujst}_t-\qbark{\nubarkst}_t) =
	(\mfQ_0^j - \mbar_k) \, e^{\int_0^t h_{2,u}^k \, du},
\end{equation}
where $h_{2,t}^k \leq 0$ for all $\tT$. Therefore, the difference in inventories shrinks towards zero at a deterministic rate, and agents are consistently drawing their inventories closer to their sub-population's mean-field. As time elapses, all agents in a sub-population resemble that of their sub-population's mean-field.

\section{The \texorpdfstring{$\epsilon$}{Epsilon}-Nash Equilibrium Property}
\label{sec:epsilon-Nash}

In Section~\ref{sec:Solving-The-MFG}, we solve the stochastic game in the infinite population limit, and provide an exact representation of each agent's control at the Nash-equilibrium. One important question to ask is how the optimal MFG strategy performs in a finite-population game. We study the properties of the limiting strategy in the finite game by looking at how close the collection of limiting strategies, defined in Theorem~\eqref{thm:Final-Solution-Statement} is to the true Nash-equilibrium of a game with only $N$ agents.

Let us consider a finite game with $N$ players, as described in Section~\ref{sec:The-Model-Description}. Let us assume that each of the agents in this population use the strategy described in Theorem~\ref{thm:Final-Solution-Statement}. Each agent computes the process $\bnubar_t$ according to equation~\eqref{eqn:meanfield-opt-trading-rate}, and then uses these values to compute their own trading rates, $\nujst_t$, according to equation~\eqref{eq:Solution-Structure-Statement}. In the theorem that follows, we show that this collection of controls can serve as a quasi-Nash-equilibrium in a finite player game, provided that the population size is large enough.

\begin{theorem}[$\epsilon$-Nash equilibrium] \label{thm:Epsilon-Nash-Theorem}
	Consider the collection of objective functionals $\left\{ H_j \, : j \in \mfN \right\}$ defined in equation~\eqref{eq:Objective-Function-Definition} and the set of optimal mean-field controls $\{ \nujst \}_{j=1}^{N}$ defined in Theorem~\eqref{thm:Final-Solution-Statement}. Suppose that there exists a sequence $\left\{ \delta_N \right\}_{N=1}^{\infty}$ such that $\delta_N \rightarrow 0$ and
	\begin{equation}
		\left\lvert \tfrac{N_k^{\N}}{N} -  p_k \right\rvert = o(\delta_N)
	\end{equation}
	for all $k\in\mfK$, then
	\begin{equation}
		H_j(\nujst , \nu^{-j,\ast}) \leq
		\sup_{\nu \in \A} H_j(\nu , \nu^{-j,\ast}) \leq
		H_j(\nujst , \nu^{-j,\ast}) + o( \tfrac{1}{N} ) + o(\delta_N)
	\end{equation}
	for each $j\in\mfN$.
\end{theorem}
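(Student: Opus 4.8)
The plan is the classical $\epsilon$-Nash verification argument: replace the finite-population objective $H_j(\,\cdot\,,\nu^{-j,\ast})$ by the mean-field objective $\Hbar^{\bnubarst}_j$ — for which, by Theorem~\ref{thm:Final-Solution-Statement}, $\nujst$ is the \emph{exact} maximiser — and control the discrepancy $\Delta_j(\nu):=H_j(\nu,\nu^{-j,\ast})-\Hbar^{\bnubarst}_j(\nu)$. Since $\Hbar^{\bnubarst}_j(\nu)\le\Hbar^{\bnubarst}_j(\nujst)$ for every $\nu\in\A^j$, writing $H_j=\Hbar^{\bnubarst}_j+\Delta_j$ at both $\nu$ and $\nujst$ yields
\[
\sup_{\nu\in\A^j}H_j(\nu,\nu^{-j,\ast})\;\le\;H_j(\nujst,\nu^{-j,\ast})+\sup_{\nu\in\A^j}\Delta_j(\nu)-\Delta_j(\nujst),
\]
so everything reduces to the bound $\sup_{\nu}\Delta_j(\nu)-\Delta_j(\nujst)=o(\tfrac{1}{N})+o(\delta_N)$.

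First I would restrict to bounded deviations. Evaluating the quadratic form in \eqref{eq:Objective-Function-Limit} and using $\int_0^T\nu_t\bigl(\int_0^t\nu_s\,ds\bigr)dt=\tfrac12\bigl(\int_0^T\nu_s\,ds\bigr)^2\ge0$ shows that $\Hbar^{\bnubarst}_j$ is $a_k$-strongly concave and coercive, with $\Hbar^{\bnubarst}_j(\nu)\le-a_k\|\nu\|_{\HT}^2+c_j\bigl(1+\|\nu\|_{\HT}\bigr)$; together with the crude a priori bound $\Delta_j(\nu)=O\!\bigl(N^{-1}\|\nu\|_{\HT}^2\bigr)+O\!\bigl((1+\|\nu\|_{\HT})(\delta_N+N^{-1/2})\bigr)$ obtained below, this makes $H_j(\,\cdot\,,\nu^{-j,\ast})$ coercive for all large $N$, so both suprema are attained on a fixed ball $\{\|\nu\|_{\HT}\le R\}$ with $R$ independent of $N$, and henceforth all constants may depend on $R$. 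Next, inserting the representations \eqref{eq:Objective-Function-Alternative-Representation} and \eqref{eq:Objective-Function-Limit}, the common constant $C_0^j$ and the common quadratic-in-$(\nu,\qj{\nu})$ running term cancel; and since the $\Pk$-martingale part of $S^{\nubarN}$ integrates to zero against $\qj{\nu}\in\HT$ (a localisation argument using the integrability in \eqref{eqn:Integrability}), one is left with
\[
\Delta_j(\nu)=\Ek\!\left[\int_0^T \qj{\nu}_t\sum_{k'\in\mfK}\lambda_{k,k'}\bigl(p_{k'}^{\N}\,\nubar^{k',\N}_t-p_{k'}\,\nubar^{k',\ast}_t\bigr)\,dt\right].
\]

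I would then split the inner difference into three pieces and estimate each on the ball. (a)~The proportion error $(p_{k'}^{\N}-p_{k'})\,\nubar^{k',\ast}_t$: by hypothesis $|p_{k'}^{\N}-p_{k'}|=o(\delta_N)$, so Cauchy--Schwarz with $\nubar^{k',\ast}\in\HT$ and $\|\qj{\nu}\|$ bounded gives $o(\delta_N)$. (b)~Agent~$j$'s own contribution to its sub-population average, $\tfrac{\lambda_{k,k}}{N}\bigl(\nu_t-\nujst_t\bigr)$ (from $N_k^{\N}\,\nubar^{k,\N}_t=\sum_{i\in\K_k^{\N},\,i\ne j}\nu^{i,\ast}_t+\nu_t$ and $p_k^{\N}/N_k^{\N}=1/N$): this is $O(1/N)$ by Cauchy--Schwarz, and improves to $o(1/N)$ once one observes that near-maximisers of $H_j(\,\cdot\,,\nu^{-j,\ast})$ converge to $\nujst$ in $\HT$, a consequence of the uniform-on-the-ball vanishing of $\Delta_j$ and the $a_k$-strong concavity of $\Hbar^{\bnubarst}_j$. (c)~The fluctuation of the empirical average about its limit, $p_{k'}^{\N}\bigl(\tfrac1{N_{k'}^{\N}}\sum_{i\in\K_{k'}^{\N}}\nu^{i,\ast}_t-\nubar^{k',\ast}_t\bigr)$: by the explicit individual control \eqref{eq:Solution-Structure-Statement} this equals $p_{k'}^{\N}\,\tfrac{1}{2a_{k'}}h_{2,t}^{k'}\,e^{\int_0^t h_{2,u}^{k'}du}\bigl(\tfrac1{N_{k'}^{\N}}\sum_{i\in\K_{k'}^{\N}}\mfQ_0^i-\mbar_{k'}\bigr)$, a mean-zero random variable with $L^2(\Pk)$-norm $O(N^{-1/2})$ by Assumption~1 (and the mutual independence of the private holdings, whose law is unchanged under $\Pk$ by Assumption~3) that is independent of the exogenous drivers $A^k,M^k$. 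Pairing each piece against $\qj{\nu}$, summing, and subtracting $\Delta_j(\nujst)$ (handled identically, with $\qj{\nujst}$ in place of $\qj{\nu}$) closes the estimate.

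The main obstacle is piece (c). At $\nu=\nujst$ the only correlation of $\qj{\nujst}$ with the fluctuation is the single diagonal term through $\mfQ_0^j$ itself, of order $1/N$, so $\Delta_j(\nujst)=O(1/N)$. For an \emph{arbitrary} deviation, however, $\nu$ is only required to be $\F^j$-predictable and $\F^j$ \emph{does} observe the empirical sub-population averages $\bnubarkN$, hence $\qj{\nu}$ need not be independent of the fluctuation and a naive Cauchy--Schwarz delivers only $O(N^{-1/2})$; pushing this to the order claimed in the statement is the crux of the proof and is exactly where the model's probabilistic assumptions on the agents' private endowments (together with the convergence of near-maximisers used for term~(b)) must be used in full. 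The remaining pieces~(a),~(b) and the martingale cancellation are routine, and the auxiliary sequence $\delta_N$ is introduced precisely to absorb the sub-population proportion error $|p_{k'}^{\N}-p_{k'}|$, which the hypotheses do not force to be $O(1/N)$.
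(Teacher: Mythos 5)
Your overall strategy---reduce everything to the discrepancy $\Delta_j(\nu)=H_j(\nu,\nu^{-j,\ast})-\Hbar^{\bnubarst}_j(\nu)$ and use that $\nujst$ exactly maximizes $\Hbar^{\bnubarst}_j$---is the same as the paper's, which proves a uniform bound on this discrepancy (Lemma~\ref{lemma:Obj-Diff-Bound}) and applies it twice; your decomposition into (a) the proportion error $(p_{\kp}^{\N}-p_{\kp})\nubar^{\kp,\ast}$, (b) the deviator's own $\tfrac1N(\nu_t-\nujst_t)$ contribution, and (c) the fluctuation of the empirical average of the $\nu^{i,\ast}$ about $\nubar^{\kp,\ast}$, handled through the explicit formula \eqref{eq:Solution-Diff-Inventories}, matches the paper's proof almost line by line. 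The problem is that your proposal does not close the argument: you concede that for an arbitrary $\F^j$-predictable deviation the pairing of $\qj{\nu}$ against the inventory fluctuation $\tfrac{1}{N_{\kp}^{\N}}\sum_{i\in\K_{\kp}^{\N}}(\mfQ_0^i-\mbar_{\kp})$ yields only $O(N^{-1/2})$ by Cauchy--Schwarz, and you defer the upgrade to $o(\tfrac1N)+o(\delta_N)$ to an unspecified ``full use'' of the assumptions on the private endowments. That estimate is exactly what the theorem asserts, so as written the proposal establishes at best an $O(N^{-1/2})$ version of the $\epsilon$-Nash property, not the stated one. This is a genuine gap, not a routine detail to be filled in.

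For comparison, the paper's Lemma~\ref{lemma:Obj-Diff-Bound} never performs that pathwise pairing. After reducing $|H_j(\nu,\nu^{-j,\ast})-\Hbar^{\bnubarst}_j(\nu)|$ to the analogues of your three pieces, it estimates the fluctuation piece through its unconditional $\Pk$-expectation, which vanishes identically: by the assumptions that the initial inventories have common sub-population mean $\mbar_{\kp}$ and the same law under every $\Pk$, and because the factor multiplying $(\mfQ_0^i-\mbar_{\kp})$ in \eqref{eq:Solution-Diff-Inventories} is deterministic, the term $\tfrac{1}{N_{\kp}^{\N}}\sum_{i}\bigl(\E^{\Pk}[\mfQ_0^i]-\mbar_{\kp}\bigr)$ is exactly zero. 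In other words, the bound the paper actually proves carries no $\qj{\nu}_t$ weight inside the fluctuation term, so no $N^{-1/2}$ rate ever appears, and consequently none of your extra scaffolding (strong concavity and coercivity of $\Hbar^{\bnubarst}_j$, restriction of both suprema to a fixed ball, convergence of near-maximizers to sharpen piece (b)) has a counterpart in the paper. Your instinct to keep the $\qj{\nu}_t$ factor---and hence to worry that a deviator observing $\bnubarkN$ can correlate with the fluctuation---is precisely what makes your route harder than the paper's; but having raised that issue you must resolve it, and the proposal stops exactly at the point where the claimed rate would have to be established.
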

\begin{proof}
See \ref{sec:Proof-Epsilon-Nash-Theorem}.
\end{proof}

{Theorem~\ref{thm:Epsilon-Nash-Theorem} shows that for any $\epsilon>0$, there exists $N_{\epsilon}$ such that for all $N>N_{\epsilon}$ agent-$j$ may improve their performance by at most $\epsilon$ by unilaterally deviating away from $\nujst$. The statement of the theorem also reveals that the rate $N_{\epsilon}$ must be at least linear
in $\epsilon^{-1}$ and is dependent on the rate at which $\delta_N$ vanishes in the limit. This theorem effectively demonstrates that the mean-field equilibrium $\{ \nujst \}_{j=1}^{N}$ serves as a viable alternative to the true finite-game equilibrium, provided the population size is large enough.}

\section{An Example Model of Disagreement}
\label{sec:Example-Model-Subsection}

In this part, we provide an example model where the asset price process is modulated by a latent Markov chain similarly to that in~\cite{casgrain_jaimungal_2016}. In our model, we assume each sub-population disagrees on the distribution of initial value of the latent process, while they do agree on what the possible values of the latent state are, and agree on the transition rates between states. One can view this as all agents believing there are positive, neutral, and negative drift environments, but disagree on what is the current environment. We prove that the resulting optimal control presented in Theorem~\ref{thm:Final-Solution-Statement} exists and is well-defined, i.e., that $\bg_1\in\HT$, under this general model assumption.

To this end, assume that the asset price satisfies the SDE
\begin{equation} \label{eq:Example-Model}
	dS_t^{\bnubarN} = \left( \sum_{i=1}^J \alpha_t^{i} \, \1{\Theta_t = \theta_i}  + \
	\sum_{k^\prime\in\mfK} \lambda_{k,k^\prime} \, p_{k^\prime}^{\N}\, \nubar^{k^\prime,\N}_u
	\, \right) dt + \sigma dW_t
	\;,
\end{equation}
where $\Theta_t$ is a continuous-time Markov chain taking values in the set $\{\theta_i\}_{i\in\mfJ}$ ($\mfJ=\{1,\dots,J\}$) and where the processes $\alpha^i = (\alpha_t^i)_{\tT}$ are $\F$-predictable processes satisfying $\alpha^i \in \HT$ for all $i\in\mfJ$. In this model, agents across different sub-populations have different prior probabilities on the initial value of the latent process, so that under the measure $\Pk(\Theta_0=\theta_i)=\pi_0^{k,i}\in(0,1)$ with $\sum_{i\in\mfJ}\pi_0^{k,i}=1$. We assume that under each measure $\Pk$, the latent Markov chain $\Theta_t$ has the same infinitesimal generator matrix\footnote{The generator matrix $\bC \in \R^{J\times J}$ can be any matrix satisfying the conditions $\bC_{i,j}\geq 0$ for all $i\neq j\in\mfJ$ and $C_{i,i} =\sum_{j\neq i\in\mfJ} C_{i,j}$. $\bC$ defines the transition dynamics of the latent Markov chain $\Theta_t$ through the relation, $\Pk\left( \Theta_{t+h} = \theta_{i} \Big \lvert \Theta_t = \theta_j \right) = \left( e^{h \bC }\right)_{i,j}$, where $e^{h \bC}$ represents the matrix exponential.} $\bC$. Furthermore, we assume that $W$ is a stardard Brownian motion in each measure $\Pk$ and that $\sigma>0$ is constant. We also simplify the impact model, and assume all agents have the same impact term $\blambda$ in each measure, so that, in the notation of section~\ref{sec:The-Model-Description}, we have $\blambda_k=\blambda$ for all $k\in\mfK$.

This model may be interpreted as a case in which agents all agree on the dynamics of the asset price $S^{\bnubar}$ and the latent process $\Theta$ but disagree on the initial value of the latent process. The specification allows us to compute the expression for the processes $\{ \bZ_t^{\Pk} \}_{k\in\mfK}$, which are used to compute each agent's optimal strategy. With this model, we may compute the Radon-Nikodym derivative process $\bZ_t^{\Pk}$ for any measure $\Pk$.

\begin{proposition}
\label{prop:Z-Prior-Expression-Prop}
Fix $\Q=\Pk$ for some $k\in\mfK$. If the asset price dynamics follow the latent Markov chain model of equation~\eqref{eq:Example-Model}, then $\bZ_t^{\Pk}$, defined in Theorem~\ref{thm:Solution-g1-g2-h2}, may be expressed as
	\begin{equation}
\label{eqn:ZPk-for-simple-model}
		\bZ_t^{\Pk}
		=
		\sum_{j\in\mfJ} \bm{M}_j^k
		\, \Pk \left(  \Theta_0 = \theta_j \big \lvert \F_t \right),
	\end{equation}
where for each $j\in\mfJ$  we define the diagonal matrix $\,\bM_j^k = \text{diag} \left( \pi_0^{k^\prime,j} \Big{/} \pi_0^{k,j} \right)_{k^\prime \in \mfK}$.
\end{proposition}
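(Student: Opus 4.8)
The plan is to compute the Radon-Nikodym derivative $d\Pk/d\P^k = d\Pk/d\Q$ directly, using the fact that the only source of disagreement between the measures is the prior law of $\Theta_0$; everything else (the generator $\bC$, the coefficients $\alpha^i$, the Brownian motion $W$, the impact parameters) is common across sub-populations. So the two measures $\P^{k'}$ and $\P^k$ agree on the conditional law of the whole path given $\Theta_0$, and differ only through the weights $\pi_0^{k',j}$ versus $\pi_0^{k,j}$ assigned to the event $\{\Theta_0=\theta_j\}$. This immediately suggests that on any sigma-algebra $\F_t$, the likelihood ratio should be a weighted average of the ratios $\pi_0^{k',j}/\pi_0^{k,j}$ with weights given by the $\P^k$-posterior of $\Theta_0$.

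First I would fix $k'\in\mfK$ and write, by the tower property and a disintegration over $\Theta_0$,
\[
\E^{\P^k}\!\left[\left.\tfrac{d\P^{k'}}{d\P^k}\,\right|\,\F_t\right]
= \sum_{j\in\mfJ} \E^{\P^k}\!\left[\left. \1{\Theta_0=\theta_j}\,\tfrac{d\P^{k'}}{d\P^k}\,\right|\,\F_t\right].
\]
On the event $\{\Theta_0=\theta_j\}$ the two measures induce the same conditional dynamics for $(S^{\bnubar},W,\Theta)$ on $[0,t]$, hence the conditional laws of the observed path $(S^{\bnubar}_u)_{u\le t}$ (which generates $\F_t$) given $\{\Theta_0=\theta_j\}$ coincide under $\P^{k'}$ and $\P^k$; the only difference is the unconditional probability of $\{\Theta_0=\theta_j\}$, which is $\pi_0^{k',j}$ rather than $\pi_0^{k,j}$. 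Making this precise — for instance via Bayes' formula on $\F_t\vee\sigma(\Theta_0)$, or by exhibiting $d\P^{k'}/d\P^k$ restricted to $\F_t\vee\sigma(\Theta_0)$ as $\sum_j \1{\Theta_0=\theta_j}\,\pi_0^{k',j}/\pi_0^{k,j}$ and then projecting onto $\F_t$ — gives
\[
\E^{\P^k}\!\left[\left. \1{\Theta_0=\theta_j}\,\tfrac{d\P^{k'}}{d\P^k}\,\right|\,\F_t\right]
= \frac{\pi_0^{k',j}}{\pi_0^{k,j}}\;\P^k\!\left(\Theta_0=\theta_j\,\middle|\,\F_t\right).
\]
Summing over $j$ yields the $k'$-th diagonal entry of the claimed formula~\eqref{eqn:ZPk-for-simple-model}, since $\bZ_t^{\Pk}=\mathrm{diag}(Z_t^{\Q,k'})_{k'\in\mfK}$ with $Z_t^{\Q,k'}=\E^{\P^k}[d\P^{k'}/d\P^k\mid\F_t]$ by~\eqref{eq:Def-ZtQ} and the choice $\Q=\P^k$, and $\bM_j^{k}=\mathrm{diag}(\pi_0^{k',j}/\pi_0^{k,j})_{k'\in\mfK}$ factors out of the sum because the posterior weight $\P^k(\Theta_0=\theta_j\mid\F_t)$ does not depend on $k'$.

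The main obstacle is the measure-change step: justifying rigorously that $d\P^{k'}/d\P^k$, restricted to the enlarged filtration $\F_t\vee\sigma(\Theta_0)$, is exactly $\sum_j \1{\Theta_0=\theta_j}\,\pi_0^{k',j}/\pi_0^{k,j}$. This requires arguing that, once the initial latent state is known, the dynamics of the price and the Markov chain are identical under both measures — which is where the assumptions that $\bC$, $\alpha^i$, $\sigma$, $W$, and $\blambda$ are common to all sub-populations, together with $\P^{k'}\sim\P^k$ and the identical-$\mfQ_0$-law assumption, are all used. I would verify the candidate Radon-Nikodym derivative by checking it is a strictly positive $\P^k$-martingale of unit mean in the enlarged filtration and that it correctly reweights the initial distribution while leaving the transition kernel and the diffusion part unchanged (e.g., by a Girsanov-type argument showing the drift adjustment vanishes). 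Once that identity is in hand, the conditioning argument above is routine, and taking $t$ fixed but arbitrary gives the stated expression for all $\tT$.
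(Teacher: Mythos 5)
Your proposal is correct and takes essentially the same route as the paper: both arguments rest on the observation that, since the measures differ only through the prior on $\Theta_0$, the Radon--Nikodym derivative (written on $\G_T$ in the paper, on $\F_t\vee\sigma(\Theta_0)$ in your version) equals $\sum_{j\in\mfJ}\1{\Theta_0=\theta_j}\,\pi_0^{k^\prime,j}/\pi_0^{k,j}$, after which conditioning on $\F_t$ and pulling out the constant ratios yields the posterior-weighted formula. The paper simply asserts this form of the density and projects via the tower property (Doob martingale), whereas you flag that same identification as the step requiring justification; no substantive difference in approach.
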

\begin{proof}
See~\ref{sec:Proof-Z-Prior-Expression-Prop}.
\end{proof}
From expression \eqref{eqn:ZPk-for-simple-model}, it is clear that $\bZ^{\Pk}$ is almost surely bounded, since  $\Pk \left(  \Theta_0 = \theta_j \big \lvert \F_t \right)\in[0,1]$ and $\pi^{k,i}_0\in(0,1)$ for all $k\in\mfK$, $i\in\mfJ$. We use this fact in the proof of the following proposition.

\begin{proposition} \label{prop:Example-Admissibility}
	Suppose that the asset price process is given by Equation~\eqref{eq:Example-Model}, then the solution $\bg_1$ defined in Theorem~\ref{thm:Solution-g1-g2-h2} satisfies $\bg_1\in\HT$ and thus the results of Theorem~\ref{thm:Final-Solution-Statement} apply to the model described in this section.
\end{proposition}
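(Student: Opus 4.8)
The goal is to verify $\bg_1\in\HT=\bigcap_{k\in\mfK}\HTk$, i.e.\ that $\Ek\int_0^T\|\bg_{1,t}\|^2\,dt<\infty$ for every $k\in\mfK$, so fix one such $k$. My plan is to evaluate the representation~\eqref{eq:proposition-solution-g1} of Theorem~\ref{thm:Solution-g1-g2-h2} at the particular choice $\Q=\Pk$, which is legitimate since $\Pk\sim\P$. The advantage is that every expectation then appears under the same measure $\Pk$ in whose $L^2$-norm we are testing $\bg_1$, so the tower property is available directly and no change of measure is needed inside the estimates.

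The one place the specific model of this section is used is to bound the Radon--Nikodym process $\bZ_t^{\Pk}$ and its inverse, uniformly in $(t,\omega)$, by deterministic constants. By Proposition~\ref{prop:Z-Prior-Expression-Prop}, $\bZ_t^{\Pk}=\sum_{j\in\mfJ}\bM_j^k\,\Pk(\Theta_0=\theta_j\mid\F_t)$ is a random convex combination of the fixed diagonal matrices $\bM_j^k=\text{diag}(\pi_0^{k',j}/\pi_0^{k,j})_{k'\in\mfK}$; since all prior weights lie in $(0,1)$, every diagonal entry of $\bZ_t^{\Pk}$ stays in a deterministic interval $[c_-,c_+]$ with $0<c_-\le c_+<\infty$, giving $\|\bZ_t^{\Pk}\|\le c_+$ and $\|(\bZ_t^{\Pk})^{-1}\|\le c_-^{-1}$ for all $\tT$, $\P$-a.s. (In a general model $\bZ^{\Pk}$ is neither bounded nor bounded away from $0$, which is precisely why Theorem~\ref{thm:Final-Solution-Statement} retains $\bg_1\in\HT$ as a standing hypothesis.)

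Using this, I would obtain a deterministic a.s.\ bound on the ``transition operator'' $(\bmcE_t^{\Pk})^{-1}\bmcE_u^{\Pk}$ for $0\le t\le u\le T$. Factor $\bmcE_t^{\Pk}=\bm{H}_t\,\bZ_t^{\Pk}$, where $\bm{H}$ solves the pathwise linear ODE $\partial_t\bm{H}_t=\bm{H}_t\bm{K}_t$, $\bm{H}_0=\bm{I}$, with the random but bounded coefficient $\bm{K}_t:=\bZ_t^{\Pk}\,\bG_t\,(\bZ_t^{\Pk})^{-1}$ and $\bG_t=(\bLambda+\bg_{2,t})(2\ba)^{-1}$. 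Because $\bm{H}$ has finite variation, It\^o's product rule gives $d(\bm{H}_t\bZ_t^{\Pk})=\bm{H}_t\bZ_t^{\Pk}\bG_t\,dt+\bm{H}_t\,d\bZ_t^{\Pk}=\bmcE_t^{\Pk}\bG_t\,dt+\bmcE_t^{\Pk}(\bZ_t^{\Pk})^{-1}d\bZ_t^{\Pk}$, which is precisely~\eqref{eq:proposition-mcE} with initial datum $\bm{H}_0\bZ_0^{\Pk}=\bZ_0^{\Pk}$, so by uniqueness of that linear matrix SDE the factorization is valid. Since $\bg_2$ and hence $\bG$ are bounded on $[0,T]$ by Theorem~\ref{thm:Solution-g1-g2-h2}(II), $\sup_{\tT}\|\bm{K}_t\|\le c_+c_-^{-1}\sup_{\tT}\|\bG_t\|=:C_K<\infty$ deterministically, so $\|\bm{H}_t\|\vee\|\bm{H}_t^{-1}\|\le e^{C_KT}$ pathwise; combining with the bounds on $\bZ^{\Pk}$ produces a deterministic constant $C$ with $\|(\bmcE_t^{\Pk})^{-1}\bmcE_u^{\Pk}\|=\|(\bZ_t^{\Pk})^{-1}\bm{H}_t^{-1}\bm{H}_u\bZ_u^{\Pk}\|\le C$ for all $0\le t\le u\le T$, a.s.

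To close, pull the $\F_t$-measurable factor $(\bmcE_t^{\Pk})^{-1}$ out of~\eqref{eq:proposition-solution-g1} and use the bound above to get $\|\bg_{1,t}\|\le C\,\E^{\Pk}\!\big[\int_0^T\|\bAhat_u\|\,du\mid\F_t\big]$; conditional Jensen, the tower property and Cauchy--Schwarz then give $\Ek\int_0^T\|\bg_{1,t}\|^2\,dt\le C^2T\,\Ek\big[\big(\int_0^T\|\bAhat_u\|\,du\big)^2\big]\le C^2T^2\,\Ek\int_0^T\|\bAhat_u\|^2\,du$. It remains to check $\bAhat\in\HTk$: under~\eqref{eq:Example-Model} the drift is common to all sub-populations, so the $k'$-th component of $\bAhat_t$ equals $\sum_{i\in\mfJ}\alpha_t^i\,\P^{k'}(\Theta_t=\theta_i\mid\F_t)$ --- the $\alpha^i$, being $\F$-predictable and hence $\F_t$-measurable, come out of the filter --- whence $\|\bAhat_t\|\le\sqrt{K}\sum_{i\in\mfJ}|\alpha_t^i|$ and $\Ek\int_0^T\|\bAhat_u\|^2\,du\le KJ\sum_{i\in\mfJ}\Ek\int_0^T|\alpha_u^i|^2\,du<\infty$ since each $\alpha^i\in\HT\subseteq\HTk$. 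Letting $k$ range over $\mfK$ gives $\bg_1\in\HT$. The step I expect to be the main obstacle is the factorization together with the well-posedness/uniqueness claim for~\eqref{eq:proposition-mcE} (one should check the driving semimartingale $\int_0^\cdot\bG_s\,ds+\int_0^\cdot(\bZ_s^{\Pk})^{-1}d\bZ_s^{\Pk}$ is well defined --- fine, as $(\bZ^{\Pk})^{-1}$ is bounded --- and that the random bounded coefficient $\bm{K}$ legitimately yields the claimed pathwise exponential bound on $\bm{H}$); if one prefers, the same conclusion follows from a Gr\"onwall/Burkholder--Davis--Gundy argument applied directly to~\eqref{eq:proposition-mcE} bounding $\E^{\Pk}[\sup_{\tT}\|\bmcE_t^{\Pk}\|^p]$ and the corresponding moments of $(\bmcE^{\Pk})^{-1}$, the boundedness of $(\bZ^{\Pk})^{-1}$ from the second step being essential in either route.
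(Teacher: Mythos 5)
Your argument is correct and follows essentially the same route as the paper's proof: the factorization $\bmcE^{\Pk}_t=\tilde{\bmcE}_t\,\bZ^{\Pk}_t$, the a.s.\ two-sided boundedness of $\bZ^{\Pk}$ (and its inverse) from Proposition~\ref{prop:Z-Prior-Expression-Prop}, the boundedness of $\bG$ from Theorem~\ref{thm:Solution-g1-g2-h2}(II), an exponential (Gr\"onwall-type) bound on the transition operator $(\bmcE^{\Pk}_t)^{-1}\bmcE^{\Pk}_u$, and finally Jensen/tower/Cauchy--Schwarz combined with $\bAhat\in\HT$ deduced from $\alpha^i\in\HT$. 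The only differences are cosmetic (you verify the factorization via It\^o's product rule and use a direct pathwise exponential bound where the paper invokes Gr\"onwall), so no changes are needed.
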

\begin{proof}
See \ref{sec:Proof-Example-Admissibility}.
\end{proof}

Although we show that there exist models for which the mean-field optimal control presented in Theorem~\ref{thm:Solution-g1-g2-h2} is well defined, computing these controls presents us with another challenge. In particular, due to the complicated nature of the process $\mcE^{\Q}$, the conditional expected value appearing in the expression~\eqref{eq:proposition-solution-g1}, for obtaining $\bg_1$, is difficult to compute. In section~\ref{sec:Numerical-Experiments}, we address this issue by presenting a computational method to approximate such expressions.

\section{A Simulation-Based Computational Method}
\label{sec:Computational-Method}

For most non-trivial models, obtaining a closed-form expression for the solution to the BSDE~\eqref{eq:proposition-solution-g1} for $\bg_{1,t}$ proves to be very difficult. To overcome this difficulty, we present a simulation-based computational method to approximate solutions. We propose a Least-Square-Monte-Carlo (LSMC) based method, which closely resembles the methods used to approximate solutions of BSDEs, as in \cite{bender2012least} and \cite{gobet2005regression}. Unlike these two methods, however, we do not concern ourselves with the computation of the martingale portion of the BSDE~\eqref{eq:g1-BSDE-Q-measure}, since it is not required to compute $\bg_{1}$.

To this end, define the $M$-point uniform partition of the interval $[0,T]$, $\mcT := \{ t_m := m \times \Delta ,\, m = 0 , 1 ,\dots , M \}$ where $M$ is a positive integer and where $\Delta := T/M$ is the discretization interval . We aim to approximate the process $\bg_{1}$ over the partition $\mcT$ with a discrete-time stochastic process $\bgh_1 = \left\{ \bgh_{1,t_m} \right\}_{t_m\in\mcT}$, where each $\bgh_{1,t_m}\in\R^{K}$.

To derive an expression for $\bgh_1$, we first study the expression for $\bg_{1,t}$,
\begin{equation}
	\bg_{1,t} = \E^{\Q}\left[\int_t^T
			(\bmcE_{t}^{\Q})^{-1} \, \bmcE_u^{\Q}
			 \; \bAhat_u
			\, du \, \Big \lvert \F_t \right]
	\;
\end{equation}
at the points $t_m\in\mcT$. This expression may be written recursively over $\mcT$ as follows
\begin{equation}
	\bg_{1,t_m} = \E^{\Q}\left[\int_{t_m}^{t_{m+1}}
			(\bmcE_{t_m}^{\Q})^{-1} \, \bmcE_u^{\Q}
			 \;\bAhat_u
			\, du
			+
			(\bmcE_{t_m}^{\Q})^{-1} \, \bmcE_{t_{m+1}}^{\Q} \; \bg_{1,t_{m+1}}
			\, \Big \lvert \F_{t_m} \right]
			\;.
\end{equation}
Next, approximating the time-integral in the previous expression with its left end-point, we obtain the approximation
\begin{equation} \label{eq:g1-approx-1}
	\bg_{1,t_m} \approx
	\E^{\Q}\left[
	\bAhat_{t_m} \Delta
	+
	(\bmcE_{t_m}^{\Q})^{-1} \, \bmcE_{t_{m+1}}^{\Q} \; \bg_{1,t_{m+1}}
			\, \Big \lvert \F_{t_m} \right]
			\;.
\end{equation}

A further simplification follows by approximating the term $(\bmcE_{t_m}^{\Q})^{-1} \, \bmcE_{t_{m+1}}^{\Q}$ for small values of $\Delta$. Using the definition of $\bmcE^{\Q}$ in Equation~\eqref{eq:proposition-mcE}, we may factor $\bmcE^{\Q}$ as $\bmcE^{\Q}_t = \bmcEt^{\Q}_t \bZ_t^{\Q}$, where $\bmcEt^{\Q}_t$ is the solution the the matrix-valued SDE $d\bmcEt^{\Q}_t = \bmcEt^{\Q}_t \left(\bZ_t^{\Q} \, \bG_t \,(\bZ_t^{\Q})^{-1} \right)\, dt$ with initial condition $\bmcEt^{\Q}_0 = \bm{I}^{(K\times K)}$. For $\Delta\ll 1$, we freeze the process in parenthesis at their $t_m$ values, so that $d\bmcEt^{\Q}_t \approx \bmcEt^{\Q}_t \left(\bZ_{t_m}^{\Q} \, \bG_{t_m} \,(\bZ_{t_m}^{\Q})^{-1} \right)\, dt$ over each interval $[t_m,t_{m+1})$, resulting in
\begin{align}
	(\bmcEt_{t_m}^{\Q})^{-1} \, \bmcEt_{t_{m+1}}^{\Q} &\approx
	\exp\left\{ \bZ_{t_m}^{\Q} \, \bG_{t_m} \,(\bZ_{t_m}^{\Q})^{-1} \Delta \right\}
=
	 \bZ_{t_m}^{\Q} \exp\left\{ \bG_{t_m} \Delta \right\} (\bZ_{t_m}^{\Q})^{-1}
	 \;,
\end{align}
where $\exp$ represents matrix exponential.
By plugging in this last result into equation~\eqref{eq:g1-approx-1}, we obtain an approximation $\bgh_1$ for the process $\bg_1$ at $t_m$ as
\begin{equation} \label{eq:g1-approx-2}
	\bgh_{1,t_m} =
	\E^{\Q}\left[\left.
	\bAhat_{t_m} \Delta
	+
	\exp\left\{ \bG_{t_m} \Delta \right\}
	(\bZ_{t_m}^{\Q})^{-1}
	\,
	\bZ_{t_{m+1}}^{\Q}\; \bgh_{1,t_{m+1}}
	\, \right\lvert\, \F_{t_m} \right]
	\;.
\end{equation}

The final step in obtaining values of $\bgh_1$ is to approximate the conditional expected value in the rhs of equation~\eqref{eq:g1-approx-2}. As is often done, we project the conditional expectation onto a finite basis of stochastic processes. In particular, let the (vector-valued) stochastic process $\bY = \left( \bY_t\right)_{\tT}$, with $\bY_t \in \R^{L}$ where $L$ is some positive integer, and we write
\begin{equation}
	\E^{\Q}\left[\left.
	\bAhat_{t_m} \Delta
	+
	\exp\left\{ \bG_{t_m} \Delta \right\}
	(\bZ_{t_m}^{\Q})^{-1}
	\,
	\bZ_{t_{m+1}}^{\Q}\; \bgh_{1,t_{m+1}}
	\, \right \lvert \,\F_{t_m} \right]
\approx \left\langle \bY_{t_m} , \bbeta_{t_m} \right\rangle
\end{equation}
for some collection $\{\bbeta_t\}_{t\in\mcT}$, where each $\bbeta_{t}\in\R^{L \times K}$, and where the process $\bY$ can be chosen fairly arbitrarily. A common and sensible choice for $\bY$ is a finite basis expansion of the state processes of the problem (i.e. $S_t^{\nubar}$, $\bZ_t^{\Q}$, etc.) and combinations of them.


The algorithm then estimates the coefficients $\bbetah$  in a sequential manner. This is done by first simulating paths of $\bY_t$ forward over the time partition $\mcT$ using the measure $\Q$, and then proceeding backwards in time from the boundary condition, solving a least-square regression problem at each time step $t_m \in \mcT$ to obtain each of the coefficients $\bbetah$. The details of this algorithm are illustrated in Algorithm~\ref{algo:LSMC-Algorithm} below. Algorithm~\ref{algo:LSMC-Algorithm} is an application of the LSMC methods that already exist for BSDEs and we point the reader to \cite{bender2012least} and \cite{gobet2005regression} for more details on the convergence rates and error bounds.

\begin{figure}[H]
  \centering
  \begin{minipage}{.7\linewidth}
    \begin{algorithm}[H]
      \SetAlgoLined
      \KwData{Simulate $\mathfrak{M}_0$ paths of ($\bY_t$,$\bZ_t^{\Q}$,$\bAhat$) over $\mcT$ using measure $\Q$}
      Set $\bbetah_{t_M} = \bm{0}^{(L \times 2)}$ \\
      Set $\bgh_{1,t_M}(\bY) = \bm{0}^{(L \times 2)}$ \\
      \For{$m = M-1 , M-2 , \dots , 1$}{
      	Set
      \begin{align*}
      \bbetah_{t_m} = \arg\min_{\beta}
      	\sum_{n=1}^{\mathfrak{M}_0} \Big( & \left\langle \bY_{t_m}^n, \bbeta \right\rangle
      \\
      & -
      \left\{
      	\bAhat_{t_m}^n \Delta
		+
		\exp\left\{ \bG_{t_m} \Delta \right\}
		(\bZ_{t_m}^{\Q,n})^{-1}
		\,
		\bZ_{t_{m+1}}^{\Q,n}\; \bgh_{1,t_{m+1}} (\bY_{t_{m+1}}^n)
    \right\}
      	 \Big)^2
      \end{align*}
      Set $\bgh_{1,t_m}(\bY) = \left\langle \bY_{t_m}^n, \bbetah_{t_m} \right\rangle$
      }
      \caption{The LSMC algorithm used to approximate the value of the process $\bg_1$ given in Equation \eqref{eq:proposition-solution-g1}.}
      \label{algo:LSMC-Algorithm}
    \end{algorithm}
  \end{minipage}
\end{figure}

As the process $\bZ^{\Q}$ is defined as a diagonal matrix of Radon-Nikodym derivatives, it is possible to re-write conditional expected value over $\Q$ in equation~\eqref{eq:g1-approx-2} in an element-wise fashion as
\begin{equation} \label{eq:g1-approx-3}
	\hg^k_{1,t_{m+1}}
	=
	\Ahat_{t_m}^k \Delta
	+
	\sum_{k^\prime \in \mfK}
	\left(\exp\left\{ \bG_{t_m} \Delta \right\}\right)_{k,k^{\prime}}
	\E^{\P^{k^\prime}}\left[ \left.
\hg^{k^\prime}_{1,t_{m+1}} \, \right \lvert \,\F_{t_m} \right],\qquad \forall k\in\mfK,
\end{equation}
where $\left( \; \cdot \;\right)_{k,k^{\prime}}$ represents element $(k,k^{\prime})$ of the matrix. The above representation allows one to modify Algorithm~\ref{algo:LSMC-Algorithm} such that it eliminates the dependence on the process $\bZ^{\Q}$ in the LSMC procedure, but at the cost of having to simulate the basis process $\bY$ across all measures $\{\Pk\}_{k\in\mfK}$. We find that in examples where simulating the process $\bZ_t^{\Q}$ is straightforward, this is much less efficient than Algorithm~\ref{algo:LSMC-Algorithm} due to the need of simulating and storing  $K$ copies of the process $\bY$. In cases where $\bZ^{\Q}$ is intractable, however, this modification may be a viable alternative for computing $\bgh_1$.

Equation~\eqref{eq:g1-approx-3} also provides additional insight into how the optimal policy is trading. As pointed out in Section~\ref{sec:Properties-of-Optimal}, the process $\bg_1$ represents the `statistical arbitrage' portion of the agent's optimal trading strategy. {Equation~\eqref{eq:g1-approx-3} further reveals that, over one step, agent's of type-$k$ trade proportionally to the sum of their best estimate of the asset's drift ($\Ahat^k_{t_m} \Delta$) and a weighted average of the expected end of period `alpha' from all sub-populations.
Hence, agents trade based on expected exogenous price movements plus what they anticipate other traders' actions to have on price.
The weights generated by the matrix $\exp\left\{ \bG_{t_m} \Delta \right\}$ serve to risk adjust the agent's own alpha trading and to adjust for the impact of other agents based on the scale of their market impacts.
}

\section{Numerical Experiments}
\label{sec:Numerical-Experiments}

This section showcases numerical experiments resulting from a particular model of differing beliefs. We first assess the performance of the LSMC algorithm presented in Section~\ref{sec:Computational-Method} by comparing, in the case of equal beliefs, to the analytical results in \cite{casgrain2018algorithmic}. The algorithm is then used to approximate and simulate a finite collection of agents trading at the mean-field Nash-equilibrium when the agents have differing beliefs.

For the remainder of the section, we assume the asset price process follows a linear mean-reverting model described in Section~\ref{sec:Example-Model-Subsection} with $K=2$ sub-populations. Define the un-impacted asset price process $F=(F_t)_{\tT}$ to be the solution to the SDE
\begin{equation} \label{eq:Unimpacted-Price-Example-Def}
	dF_t = \kappa \left( \Theta_t - F_t \right) \, dt + \sigma \,dW_t\,,
\end{equation}
where $\kappa, \sigma > 0$, $W=(W_t)_{\tT}$ is a Wiener process in both measures $\P^1$ and $\P^2$, and $\Theta=(\Theta_t)_{\tT}$ is a latent Markov chain with generator matrix $\bm{C}$ which can take one of two values in the set $\{\theta_1,\theta_2\}$. The asset price process including the price impact is then defined as having the dynamics
\begin{equation*}
	dS_t^{\bnubarN} = dF_t + \left( \lambda_1 \, p_1^{\N} \, \nubar_t^{1,\N}  +  \lambda_2 \, p_2^{\N} \, \nubar_t^{2,\N} \right) \, dt
\end{equation*}
with $\lambda_1,\lambda_2 > 0$ and $(p_k^{\N})_{k\in\mfK}$ defined in Section~\ref{sec:The-Model-Description}. We assume sub-population 1 believes the initial value of $\Theta_0$ has distribution $\bpi_0^1$, while sub-population 2 believes the initial value has distribution $\bpi_0^2$. The dynamics of the asset price process causes it to mean-revert towards the value of $\Theta_t$, which may change over the course of the trading period $[0,T]$. Furthermore, this model falls into the class of models described in Section~\ref{sec:Example-Model-Subsection}, which guarantees that the mean-field optimal solution from Theorem~\ref{thm:Final-Solution-Statement} exists and is well defined.

\subsection{The LSMC Algorithm}
\label{sec:LSMC-Examples}

To assess the LSMC algorithm described in Section~\ref{sec:Computational-Method}, we choose a special, non-trivial, case where $\bg_1$ can be computed in closed form. The case we study is when $\P^1=\P^2=\P$. This reduces the market model to one where all agents agree on the dynamics of the asset price process. We may then assess the accuracy of the approximation by comparing the results produced by the LSMC algorithm to the closed-form solution of the optimal control in \cite{casgrain2018algorithmic}.

For this particular experiment, we use the model presented in the previous section, but where the prior on the initial states of the latent process is the same for all agents. The two sub-populations of agents may, however, differ in their parameter triplet $(\Psi_k,\phi_k,a_k)$. For the experiments we use the parameters in Table~\ref{tab:sim-player-params}. The parameters chosen for this experiment match the parameters used in the simulations in Section 5 of \cite{casgrain2018algorithmic}. Due to the large value of the parameter $\Psi_k$, agents in both sub-populations are incentivized to fully liquidate their inventory positions before the end of the trading horizon. The risk-aversion parameter $\phi_k$ is $10$ times larger in sub-population 2 than in sub-population 1. This can be interpreted as a model in which agents in sub-population 2 are averse to holding any inventory and are intent on liquidating their inventories as quickly as possible, while agents in sub-population 1 do not feel such urgency and are more open to trading on alpha.
\begin{table}[htbp]
  \centering
    \begin{tabular}{r|rrll}
    \multicolumn{1}{l|}{$k$} & \multicolumn{1}{l}{$N_k$} & \multicolumn{1}{l}{$\Psi_k$} & $\phi_k$ & $a_k$ \bigstrut[b]\\
	\cline{2-5}
	1    & 20    & 10   & $10^{-2}$ & $10^{-4}$ \bigstrut[t]\\
    2    & 10    & 10   & $10^{-6}$ & $10^{-4}$ \\
    \end{tabular}%
  \caption{Population and impact parameters for the two sub-populations of agents.}
  \label{tab:sim-player-params}%
\end{table}%

We set $T=1$ to be the trading horizon for the model. The asset price process follows the Markov modulated Ornstein-Uhlenbeck dynamics in \eqref{eq:Unimpacted-Price-Example-Def},  with parameters provided in Table~\ref{tbl:sim-asset-params}. Table~\ref{tbl:sim-asset-params} also defines the parameters for the dynamics of the latent process $\Theta_t$. $\Theta_t$ is defined so that the asset price process either mean reverts to $\theta_1=4.95$ or $\theta_2=5.05$, depending on the state of $\Theta$. In this particular experiment, we set the distribution of $\Theta_0$ so that there is an equal chance of starting in each of the states. We also choose an asymmetric generator matrix so that the latent process is twice as likely to spend time in state 1 than state 2.

\begin{table}[h!]
  \centering
  \begin{tabular}{lllll}
  $\bm{\pi}_0 = \left(\begin{smallmatrix} 0.5 \\ 0.5 \end{smallmatrix}\right)$, &
  $\bm C= \left[ \begin{smallmatrix} -1 & \,\phantom{-}1 \\ \phantom{-}2 &\, -2  \end{smallmatrix} \right]$, &
  $\bm{\theta} = \{4.95, 5.05\}$, & \\
  $\kappa = 5.4$, &
  $\sigma = 0.185$, & \text{ and }
  $\lambda_{k} = 10^{-3}$
  \end{tabular}%
  \caption{The parameters used for the asset price dynamics and for the latent process.}
  \label{tbl:sim-asset-params}
\end{table}

We run Algorithm~\ref{algo:LSMC-Algorithm} using $10^4$ simulated paths over a partition of size $3600$ over the interal $[0,T]$ and compare the results from the LSMC algorithm applied to these simulated paths to the closed form solution for $\bg_1$ in \cite{casgrain2018algorithmic}. In this particular case, we set the basis process, $\bY_t$, to be a second-order monomial expansion of $(S_t^{\bnubar},\bpi_t)$ with product terms included, where we define $\bpi^i_t = \P\left(\Theta_t = \theta_i \big\lvert \F_t \right)$ for $i=1,2$. For details on how to compute such conditional probabilities, see Section~\ref{sec:Filtering-and-Smoothing-Appendix} and \cite[Section 3]{casgrain_jaimungal_2016}.}

\begin{figure}[H]
    \centering
    \includegraphics[width=0.6\textwidth]{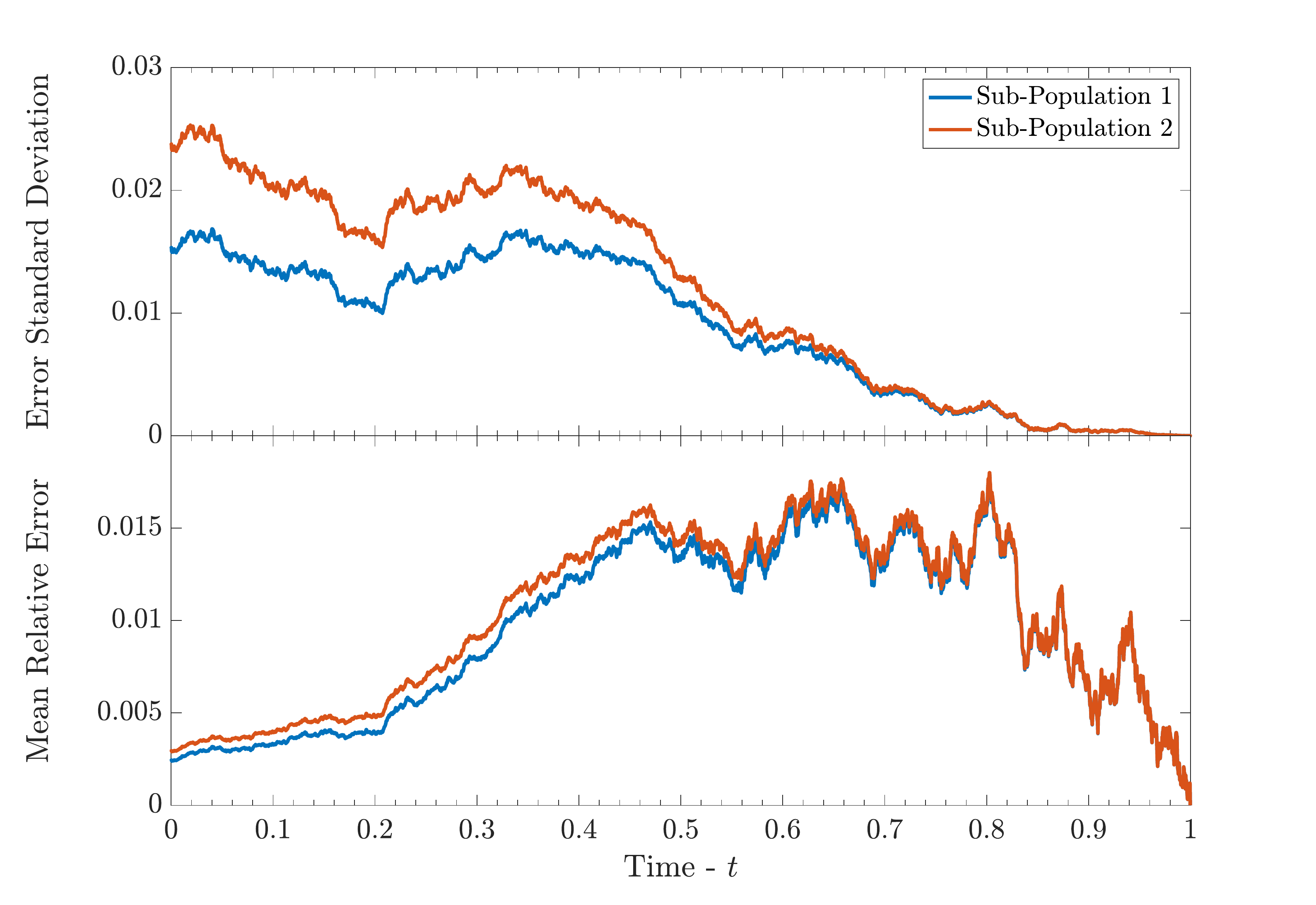}
    \caption{Error plots for the LSMC algorithm described in Section~\ref{sec:Computational-Method}. In these plots, we compare the value of the LSMC estimate, $\bgh_1$, with the true value of $\bg_1$, in a special case where we can compute $\bg_1$ in closed form. The upper panel shows the standard deviation of the error, $\text{SD}\left(\bgh_{1,t}^k - \bg_{1,t}^k\right)$ computed over $10^4$ simulations, for each $k=1,2$ over the interval $[0,T]$. The lower panel, plots the quantity $\E \left[ \lvert \bg_{1,t}^k - \bgh_{1,t} \rvert \right] \,\big{/}\, \E \left[ \lvert \bg_{1,t}^k \rvert \right]$ computed over $10^4$ simulations, and provides a measure of relative error.}
    \label{fig:LSMC-Error-Plot}
\end{figure}

Figure~\ref{fig:LSMC-Error-Plot} shows that the LSMC algorithm performs well and with a high level of accuracy with this particular model. In particular, from the lower panel, we see  that the largest relative error is about $1.5\%$, meaning that the error is reliably no more than $1.5\%$ of the absolute size of $\bg_{1}$.  We have also observed, as elsewhere in the LSMC literature such as in \cite{letourneau2016improved} and \cite{wang2009pricing}, that randomizing the initial value of the state process, $(S_0,\bpi_0)$, for the forward simulation portion of the algorithm significantly improves the estimates. Furthermore, the errors reported in this figure appear to be consistent across a wide variety of model parameters.

For the more general case in which there are different measures assigned to each population, we set $\Q=\P^1$ and we enlarge the basis process $\bY_t$ to be a monomial expansion of the forward state process $(S_t^{\nubar},\{\bpi_t^k \}_{k\in\mfK},\{\bZ_t^{\P^1,k}\}_{k\in\mfK})$. Expansions with respect to different bases, such as Laguerre or Hermite polynomials, are also possible, however, in our experiments, we find the monomial basis expansion performs well enough.

\subsection{A Simulation of the Market}
\label{sec:Market-Simulations}

In this section, we simulate the full market with agents of differing beliefs disagreement. The example continues to use the model in Section~\ref{sec:LSMC-Examples} with the parameters in Table \ref{tab:sim-player-params} and \ref{tbl:sim-asset-params}, with the exception that the distributions on $\Theta_0$ now differs across each of the sub-populations. In particular, we assume agent's in sub-population 1 believe that prior distribution over initial states is $\bm{\pi}_0^{1} = \left(\begin{smallmatrix} 0.1 \\ 0.9 \end{smallmatrix}\right)$, while the sub-population 2 believe it is $\bm{\pi}_0^{2} = \left(\begin{smallmatrix} 0.9 \\ 0.1 \end{smallmatrix}\right)$. In other words, sub-population 1 believes that the latent process will much more likely begin in the higher state, while sub-population 2 assumes the reverse. In the simulation, we also assume the starting inventory of agents in sub-population $k$ has distribution $\mfQ_0^j \sim \mathcal{N}(\bar{\mu}_k,\bar{\sigma})$, where we set $\bar{\mu}_1 = 100$, $\bar{\mu}_2 = 0$ and $\sigma = 50$. The rationale is so that the risk-averse sub-population $1$ begins the trading period long $100$ shares on average, while agents in sub-population $2$ begin the trading period with zero shares on average. Over the course of the simulation, we fix the path of the latent process to begin in the upper state and then jump down to the lower state at $t=0.5$. To compute the trading strategy of each participating agent, we use the LSMC method from the preceding section to approximate the value of $\bg_1$ and then use this value in Theorem~\ref{thm:Final-Solution-Statement} to determine the optimal trading rate of the fictitious mean-field and then each individual agent. At each time step, we compute the basis process $\bY_t$ by using a fifth-order polynomial expansion of the state process $(S_t^{\nubar},\{\bpi_t^k \}_{k\in\mfK},\{\bZ_t^{\P^1,k}\}_{k\in\mfK})$, and use the coefficients obtained by the LSMC algorithm to obtain an approximation for $\bg_{1,t}$. Computing the values of $\bpi_t^k$ and $\bZ_t^{\P^1,k}$ requires the computation of a collection of posterior probabilities at each time step. To do this, we make use of the filtering and smoothing equations which are detailed in Appendix~\ref{sec:Filtering-and-Smoothing-Appendix}.

\begin{figure}[h]
    \centering
    \begin{subfigure}[t]{0.41\textwidth}
        \centering
        \includegraphics[width=\textwidth]{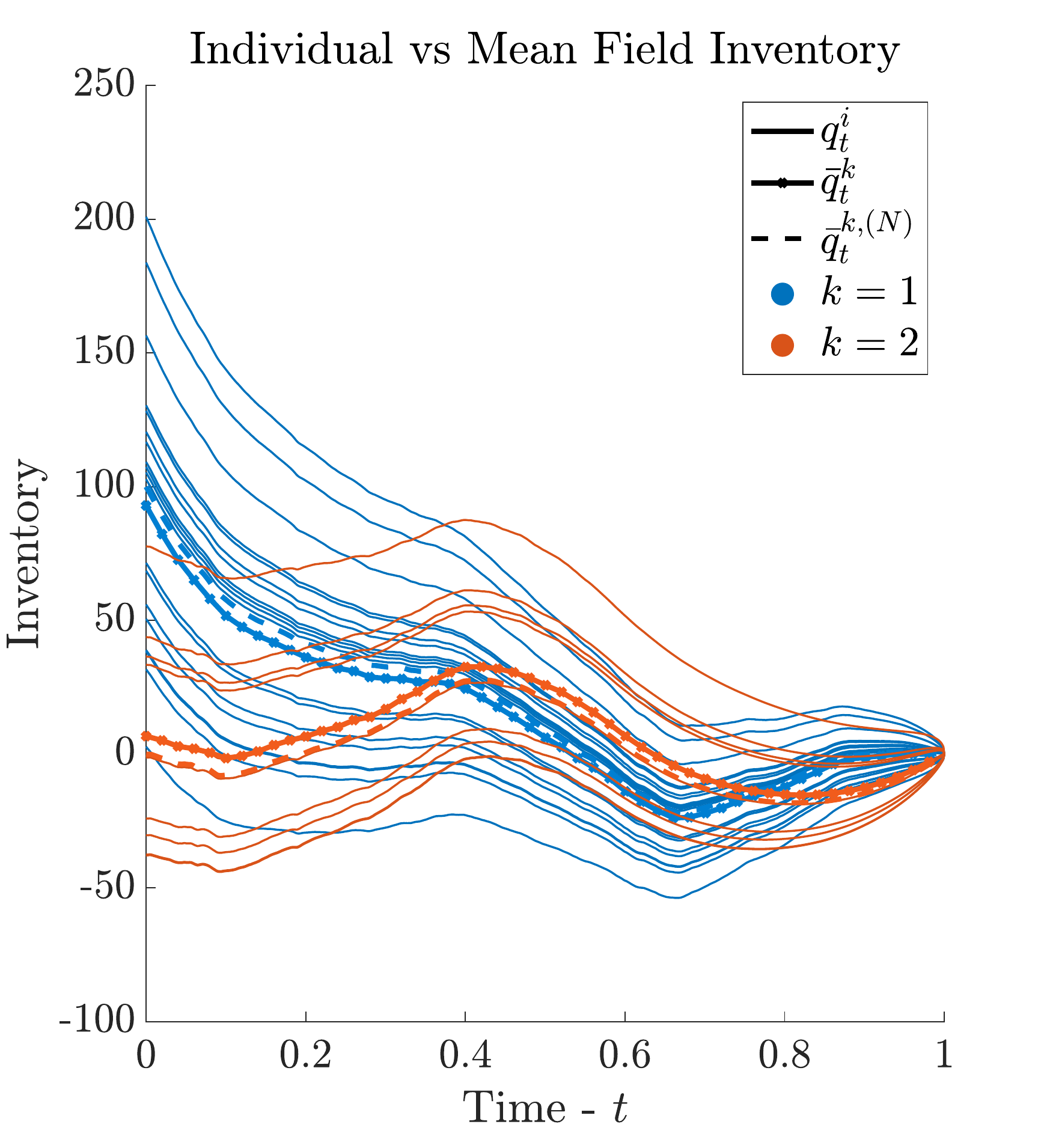}
    \end{subfigure}
    ~
    \begin{subfigure}[t]{0.41\textwidth}
        \centering
        \includegraphics[width=\textwidth]{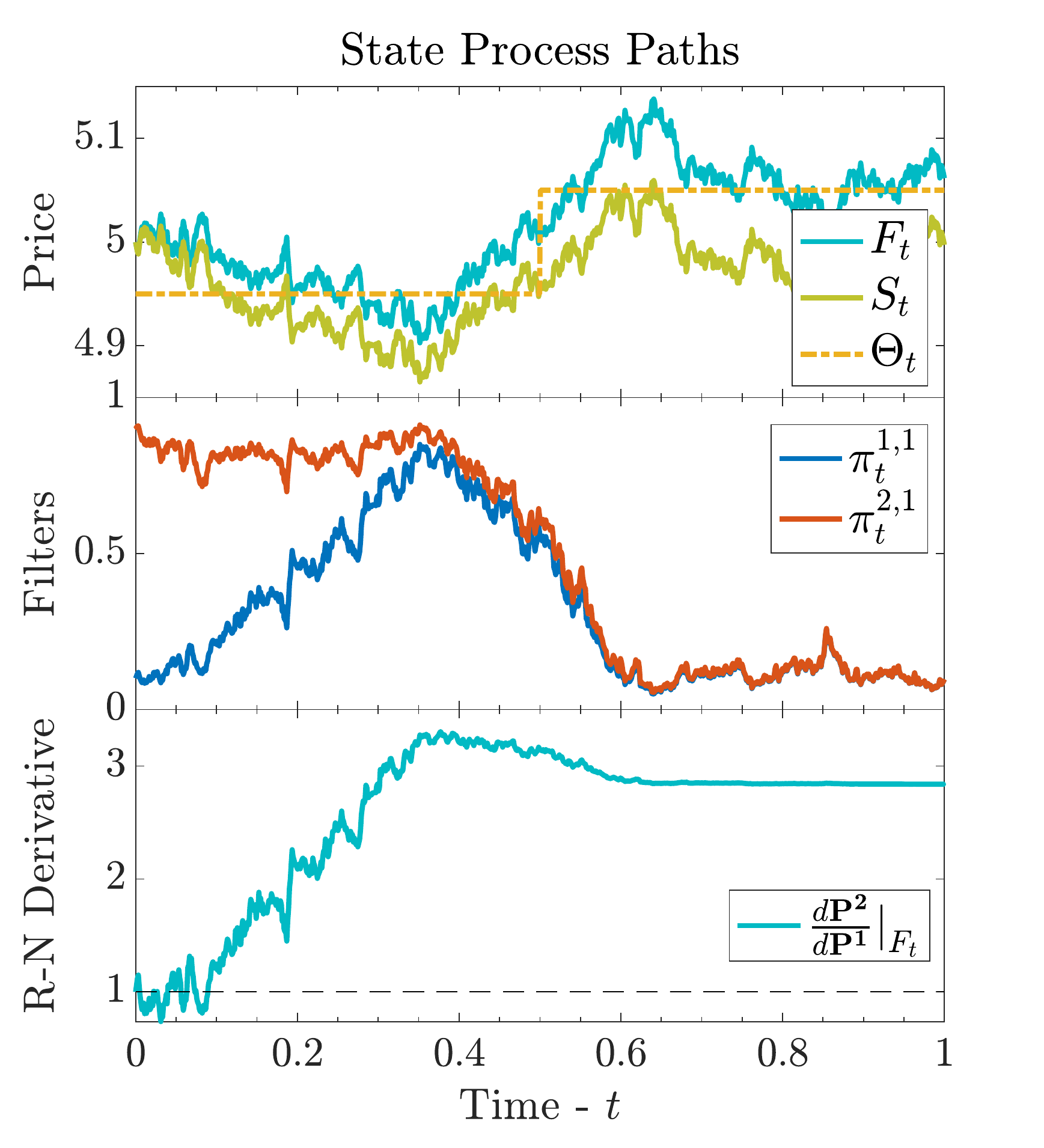}
    \end{subfigure}
    \caption{
    State processes from a single simulated scenario of the market. \textsl{Left panel}: inventory path process from all agents (sub-populations separated by color), the sub-population mean-field inventory process $\bqbar^k$, and the sub-population empirical mean inventory $\bqbar^{k,\N}$. \textsl{Right panel}: (\textit{top}) the unimpacted $F$ and impacted $S$ asset price processes, and the latent Markov chain $\Theta$; (\textit{middle}) sub-population filters $\pi_t^{k,j}=\Pk\left( \Theta_t = \theta_j \big\lvert \F_t \right)$ for the latent process state; (\textit{bottom}) the Radon-Nikodym process $Z_t^{\P^{1,2}} = \frac{d\P^2}{d\P^1}\big\lvert_{\F_t}$.}
    \label{fig:Market-Simulation-Paths}
\end{figure}

Figure~\ref{fig:Market-Simulation-Paths} shows one sample path of the simulation of all agents. The figure demonstrates a number of path-wise properties of the trading algorithm and of the beliefs of each of the sub-populations of agents. Firstly,  the left panel shows that the agents inventory paths differ significantly between the sub-populations. As mentioned in Section~\ref{sec:LSMC-Examples}, and resulting from the population parameters in Table \ref{tab:sim-player-params}, sub-population $1$ is far more risk-averse than sub-population $2$. This is reflected in the path-wise variance of their inventory.

Agents in sub-population $1$ begin long the asset on average. As these agents are risk-averse, their main concern is to unwind their position quickly. They are, however, conscious of their own expectations of the future path of the asset price as well as the expectations of sub-population $2$, which they use to adjust the rate at which their inventory is liquidated. This last effect can be seen through the variations of the inventory paths of sub-population 2 in Figure~\ref{fig:Market-Simulation-Paths}.

Agents in sub-population 2 are instead concerned with profiting from statistical arbitrage. They begin the trading period by incorrectly assigning a 90\% probability that the latent process is in the upper state. Because of this, they expect the asset price to mean-revert downwards slightly, so they begin by taking a slight short position in the asset over the time period $t\in[0,0.15]$. By time $t=0.15$, the asset price has approximately reached the lower mean reverting level. The agent expects that the asset price will now be reverting upwards in the long run, since it expects the state of the latent process to switch, which would cause the price to begin reverting upwards. Because of this, the agent begins reverting their short position into a long position in the asset over the course of the time period $t\in[0.15,0.4]$. The asset price trajectory indeed reverts upwards, and these agents gradually update their posterior state distributions. By time $t=0.4$, agents from group 2 are now confident that the latent process is in the upper state. Moreover, using the same train of logic as before, it expects the price to mean revert downwards in the long run, due to an expected switch in the latent process. Thus it gradually shifts to a short position and repeats the same process. The magnitude of the long and short positions for sub-population 2 decrease as the end of the trading period approaches. This is due to the fact that the agent is highly insentivized to completely liquidate their inventory before time $t=1$, and therefore reduces their absolute exposure so that it is easier to completely liquidate their inventory.

From the ceter-right part of Figure~\ref{fig:Market-Simulation-Paths}, we also see that the posterior distribution over latent states for each group converge to one another as time progresses. This is since, although their priors are different, the agents are able collect information so that the effect of the priors on the final posterior computation is negligible by a certain time. Furthermore, as was pointed out in the discussion following equation~\eqref{eq:g1-approx-3}, the strategies of agents from different sub-populations feed into one another. This causes agents from different sub-population to move synchronously with respect to one another, as seen in the left of Figure~\ref{fig:Market-Simulation-Paths}, where the upwards and downward variations in agent's strategies happen simultaneously.

The actions of agents from both sub-populations demonstrate that the optimal control incorporates the beliefs of all agents and weighs them against their own. The filter paths in the middle right panel of Figure~\ref{fig:Market-Simulation-Paths} show how both agents eventually learn the true value of the latent state with high confidence. And this occurs by observing the paths of the price process only, even if their beliefs on its initial state are incorrect. The Radon-Nikodym derivative path in the bottom middle panel of Figure~\ref{fig:Market-Simulation-Paths} provides a sense of how far apart are the measures for sub-populations $1$ and $2$. This process varies significantly over the course of the trading period since agents are constantly updating their estimate of the latent price process by observing order-flow and the price paths. The variation in this process also demonstrates there is a non-trivial interdependence between the actions of each agent and the beliefs of all other agents.

\subsection{The Effect of Disagreement on Markets}

Using the same latent Markov model as in Section~\ref{sec:Market-Simulations}, we study the predicted behaviour of market prices and of market participant as we vary the degree of disagreement across sub-populations. We investigate the effect of disagreement on both the volatility of market prices and the total trading turnover of market participants.

We assume $K=2$ sub-populations of equal size, each with $N_k=30$ agents. Each of sub-population has identical preferences, but differ in their beliefs of the market and set the agents' preference parameters to $\Psi_k=a_k=10^{-4}$, $\phi_k=5 \times 10^{-4}$ for $k=1,2$, and $\Psi_k=a_k$ (so that agents are not necessarily forced to arrive at time $T$ with zero inventory). The initial inventory positions of agents are drawn from $\mfQ_0^j \sim \mathcal{N}( 0, \bar{\sigma})$ for all $j\in\mfN$, with $\bar{\sigma}=50$.

The two-state latent Markov process $\Theta_t$ has generator matrix $\bm C= \boldmath{0}$,
and hence $\Theta_t=\Theta_0$ for all $t\in[0,T]$, however, $\Theta_0$ is random and inaccesible to agents. Table~\ref{tbl:sim-asset-params-disagreement} lists the assumed parameters of the mean-reverting asset price pocess, as well as the latent process.

\begin{table}[h!]
  \centering
  \begin{tabular}{llllll}
  $S_0 = 5$, &
  $\bm{\theta} = \{4.95, 5.05\}$, &
  $\kappa = 5.4$, &
  $\sigma = 0.14$,  \text{ and }
  $\lambda_{1} = \lambda_{2} = 10^{-1}$.
  \end{tabular}%
  \caption{The parameters used for the asset price dynamics to generate Figure~\ref{fig:Market-Variance-Disagreement}.}
  \label{tbl:sim-asset-params-disagreement}
\end{table}

To introduce disagreement into this setup, we assume that sub-populations have different prior beliefs on the distribution of $\Theta_0$. In particular, we assume $\pi_0^1 = \left(\begin{smallmatrix} 0.5 + \Delta \pi_0 \\ 0.5 - \Delta \pi_0 \end{smallmatrix}\right)$ and $\pi_0^2 = \left(\begin{smallmatrix} 0.5 - \Delta \pi_0 \\ 0.5 + \Delta \pi_0 \end{smallmatrix}\right)$, where $\Delta \pi_0 \in [0,0.5)$ quantifies the level of disagreement across the two sub-populations. In simulations, we assume the true probability distribution of the latent process is $\P(\Theta_0=4.95)=\P(\Theta_0=5.05)=0.5$.

\begin{figure}[h]
    \hspace{-1.2cm}
    \begin{subfigure}[t]{1.1\textwidth}
        \centering
        \includegraphics[width=\textwidth]{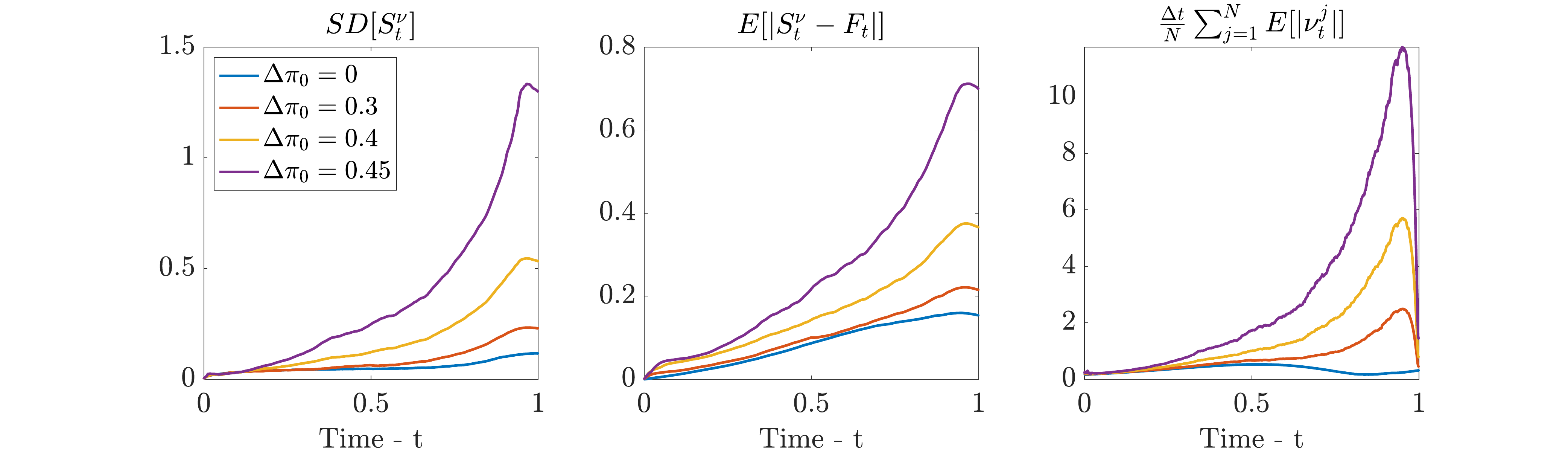}
    \end{subfigure}
    \caption{Estimated statistics of the simulated market as the degree of disagreement $\Delta \pi_0$ varies.
    \textit{left panel}: standard deviation of the asset price. \textit{center panel}: absolute deviation of asset price from un-impacted price. \textit{right panel}: average absolute trading rate.
}
    \label{fig:Market-Variance-Disagreement}
\end{figure}

Figure~\ref{fig:Market-Variance-Disagreement} shows various statistics (standard deviation of price, average absolute deviation from the un-impacted price, and average absolute trading rate) of trading activity within the market resulting from $10^4$ simulations. All three panels show a unilateral increase in all of the plotted statistics  as the level of disagreement increases. In particular, the right panel shows that trading volume increases, driving up the standard deviation of the asset price process (as see in the left panel), and driving up the net impact of trading as shown in the center panel.

Extrapolating from the results of these experiments, we can conclude that an increase in disagreement amongst a population of agents appears to increase market volume and increase asset price volatility. These observations are consistent with those seen in~\cite{bayraktar2017mini}, who also observe an increase in market activity as disagreement increases in markets.

\section{Conclusion}

This paper introduced a stochastic game for a market in which sub-populations of agents have different risk-preferences and beliefs on the model for the asset price process. By taking the infinite population limit of the model, we obtained a more tractable mean-field game (MFG) model for the market. By using tools from convex analysis we provide an FBSDE characterization of the optimal control of each agent and thus the Nash-equlibirum of the MFG. This FBSDE is high dimensional, and non-standard as the martingale components for each dimension are martingales across different probability measures. Through some change-of-measure techniques we manage to obtain a solution to this FBSDE system and for the collection of mean-field optimal controls. We also demonstrated that the MFG optimal control satisfies the $\epsilon$-Nash property, which implies that the limiting Nash-Equilibrium can be arbitrarily close to the Nash-equilibrium in the finite player game as long as the population size is large enough. Lastly, we provide a LSMC approximation to the MFG optimal control, and use it to study example simulations of markets near their Nash-equilibrium. {In a simulation setting, increasing disagreement among market participants appears to increase price volatility, price deviation from the un-impacted market price, and trading volume.}


\newpage

\appendix
\footnotesize

\section{Proofs}

\subsection{Proof of Lemma~\ref{thm:Convex-Lemma}}
\label{sec:Proof-Convex-Lemma}

\begin{proof}
	To show that the claim holds, we need to show that for any $\rho\in(0,1)$,
	\begin{equation} \label{eq:Convex-Lemma-1}
		\Hbar^{\bnubar}_j( \rho \nu + (1-\rho) \omega )
		- \rho \Hbar^{\bnubar}_j(\nu) - (1-\rho) \Hbar^{\bnubar}_j(\omega)
		> 0
	\end{equation}
	for all $\nu,\omega \in \A^j$ where $\nu_t=\omega_t$ at most on $\P\times\mu$ null sets. By noting that
	\begin{equation}
		\qj{\rho \nu + (1-\rho) \omega}_t  =
		\rho \, \qj{\nu}_t
		+
		(1-\rho) \,\qj{\omega}_t
		\;,
	\end{equation}

	we may compute the difference~\eqref{eq:Convex-Lemma-1} using the representation~\eqref{eq:Objective-Function-Alternative-Representation} of $\Hbar^{\bnubar}_j$ to obtain,
	\begin{align*}
	\text{LHS of \eqref{eq:Convex-Lemma-1}}
		=\E^\Pk\Bigg[ &\int_0^T\Big\{
		 \rho \smallqfm{\nu}^\T \bm\Gamma_k \smallqfm{\nu}  +
		(1-\rho) \smallqfm{\omega}^\T \bm\Gamma_k \smallqfm{\omega} \\
		&\quad\qquad-
		\left( \rho \smallqfm{\nu} + (1-\rho) \smallqfm{\omega} \right)^\T
		\bm\Gamma_k
		\left( \rho \smallqfm{\nu} + (1-\rho) \smallqfm{\omega} \right)\Big\} dt \Bigg]
		\\
		\text{(completing the square)}		
		= \E^\Pk \Bigg[ & \int_0^T \Big\{
		\rho\,(1-\rho)
		\left( \smallqfm{\nu} - \smallqfm{\omega} \right)^\T
		\bm\Gamma_k
		\left( \smallqfm{\nu} - \smallqfm{\omega} \right)
		\Big\} dt \Bigg]
		\;,
	\end{align*}
	where we define the matrix $\bGamma_k = \left( \begin{smallmatrix} a_k & \Psi_k \\ \Psi_k & \phi_k \end{smallmatrix}\right)$.

	By defining the terms $\Delta_t = \nu_t - \omega_t$, $q^{\Delta}_t=\qj{\nu}_t-\qj{\omega}_t$, we can expand the above expression to obtain
	\begin{equation} \label{eq:Convec-Lemma-2}
\text{LHS of \eqref{eq:Convex-Lemma-1}}		=
		\rho\,(1-\rho)\,
		\E^\Pk\left[ \int_0^T
		\left\{
		a_k \Delta_t^2 + \phi_k \left( q_t^{\Delta} \right)^2 + 2\Psi_k \Delta_t q_t^{\Delta}
		\right\} \,dt
		\right] \;.
	\end{equation}
As $\rho\in(0,1)$, we only need to demonstrate that the expected value is greater than zero. As $\phi_k \geq 0$,  the middle term in \eqref{eq:Convec-Lemma-2} is $\geq 0$. Next, let us focus on the right-most term in~\eqref{eq:Convec-Lemma-2}. Because $q_0^{\Delta} = 0$, we may write $q_t^{\Delta} = \int_0^t \Delta_u \, du$. Using integration by integrating by parts then yields
	\begin{equation}
		\E^\Pk\int_0^T 2\,\Delta_t\,q_t^{\Delta} \, dt
		=
		\E^\Pk\left[
		\left( q_T^\Delta \right)^2\right]
		\geq 0
		\;.
	\end{equation}
As $\Psi_k \geq 0$, this inequality implies the right-most term in \eqref{eq:Convec-Lemma-2} is non-negative. Lastly, notice that if $(\mathbb{P} \times \mu) ( \nu_t \neq \omega_t) > 0 $, then $(\Pk \times \mu) ( \nu_t \neq \omega_t) > 0 $ by absolute continuity of the measures, and therefore
	\begin{equation}
		\E^\Pk\left[ \int_0^T
		\Delta_t^2
		\; dt \right] >0
		\;.
	\end{equation}
As $a_k>0$, this result together with the inequality from the other two terms, shows that \eqref{eq:Convex-Lemma-1} is strictly greater than zero.
\end{proof}

\subsection{Proof of Lemma~\ref{thm:Differentiable-Lemma}}
\label{sec:Proof-Differentiable-Lemma}

\begin{proof}
	Using the definition of the G\^ateaux derivative,
	\begin{equation}
		\left\langle \D\Hbar^{\bnubar}_j(\nu) , \omega \right\rangle =
		\lim_{\epsilon\searrow0}
		\frac{\Hbar^{\bnubar}_j (\nu + \epsilon\,\omega) - \Hbar^{\bnubar}_j (\nu)}{\epsilon}
	\end{equation}
we  aim to show this limit exists and is equal to the result provided in the lemma. Using the representation for the objective $\Hbar^{\bnubar}_j$ in~\eqref{eq:Objective-Function-Alternative-Representation}, canceling out the $t=0$ terms, and using the linearity of the process $\qj{\nu}_t - \qj{\nu}_0$ in the variable $\nu$, we have
\begin{equation}
\begin{split}
	\Hbar^{\bnubar}_j(\nu+\epsilon\,\omega) - \Hbar^{\bnubar}_j(\nu)
	=& \quad
	\epsilon \,
	\E^\Pk\left[
	\int_0^T
	\left\{
	( \qj{\omega}_t - \qj{\omega}_0 ) (A_t^k + \blambda_k^\T \, \bnubar_t)
	- 2
	\left(\begin{smallmatrix} \nu_t \\ \qj{\nu}_t \end{smallmatrix}\right)^\T
	\bm\Gamma_k	
	\left(\begin{smallmatrix} \omega_t \\ \qj{\omega}_t - \qj{\omega}_0 \end{smallmatrix}\right)
	\right\} dt	\right]
	\\&-
	\epsilon^2\,
	\E^\Pk\left[
	\int_0^T
	\left(\begin{smallmatrix} \omega_t \\ \qj{\omega}_t - \qj{\omega}_0 \end{smallmatrix}\right)^\T
	\bm\Gamma_k
	\left(\begin{smallmatrix} \omega_t \\ \qj{\omega}_t - \qj{\omega}_0 \end{smallmatrix}\right)
	\, dt
	\right]
	\;,
\end{split}
\end{equation}
where $\bm\Gamma_k =
	\begin{pmatrix}
		a_k & \Psi_k \\ \Psi_k & \phi_k
	\end{pmatrix}
$.
Dividing by $\epsilon$ and taking the limit yields
\begin{equation} \label{eq:j-gateax-1}
	\left\langle \D \Hbar^{\bnubar}_j(\nu) , \omega \right\rangle
	=
	\E^\Pk\left[
	\int_0^T
	\left\{
	(\qj{\omega}_t - \qj{\omega}_0 )( A_t^k + \blambda_k^\T \, \bnubar_t )
	- 2
	\left(\begin{smallmatrix} \nu_t \\ \qj{\nu}_t \end{smallmatrix}\right)^\T
	\bm\Gamma_k	
	\left(\begin{smallmatrix} \omega_t \\ \qj{\omega}_t - \qj{\omega}_0 \end{smallmatrix}\right)
	\right\} \, dt
	\right]
	\;.
\end{equation}
Expanding the right part of the integrand in \eqref{eq:j-gateax-1} and re-grouping terms,
\begin{equation} \label{eq:j-gateax-2}
\begin{split}
	\left\langle \D \Hbar^{\bnubar}_j(\nu) , \omega \right\rangle
	=
	&\, \E^\Pk\left[\;
	\int_0^T
	( \qj{\omega}_t - \qj{\omega}_0 ) \left( A_t^k + \blambda_k^\T \, \bnubar_t - 2 ( \phi_k \qj{\nu}_t + \Psi_k \nu_t ) \right)
	 dt
\right.
\\
& \qquad \left.
-2	\int_0^T
	\omega_t \left(
	a_k \nu_t + \Psi_k \qj{\nu}_t
	\right)  dt
	\right].
\end{split}
\end{equation}

As $\nu,\omega\in\A^j$ and $\nubar,\Ahat\in\HT$, the sufficient conditions for Fubini's theorem are met. Applying Fubini's theorem, the tower property and the fact that $\omega_t$ is $\F_t^j$-measurable,

\begin{align*}
\left\langle \D \Hbar^{\bnubar}_j(\nu) , \omega \right\rangle	 &=
	\int_0^T
	\E^\Pk\left[
	\omega_t \, \left(
	-2a_k \nu_t - 2 \Psi_k \qj{\nu}_T
	+ \int_t^T \left\{  A_u^k + \blambda_k^\T \, \bnubar_u - 2\phi_k \qj{\nu}_u \right\} \,du  \right)
	\right] \, dt
	\\ &=
 	\int_0^T
	\E^\Pk\left[
	\omega_t \, \left(
	-2a_k \nu_t - 2 \Psi_k \qj{\nu}_T
	+ \Ek\left[
	\int_t^T \left\{  A_u^k + \blambda_k^\T \, \bnubar_u - 2\phi_k \qj{\nu}_u \right\} \,du
	\; \Bigg\lvert \F_t^j \right] \right)
	\right] \, dt
	\\ &=
 	\int_0^T
	\E^\Pk\left[
	\omega_t \, \left(
	-2a_k \nu_t - 2 \Psi_k \qj{\nu}_T
	+
	\int_t^T \Ek\left[   A_u^k + \blambda_k^\T \, \bnubar_u - 2\phi_k \qj{\nu}_u  \; \lvert \F_t^j \right] \,du
	 \right)
	\right] \, dt
	\\ &=
 	\int_0^T
	\E^\Pk\left[
	\omega_t \, \left(
	-2a_k \nu_t - 2 \Psi_k \qj{\nu}_T
	+
	\int_t^T \Ek\left[ \Ek\left[   A_u^k + \blambda_k^\T \, \bnubar_u \lvert \F_u^j \right] - 2\phi_k \qj{\nu}_u  \; \lvert \F_t^j \right] \,du
	 \right)
	\right] \, dt
	\\ &=
 	\int_0^T
	\E^\Pk\left[
	\omega_t \, \left(
	-2a_k \nu_t - 2 \Psi_k \qj{\nu}_T
	+
	\int_t^T \Ek\left[   A_u^k + \blambda_k^\T \, \bnubar_u - 2\phi_k \qj{\nu}_u  \; \lvert \F_t^j \right] \,du
	 \right)
	\right] \, dt
	\\ &=
 	\int_0^T
	\E^\Pk\left[
	\omega_t \, \left(
	-2a_k \nu_t - 2 \Psi_k \qj{\nu}_T
	+
	\int_t^T \left\{\Ek\left[   A_u^k + \blambda_k^\T \, \bnubar_u \lvert \F_u^j \right] - 2\phi_k \qj{\nu}_u   \,du\right\}
	 \right)
	\right] \, dt
\end{align*}

which gives the desired result.
\end{proof}

\subsection{Proof of Theorem~\ref{thm:FBSDE-Optimality-Condition}}
\label{sec:Proof-FBSDE-Optimality-Condition}

\begin{proof}
	By using lemmas~\ref{thm:Convex-Lemma} and~\ref{thm:Differentiable-Lemma} we may apply the results of \cite[Section 5]{ekeland1999convex} which state that, for each $j\in\mfJ$
	\begin{equation}
		\langle \D \Hbar_j^{\bnubar} (\nujst) , \omega \rangle = 0, \; \forall \omega\in\A^j \qquad \Leftrightarrow 	\qquad	\nujst = \argmax_{\nu\in\A^j} \Hbar^{\bnubar}_j(\nu)
		\;.
\end{equation}
Further, the strict concavity of $\Hbar$ implies that $\nujst$ is unique up to $\P\times\mu$ null sets. Therefore we need only demonstrate that $\langle \D \Hbar^{\bnubar}_j(\nujst) , \omega \rangle = 0$, $\forall \omega\in\A^j$, if and only $\nujst$ is the solution to the FBSDE~\eqref{eq:Optimality-Consistency-Condition}.

	\textsl{Sufficiency:} Suppose that $\nujst$ is the solution to the FBSDE~\eqref{eq:Optimality-Consistency-Condition} and that $\nujst\in\HT$. We now show that $\nujst\in\A^j$ and that $\langle \D \Hbar^{\bnubar}_j(\nujst) , \omega \rangle = 0$, $\forall \omega\in\A^j$.

First, the solution to the FBSDE may be represented implicitly as
	\begin{equation} \label{eq:proof-prop-optim-FBSDE-eq0}
		2 \, a_k \, \nujst_t =
\E^\Pk \left[
\left.
- 2 \,\Psi_k \,\qj{\nujst}_T
	+ \int_t^T \left\{
	\Ek\!\left[\,
	A_u^k + \blambda_k^\T\, \bnubar_u \,\lvert\, \F_u^j
	\,\right]
	- 2\phi_k\, \qj{\nujst}_u  \right\} \,du \,
\right \lvert \,\F_t^j \right]
	\;,
	\end{equation}
	which demonstrates that $\nujst$ is $\F^j$-adapted. Therefore, since $\nujst\in\HT$ and $\nujst$ is $\F^j$-adapted, we have that $\nujst \in \A^j$. Second, by inserting~\eqref{eq:proof-prop-optim-FBSDE-eq0} into the expression for the G\^ateaux derivative~\eqref{eq:Gateaux-Derivative-Expression} from Lemma~\ref{thm:Differentiable-Lemma} and using the tower property, we find that it vanishes almost surely.

	\textsl{Necessity:} Suppose that $\langle \D \Hbar^{\bnubar}_j(\nujst) , \omega \rangle = 0$, $\forall\;\omega\in\A^j$, then
\begin{equation}  \label{eq:Proof-Optimality-Eq-1}
\E^\Pk\left[
\left.
-2a_k \nujst_t - 2 \Psi_k \qj{\nujst}_T
+ \int_t^T \left\{ \Ek\left[ A^k_u + \blambda_k^\T\, \bnubar_u \,\lvert \, \F_u^j \right] - 2\phi_k \qj{\nujst}_u  \right\} \,du \,\right\lvert \,\F_t^j \right] = 0, \qquad\P\times\mu\;a.e.
\end{equation}
To see this, suppose that $\langle \D \Hbar^{\bnubar}_j(\nujst) , \omega \rangle = 0$ for all $\omega\in\A^j$, but \eqref{eq:Proof-Optimality-Eq-1} does not hold. Then, choose $\tomega=(\tomega_t)_\tT$ s.t.,
	\begin{equation} \label{eq:proof-prop-optim-FBSDE-eq2}
		\tomega_t = \E^\Pk \left[
\left.
	-2a_k \nujst_t - 2 \Psi_k \qj{\nujst}_T
	+ \int_t^T \left\{ \Ek\left[ A^k_u + \blambda_k^\T\, \bnubar_u \,\lvert \, \F_u^j \right]- 2\phi_k \qj{\nujst}_u  \right\} \,du \,\right\lvert \,\F_t^j \right]
	\;.
	\end{equation}
Such $\tomega$ is $\F^j$-adapted by its very definition. Second, as $\nubark$, $\nujst$, $A^k \in \HT$, Jensen's and the triangle inequality applied to \eqref{eq:proof-prop-optim-FBSDE-eq2} implies the bound
	\begin{align*}
		\E^\Pk \left[\int_0^T ( \tomega_t )^2 \, dt\right]
		&\leq
		C_k \left(
		\E^\Pk \left[\int_0^T ( \nujst_t )^2 \, dt\right]
		\, +
		\E^\Pk \left[\int_0^T \left( (A_t^k)^2 + \lambda^2 (\nubar_t)^2 \right) \, dt\right]\right)
<\infty
		\;,
	\end{align*}
where the constant $C_k=4 \left( 1+  a_k^2 + T \, \Psi_k^2 + T^2 \phi_k^2  \right)$. Hence, $\tomega\in \HT$ and therefore $\tomega\in\A^j$. Inserting this choice of $\tomega$ into the expression for the  G\^ateaux derivative~\eqref{eq:Gateaux-Derivative-Expression}, we see that $\langle \D\Hbar^{\bnubar}_j(\nujst) , \tomega \rangle > 0$, and hence contradicts the assumption that $\langle \D \Hbar^{\bnubar}_j(\nujst) , \omega \rangle = 0$, $\forall\omega\in\A^j$.

Thus, using \eqref{eq:Proof-Optimality-Eq-1} and noting that $\nujst_t$ is $\F^j$-adapted, using the tower property, we may write
	\begin{equation}
		2 \, a_k \, \nujst_t = \E^\Pk \left[\left. - 2 \,\Psi_k \,\qj{\nujst}_T
	+ \int_t^T \left\{ \E^\Pk \left[\left.A_u^k + \blambda_k^\T \, \bnubar_u \,\right\lvert \,\F_u^j \right] - 2\,\phi_k \,\qj{\nujst}_u  \right\} \,du \,\right\lvert \, \F_t^j \right]
	\;,
	\end{equation}
	and
	\begin{equation}
		2 \, a_k \, \mMbar^j_t = \E^\Pk \left[ \left. - 2 \,\Psi_k \,\qj{\nujst}_T
	+ \int_0^T \left\{\E^\Pk \left[\left.A_u^k + \blambda_k^\T \, \bnubar_u \,\right\lvert \F_u^j \right] - 2\,\phi_k \,\qj{\nujst}_u  \right\} \,du \,\right\lvert \,\F_t^j \right]
	\;,
	\end{equation}
	which solves the FBSDE in the statement of the proposition.
\end{proof}

\subsection{Proof of Theorem~\ref{thm:Solution-g1-g2-h2}}
\label{sec:Proof-Solution-g1-g2-h2}

We separate this proof in 3 parts, corresponding to each of the claims of the proposition.

\textbf{Part (I):}

To obtain the solution to $\bg_{1}$, we first compute the SDE for $\bmcE^{\Q} \bg_{1}$, using the SDE for $\bg_{1}$ in Equation~\eqref{eq:g1-BSDE} and the SDE of $\bmcE^{\Q}$ in Equation~\eqref{eq:proposition-mcE}. After expanding the SDE for $\bmcE^{\Q} \bg_{1}$, and grouping terms, we find
	\begin{equation} \label{eq:Proof-Prop-g1-g2-h2-1}
		-d\left( \bmcE^{\Q}_t \bg_{1,t} \right)
		=
		\bmcE^{\Q}_t \bAhat_t \, dt
		- \bmcE^{\Q}_t
		\left\{
	    d\bmcMbar_t - (\bZ_t^{\Q})^{-1} d\left[ \bZ^{\Q}, \bmcMbar \, \right]_t
	   + (\bZ_t^{\Q})^{-1} d\bZ_t^{\Q} \, \bmcE^{\Q}_t \bg_{1,t}
		\}
		\right\}
		\;.
	\end{equation}
As $\bZ^{\Q}$ is a Radon-Nikodym derivative process, it must be a $\Q$-martingale, and by extension, the term $\bmcE^{\Q}_t(\bZ_t^{\Q})^{-1} d\bZ_t^{\Q} \, \bmcE^{\Q}_t \bg_{1,t}$ is the increment of a $\Q$-martingale. Next, by the Girsanov-Meyer theorem \cite{protter2005stochastic}[Chapter III, Thm. 35], the remainder of the terms in the curly brackets of Equation~\eqref{eq:Proof-Prop-g1-g2-h2-1} sum to the increment of a $\Q$-martingale. Because of this, we may re-write the BSDE for $\bmcE^{\Q}_t \bg_{1,t}$ as
	\begin{equation}
		-d\left( \bmcE^{\Q}_t \bg_{1,t} \right)
		=
		\bmcE^{\Q}_t \bAhat_t \, dt
		- d\bmcMt_t
		\;,
	\end{equation}
	for some martingale term $\bmcMt$. Using this last result, we may write out the implicit form of the solution as
	\begin{equation}
		\bmcE_t^{\Q} \, \bg_{1,t} =
		\E^\Q \left[\left.
		\int_t^T \, \bmcE_u^{\Q} \, \bAhat_u \, du \,\right\lvert \,\F_t^j
		\right]
		\;.
	\end{equation}
	Lastly, multiplying the result on both sides by $(\bmcE_t^{\Q})^{-1}$, we obtain the stated solution.

\textbf{Part (II):}

The ODE~\eqref{eq:g2-ODE} is a matrix-valued non-symmetric Riccati-type ODE. We prove the claims concerning the ODE~\eqref{eq:g2-ODE} by applying theorems and tools for non-symmetric Riccati ODEs in~\cite{freiling2000non} and \cite{freiling2002survey}. Firstly, define $\bgt_{2,t}=\bg_{2,T-t}$.
We show that all of the claims hold for $\bgt_{2,t}$, and hence also for $\bg_{2,t}$.

From ODE~\eqref{eq:g2-ODE}
	\begin{equation} \label{eq:Proof-g2-solution-1}
\left\{
		\begin{array}{rl}
		\partial_t \bgt_{2,t} \!\!\!&=
		 \left( \bm{\Lambda} + \bgt_{2,t} \right) \left( 2 \ba\right)^{-1} \bgt_{2,t}
		- 2\bm{\phi}
		\\
		\bgt_{2,0} \!\!\!&= -2\bm{\Psi}
		\end{array}
\right.
	\;.
	\end{equation}
Next we aim to use Theorem~2.3 of \cite{freiling2000non} on $\tilde\bg_{2,t}$ to prove existence and boundedness of a solution. Using the notation in \cite{freiling2000non}, define
	\begin{equation}
		B_{11} = \bm{0} ,\; B_{12} = -J ,\;
		B_{21} = - 2\bm{\phi}\;, B_{22} = \bm{\Lambda} J\;,
	\end{equation}
	and $W_0 = -2\bm{\Psi}$, where $J = (2\ba)^{-1}$. To meet the requirements of Theorem~2.3 in \cite{freiling2000non}, we must find $C,D \in \mathbb{R}^{K\times K}$, $C=C^\T$ so that $L + L^\T \leq 0$ and $C + D W_0 + W_0^\T D^\T > 0$, where
	\begin{equation}
		L =
		\begin{pmatrix}
			- 2 D\bm{\phi} & -C J + D \bm{\Lambda} J \\
			0 & -J^\T D
		\end{pmatrix}
		\;.
	\end{equation}

	Let $D = \bm{I}^{(K\times K)}$ and $C = 5\bm{\Psi}$. With these choices of $C,D$, and using the fact that $\Psi$ is a diagonal matrix with positive entries, we find that
	\begin{equation}
		C + D W_0 + W_0^\T D^\T = \bm{\Psi} > 0
		\;,
	\end{equation}
	which meets one of the necessary conditions. The choices of $C$ and $D$ also imply that the matrix $L$ takes the form
	\begin{equation}
		L =
		\begin{pmatrix}
			- 2 \bm{\phi} & - (5 \bm{\Psi} + \bm{\Lambda}) J \\
			0 & -J
		\end{pmatrix}
		\;.
	\end{equation}
Next, as $\det(L) = \det(-2\bm{\phi}) \times \det(-J)$, the set of eigenvalues of $L$ is the union of the set of eigenvalues of $-2\bm{\phi}$ and those of $-J$. Because $-2\bm{\phi} \leq 0$ and $-J < 0$, all eigenvalues of $L$ are guaranteed to be non-positive, and at least one of them is guaranteed to be non-zero, implying that $L<0$. Hence, $L + L^\T < 0$ which meets the second condition of Theorem 2.3 of \cite{freiling2000non}, and guarantees the existence of a solution to the ODE~\eqref{eq:Proof-g2-solution-1} and hence of \eqref{eq:g2-ODE}.

As the solution to $\bm{g}_{2,t}$ exists and is continuous on the interval $[0,T]$, it follows that it is also bounded on this interval. Since the solution is guaranteed to exist and to be bounded, we may apply \cite[Thm 3.1]{freiling2002survey}, which guarantees that the solution is unique and takes the form~\eqref{eqn:proposition-solution-g2}, as desired.


\textbf{Part (III):}

The reader may verify that the presented solution for the Ricatti ODE~\eqref{eq:h2k-ODE-Equation} is valid. Moreover, it is also easy to verify that the solution is bounded and continuous in the interval $[0,T]$. All that remains is to show that $h_{2,t}^k \leq 0 $ for all $\tT$. If we notice that since $t<T$ and $\gamma_k\geq 0$ that $\sinh(-\gamma_k (T-t)) \leq 0$ and $\cosh(-\gamma_k (T-t)) \geq 1$. As $\xi_k, \Psi_k \geq 0$ we then get that
\begin{equation}
	\frac{
		\Psi_k \cosh\left( -\gamma_k (T-t)\ \right)
		- \xi_k \sinh\left( -\gamma_k (T-t)\ \right) }{
		 \xi_k \cosh\left( -\gamma_k (T-t)\ \right)
		- \Psi_k \sinh\left( -\gamma_k (T-t)\ \right)
		}
		\geq 0
		\;,
\end{equation}
and the desired result follows. \qed

\subsection{Proof of Theorem~\ref{thm:Final-Solution-Statement}}
\label{sec:Proof-Final-Solution-Statement}

To demonstrate the claim of the theorem, we need to show that the optimality conditions of Theorem~\eqref{thm:FBSDE-Optimality-Condition} are fulfilled. As demonstrated in Section~\ref{sec:Solving-The-Optimality-FBSDE}, if there exists solutions to the Ricatti-type ODEs~\eqref{eq:h2k-ODE-Equation} for $\{h_{2,t}^k\}_{k\in\mfK}$, a matrix-valued Ricatti-type ODE~\ref{eq:g2-ODE} for $\bg_{2,t}$ as well as the vector-valued BSDE~\eqref{eq:g1-BSDE} for $\bg_{1,t}$, then the solution to the optimality FBSDE~\eqref{eq:Optimal-FBSDE-System} follows the exact form presented in the statement of this theorem. In Theorem~\ref{thm:Solution-g1-g2-h2}, we showed that there exist solutions to these FBSDEs, and hence the solution to the optimality FBSDE of Theorem~\eqref{thm:FBSDE-Optimality-Condition} is solved.

All that remains to be shown is that the solution to the optimality FBSDE also belongs to an individual agent's set of admissible strategies, $\A^j$ and that the consistency conditions are met.

First, we show that $\nujst\in\A^j$. To do this, we must demonstrate that $\nujst$ is $\F^j$-predictable and contained in $\HT$. By the definition of $\bZ^\Q$ in equation~\eqref{eq:Def-ZtQ}, it is an $\F$-adapted process, and by extension $\bmcE_t^{\Q}$ must also be $\F$-predictable. Therefore, by the definition of the conditional expected value, the solution to $\bg_{1,t}$ presented in Theorem~\ref{thm:Solution-g1-g2-h2} must be $\F$-predictable, and hence the mean-field processes $\{\nubark\}_{k\in\mfK}$ must all be $\F$-predictable as well. Lastly, since $\nujst_t = \nubark_t + \frac{h_{2,t}^k}{2 a_k} ( \qj{\nujst}_t - \qbark{\nubark}_t )$ and since $\qbark{\nubark}_t$ is $\F$-predictable, and since $h_{2,t}^k$ is deterministic, we have that $\nujst_t$ must be $\F^j$-adapted.

Next, we must show that $\nujst\in\HT$. Noting that $d\bqbar^{\bnubarst}_t = \bnubarst_t \, dt = (\bg_{1,t} + \bg_{2,t} \bqbar^{\bnubarst}_t )\, dt$ and that $\bqbar_0^{\bnubarst} = (\mbar_k)_{k\in\mfK}=\bm{\mbar}$, we can solve for $\bqbar_t$ directly as
\begin{equation}
	\bqbar^{\bnubarst}_t = \bmcE\left(\int_0^t \bg_{2,s} \, ds\right) \, \bm{\mbar}
	+ \int_0^t \bmcE\left(\int_0^s \bg_{2,s} \, ds\right) \bg_{1,s} \, ds
	\;,
\end{equation}
where $\bmcE\left(\int_0^t \bg_{2,s} \, ds\right)$ is the solution to the time-ordered matrix exponential of $\bg_{2,s}$. Thus by Yonge's inequality and the boundedness of $\bg_{2,t}$,
\begin{align}
	\E^{\Pk} \int_0^T \lVert \bqbar^{\bnubarst}_u \rVert^2 \, du
	&\leq 2 \left(
	\nrm{\bmbar}^2 \,
	\int_0^T \OPnrm{\bmcE\left(\int_0^t \bg_{2,s} \, ds\right)}^2
	\, ds
	+
	T \,
	\int_0^T
	\OPnrm{\bmcE\left(\int_0^s \bg_{2,s} \, ds\right)}
	\nrm{\bg_{1,s}}
	\, ds
	\right)
	\\ &\leq
	C_0 + C_1 	
	\int_0^T
	\nrm{\bg_{1,s}}
	\, ds < \infty
	\;,
\end{align}
for some $C_0,C_1>0$, where $\lVert \cdot \rVert_2$ represents the $\ell^2$ operator norm, . Hence, $\bqbar^{\bnubarst}\in\HT$.

Next, using this last fact, if we compute the expected integrated squared norm of $\bnubar$ over $[0,T]$, we find that
\begin{align}
	\E^{\Pk} \int_0^T \lVert \bnubarst_u \rVert^2 \, du
	&=
	\E^{\Pk} \int_0^T \left\lVert
	\bg_{1,t} + \g_{2,t} \, \bqbar_u^{\bnubar}
	\right\rVert^2 \, du
	\\ &\leq 2
	\left(
	\E^{\Pk} \int_0^T \left\lVert
	\bg_{1,t}
	\right\rVert^2 \, du
	+
	\E^{\Pk} \int_0^T \lVert
	\g_{2,t} \rVert_{2}^2
	\,
	\lVert \bqbar_u^{\bnubar}
	\rVert^2 \, du
	\right)
	\\ &\leq
	C_2
	+
	C_3 \,
	\E^{\Pk} \int_0^T
	\lVert \bqbar_u^{\bnubar}
	\rVert^2 \, du
	< \infty
	\;,
\end{align}
for some constants $C_2,C_3>0$ and where in the third line of the inequality we use the fact that the function $\bg_{2,t}$ is bounded over the interval $[0,T]$ and the fact that $\bg_1\in\HT$ (as stated in the conditions of the theorem). Hence, $\bnubar\in\HT$.

Next, notice that
\begin{equation}
	\E^{\Pk} \int_0^T \lvert \nujst_u \rvert^2 \, du
	\leq
	2 \left(
	\E^{\Pk} \int_0^T \lvert \nubarkst_u \rvert^2 \, du
	+
	\E^{\Pk} \int_0^T \lvert \nujst_u - \nubarkst_u \rvert^2 \, du
	\right)
	\;.
\end{equation}
As $\nubarkst\in\HT$, the above demonstrates that it is sufficient to show that $\nujst_u - \nubarkst_u \in \HT$ to guarantee that $\nujst\in\HT$.

Similarly to $\bqbar_t$, if we notice that $d(\qj{\nujst}_t - \qbark{\nubarkst}_t) = (\nujst_t - \nubarkst_t) \, dt = \frac{h_{2,t}^k}{2 a_k} (\qj{\nujst}_t - \qbark{\nubarkst}_t) \, dt$ and that $(\qj{\nujst}_0 - \qbark{\nubarkst}_0)= \mfQ_0^j - \mbar_k$, we can solve exactly for this difference as
\begin{equation} \label{eq:Solution-Diff-Inventories}
	\qj{\nujst}_t - \qbark{\nubarkst}_t = \left( \mfQ_0^j - \mbar_k \right) e^{\int_0^t \frac{h_{2,t}^k}{2a_k}}
	\;.
\end{equation}
As $\E^\Pk (\mfQ_0^j)^2 < \infty$ and $h_{2,t}^k \leq 0$ it is easy to see that $\left(\qj{\nujst}_t - \qbark{\nubarkst}_t\right) \in \HT$.

Using the solution to $\nujst$ and using the result above,
\begin{equation}
	\E^{\Pk} \int_t^T \lvert \nujst_u - \nubarkst_u \rvert^2 \, du
	\leq
	\frac{\sup_{\tT } (h_{2,t}^k)^2}{4 a_k}
	\E^{\Pk} \int_t^T \left\lvert \qj{\nujst}_t - \qbark{\nubarkst}_t  \right\rvert^2 \, du
	<\infty
	\;,
\end{equation}
where we use $h_{2,t}^k < 0$ in line 3. Hence, $\nujst_u - \nubarkst_u \in \HT$ and $\nujst\in\HT$. Thus we have demonstrated that $\nujst$ is $\F^j$-predictable, and that $\nujst\in\HT$. Therefore $\nujst\in\HT$.

Lastly, we demonstrate that the consistency conditions are met. In other words, we must show that
\begin{equation}
	\nubarkst_t = \lim_{N\rightarrow\infty} \frac{1}{N_k^{\N}} \sum_{j\in\K_k^{\N}} \nujst_t
\end{equation}
for all $\tT$ and for all $k\in\mfK$. Using the solution to $\qj{\nujst}_t - \qbark{\nubarkst}_t$, we find that
\begin{equation}
	\lim_{N\rightarrow\infty} \frac{1}{N_k^{\N}} \sum_{j\in\K_k^{\N}} \left( \nujst_t - \nubarkst_t \right)
	 =
	e^{\int_0^t \frac{h_{2,t}^k}{2a_k}}
	\lim_{N\rightarrow\infty} \frac{1}{N_k^{\N}} \sum_{j\in\K_k^{\N}} \left( \mfQ_0^j - \mbar_k \right)
	\;.
\end{equation}
Now since the $\mfQ_0^j$ have bounded variance, the limit on the right vanishes as $N\rightarrow\infty$ by the law of large numbers. Hence, the consistency conditions are met.

The last statement follows from Theorem~\ref{thm:FBSDE-Optimality-Condition}.
\qed

\subsection{Proof of Proposition~\ref{prop:Z-Prior-Expression-Prop}}
\label{sec:Proof-Z-Prior-Expression-Prop}

\begin{proof}
	Let us first note that we may represent each element in $\bZ_t^{\Pk}$ as a Doob-martingale since
\begin{equation}
	\frac{d\P^{k^\prime}}{d\Pk} \Big \lvert_{\F_t}
	=
	\E^\Pk \left[
	\frac{d\P^{k^\prime}}{d\Pk} \;\Big \lvert_{\G_T}
	\Bigg\lvert \, \F_t \right]
	\;.
\end{equation}
Recall the global filtration $\mathfrak{G}=(\G_t)_{t\in[0,T]}$ introduced in Section~\ref{sec:The-State-Processes}, with the property that $\G_t \supseteq \bigvee_{j\in\mfN} \F_t^{j}$ for all $t\in[0,T]$. By this definition, we have that
\begin{equation}
	\bZ_t^{\Pk} =
	\text{diag}
	\left(
	\E^\Pk \left[
	\frac{d\P^{k^\prime}}{d\Pk} \Big \lvert_{\G_T}
	 \;
	\Bigg\lvert \, \F_t \right]
	\right)_{k^{\prime} \in \mfK}
	\;.
\end{equation}
Each term $\frac{d\P^{k^\prime}}{d\Pk} \Big \lvert_{\G_T}$ is in fact quite easy to compute. Let us remember that only difference between measures $d\Pk$ and $d\P^{k^\prime}$ is the law of the initial value of the latent process, $\Theta_0$. For each $k\in\mfK$, we have that $\Pk(\Theta_0 = \theta_j) = \pi_0^{k,j}$. Thus, we may write the expression for each Radon-Nikodym derivative conditional on $\G_T$ as
\begin{equation}
	\frac{d\P^{k^\prime}}{d\Pk} \Big \lvert_{\G_T} = \sum_{i\in\mfJ}
	\frac{\pi^{k^{\prime},i}_0}{\pi^{k,i}_0} \1{\Theta_0=\theta_i}
	\;.	
\end{equation}
As each $\frac{\pi^{k^{\prime},i}_0}{\pi^{k,i}_0}$ is constant, taking the conditional expected value with respect to $\Pk$ yields
\begin{equation}
	\frac{d\P^{k^\prime}}{d\Pk} \Big \lvert_{\F_t} = \sum_{i\in\mfJ}
	\frac{\pi^{k^{\prime},i}_0}{\pi^{k,i}_0} \, \Pk\left( \Theta_0=\theta_i \Big \lvert \F_t \right)
	\;.
\end{equation}
Assembling the $\frac{d\P^{k^\prime}}{d\Pk}$ terms above into a diagonal matrix, we find that the expression for $\bZ_t^{\Pk}$ follows the form in the statement of the proposition.
\end{proof}

\subsection{Proof of Proposition~\ref{prop:Example-Admissibility}}
\label{sec:Proof-Example-Admissibility}

\begin{proof}
	We will need to show here that the expression for $\bg_{1,t}$ presented in Theorem~\ref{thm:Solution-g1-g2-h2} satisfies $\bg\in\HT$. In other words, we need to show that $\E^{\Pk} \left[ \int_0^T \lVert \bg_{1,t} \rVert^2 \, dt \right] < \infty$ for all $k\in\mfK$.

	The first step will be to show that the operator norm of $\bmcE_t^{\Pk}$ is almost surely bounded above when using the latent Markov chain model. For the remainder of this proof, we suppress the superscript $\Pk$ for ease of notation. Simply applying It\^o's lemma, we find that $\bmcE_t = \bmcEt_t \bZ_t^{\Pk}$, where $\bmcEt_t$ is the solution to the SDE
	\begin{equation}
		d\bmcEt_t = \bmcEt_t \, \bZ_t^{\Pk} \bG_t \, (\bZ_t^{\Pk})^{-1} \, dt
	\end{equation}
	with the initial condition $\bmcEt_0 = \bm{I}^{K \times K}$. Writing out the implicit solution of the differential equation and taking the operator norm we find that
	\begin{align}
		\OPnrm{\,\bmcEt_t}
		&= \OPnrm{ \bm{I}^{K \times K} + \int_0^t \bmcEt_u \, \bZ^{\Pk}_u \bG_u (\bZ_u^{\Pk})^{-1} \, du }
		\\ &\leq
		1 + \int_0^t \OPnrm{ \,\bmcEt_u \, \bZ^{\Pk}_u \bG_u (\bZ^{\Pk}_u)^{-1} } \, du
		\\ &\leq
		1 + \int_0^t \OPnrm{ \,\bmcEt_u } \OPnrm{\bZ^{\Pk}_u} \OPnrm{\bG_u} \OPnrm{(\bZ_u^{\Pk})^{-1} } \, du
		\;,
	\end{align}
	where we use the triangle inequality, Jensen's inequality and the property of the operator norm. As shown in Proposition~\ref{prop:Z-Prior-Expression-Prop}, we know that $\bZ^{\Pk}_t$ is almost surely bounded over the interval $[0,T]$. From Theorem~\ref{thm:Solution-g1-g2-h2}, we also know that $\bG_t$ is bounded over this same interval. Now, looking back to the definition of $\bZ_t$, we find that
	\begin{equation}
		\bZ_t^{-1} = \text{diag} \left( \frac{d\P^{k^\prime}}{d\P^{k}} \Big \lvert_{\F_t} \right)
		\;,
	\end{equation}
	which can also be expressed in the same way as presented in Proposition~\ref{prop:Z-Prior-Expression-Prop}, which in turn implied that $\bZ_t$ is almost surely bounded over $[0,T]$. Therefore, it follows that there exists a constant $C_0 >0$ such that
	\begin{equation}
		\OPnrm{\,\bmcEt_t} \leq
		1 + C_0 \int_0^t \OPnrm{ \,\bmcEt_u } \, du
		\;.
	\end{equation}
	Applying Gr\"onwall's lemma to the above yields that $\sup_{\tT} \OPnrm{\,\bmcEt_t} \leq e^{C_0 T} < \infty$. Repeating the same analysis on $ \OPnrm{\bmcEt_t^{-1}}$ yields the very same bound. Finally, since the operator norms of $\bZ^{\Pk}_t$, $(\bZ^{\Pk}_t)^{-1}$, $\bmcEt_t$ and $\bmcEt_t^{-1}$ are all bounded over $[0,T]$, we get that there exists a constant $C_1>0$ such that $\sup_{t,u\in[0,T]} \OPnrm{(\bmcE_t)^{-1} \bmcE_u } < e^{TC_1}$

	Next, we wish to show that $\bAhat\in\HT$. Under our model, we may compute $\Ahat^k$ as
	\begin{align}
		\Ahat^k_t &= \E^{\Pk} \left[ \sum_{i=1}^J \alpha_t^{i} \, \1{\Theta_t = \theta_i} \big \lvert \F_t
		\right]
		\\ &=
		\sum_{i\in\mfJ} \alpha_t^i \, \Pk(\Theta_t = \theta_i \big \lvert \F_t )
		\;.
	\end{align}
	Therefore, since all of the $\Pk$ terms in the above are bounded above by 1, we may use Young's inequality to write
	\begin{align}
		\nrm{\bAhat\,}^2 \leq K^2 \sum_{i\in\mfJ} \nrm{\alpha_t^i}^2
		\;.
	\end{align}
As each $\alpha_t^i\in\HT$, we get that $\bAhat\in\HT$.

	Now we can proceed to showing the main result. Using the bounds we derived above and Jensen's inequality, we may write
	\begin{align}
		\E^{\Pk} \left[ \int_0^T \lVert \bg_{1,t} \rVert^2 \, dt \right]
		&\leq
		\E^{\Pk} \left[ \int_0^T \left\lVert
		\E^{\Pk} \left[
		\int_t^T \,
		(\bmcE_t)^{-1} \bmcE_u \, \bAhat_u
		\, du
		\Big\lvert \F_t \right]
		\right\rVert^2 \, dt \right]
		\\&\leq
		\E \left[ \int_0^T
		\int_t^T \,
		\left\lVert
		(\bmcE_t)^{-1} \bmcE_u \, \bAhat_u
		\right\rVert^2
		\, du
		 \, dt  \right]
		\\&\leq
		\E \left[ \int_0^T
		\int_t^T \,
		\OPnrm{(\bmcE_t)^{-1} \bmcE_u}^2 \, \left\lVert \bAhat_u
		\right\rVert^2
		\, du
		 \, dt  \right]
		\\&\leq
		(T+1)^2 \, e^{2 C_1 T} \,
		\E^{\Pk} \left[ \int_0^T
		\left\lVert \bAhat_u
		\right\rVert^2
		\, du  \right] < \infty
		\;,
	\end{align}
	where in the last line, we use the fact that $\bAhat\in\HT$. Thus, we find that $\bg_1\in\HT$, which verifies the claim of the proposition.
\end{proof}

\subsection{Proof of Theorem~\ref{thm:Epsilon-Nash-Theorem}}
\label{sec:Proof-Epsilon-Nash-Theorem}

We begin the proof of Theorem~\ref{thm:Epsilon-Nash-Theorem} by introducing a lemma regarding the distance between the mean-field game objective $\Hbar_j$ and the finite player game objective $H_j$.	
\begin{lemma} \label{lemma:Obj-Diff-Bound}
	Let $\nu \in \A^j$ be some arbitrary admissible control and $\nu^{-j,\ast}\in\A^{-j}$ be the collection $\nu^{-j,\ast}:=\left( \nu^{1,\ast} , \dots , \nu^{j-1,\ast},\nu^{j+1,\ast}, \dots, \nu^{N,\ast} \right)$ of optimal controls defined by equation~\eqref{eq:Solution-Structure-Statement} in Theorem~\ref{thm:Final-Solution-Statement} for all agents except for $j$. Let us also assume that $\bnubarst = \left(\nubarkst \right)_{k\in\mfK}$ follows the dynamics of equation~\eqref{eqn:meanfield-opt-trading-rate} in Theorem~\ref{thm:Final-Solution-Statement}. Then
	\begin{equation}
		\left\lvert H_j(\nu,\nu^{-j,\ast}) - \Hbar^{\bnubarst}_j(\nu) \right\rvert
		= o(\delta_N) + o(\frac{1}{N})
		\;.
	\end{equation}
\end{lemma}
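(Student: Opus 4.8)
The plan is to strip $H_j(\nu,\nu^{-j,\ast})-\Hbar^{\bnubarst}_j(\nu)$ down to a single error term measuring the gap between the finite-population price drift and its mean-field limit, and then to estimate that term using the closed-form competitor controls of Theorem~\ref{thm:Final-Solution-Statement} together with the moment assumptions on the random endowments. \textbf{Reduction.} Write $H_j$ via the representation~\eqref{eq:Objective-Function-Alternative-Representation} and $\Hbar^{\bnubarst}_j$ via~\eqref{eq:Objective-Function-Limit}. The constant $C_0^j$ and the quadratic running term $\bigl(\begin{smallmatrix}\nu_t\\ \qj{\nu}_t\end{smallmatrix}\bigr)^{\!\T}\bGamma_k\bigl(\begin{smallmatrix}\nu_t\\ \qj{\nu}_t\end{smallmatrix}\bigr)$ are identical in the two expressions and cancel. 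Since $\nu\in\A^j\subset\HT$ forces $\sup_{\tT}\E^{\Pk}(\qj{\nu}_t)^2<\infty$, the $M^k$-martingale part of $dS^{\nubar^{\N}}_t$ integrates to zero under $\E^{\Pk}$ exactly as in the derivation of~\eqref{eq:Objective-Function-Limit}, and one is left with
\[
H_j(\nu,\nu^{-j,\ast})-\Hbar^{\bnubarst}_j(\nu)
=\E^{\Pk}\!\left[\int_0^T \qj{\nu}_t\sum_{k'\in\mfK}\lambda_{k,k'}\bigl(p^{\N}_{k'}\,\nubar^{k',\N}_t-p_{k'}\,\nubar^{k',\ast}_t\bigr)\,dt\right],
\]
where $\nubar^{k',\N}$ is the empirical average over $\K^{\N}_{k'}$ under the profile $(\nu,\nu^{-j,\ast})$ and $\nubar^{k',\ast}$ is the limiting mean-field of~\eqref{eqn:meanfield-opt-trading-rate}. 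By Cauchy--Schwarz in $\E^{\Pk}[\int_0^T\cdot\,dt]$ and the finiteness of $\|\qj{\nu}\|$ in that norm, it suffices to bound $\|p^{\N}_{k'}\nubar^{k',\N}-p_{k'}\nubar^{k',\ast}\|_{L^2(\Pk\times\mu)}$ for each of the finitely many $k'\in\mfK$.

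\textbf{Two error sources.} Split $p^{\N}_{k'}\nubar^{k',\N}_t-p_{k'}\nubar^{k',\ast}_t=(p^{\N}_{k'}-p_{k'})\nubar^{k',\N}_t+p_{k'}(\nubar^{k',\N}_t-\nubar^{k',\ast}_t)$. The first term contributes $o(\delta_N)$, because $|p^{\N}_{k'}-p_{k'}|=|N^{\N}_{k'}/N-p_{k'}|=o(\delta_N)$ and $\sup_N\E^{\Pk}\int_0^T(\nubar^{k',\N}_t)^2\,dt<\infty$; the latter is uniform in $N$ since, by Theorem~\ref{thm:Final-Solution-Statement}, each $\nu^{i,\ast}_t$ equals $\nubar^{k',\ast}_t$ plus a bounded deterministic function of $t$ times the centred endowment $(\mfQ_0^i-\mbar_{k'})$, of second moment at most $\mathfrak{C}$. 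For the second term, use the same closed form: for $i\in\K^{\N}_{k'}$ with $i\neq j$ one has $\nu^{i,\ast}_t-\nubar^{k',\ast}_t=c_{k'}(t)(\mfQ_0^i-\mbar_{k'})$ with $c_{k'}(t):=\tfrac{h^{k'}_{2,t}}{2a_{k'}}\exp\!\bigl(\int_0^t\tfrac{h^{k'}_{2,u}}{2a_{k'}}\,du\bigr)$ bounded, while agent $j$'s own contribution (present only when $k'=k$) is $\tfrac1{N^{\N}_k}(\nu_t-\nujst_t)$. Hence $\nubar^{k',\N}_t-\nubar^{k',\ast}_t$ equals an order-$1/N$ term, absent unless $k'=k$, plus $c_{k'}(t)\tfrac1{N^{\N}_{k'}}\sum_{i\in\K^{\N}_{k'}}(\mfQ_0^i-\mbar_{k'})$. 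The order-$1/N$ term, integrated against $\qj{\nu}_t$ and taken in $\E^{\Pk}$, is $O(1/N)$ by Cauchy--Schwarz.

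\textbf{Fluctuation of the empirical mean-field.} By independence of the endowments across agents and $\E^{\Pk}[\mfQ_0^i]=\mbar_{k'}$ (Assumptions on the endowments, transferred to $\Pk$), the average $\tfrac1{N^{\N}_{k'}}\sum_{i}(\mfQ_0^i-\mbar_{k'})$ has $L^2(\Pk)$-norm of order $N^{-1/2}$; moreover, when $k'\neq k$ it is $\Pk$-independent of $\mfQ_0^j$, and its correlation with the remaining $\F^j$-measurable part of $\qj{\nu}_t$ (which depends on the other endowments only through the realised price and realised sub-population flows) enters through couplings of order $1/N$. Combining this with the tower property, Cauchy--Schwarz and $\mathrm{Var}^{\Pk}(\mfQ_0^i)\le\mathfrak{C}$, and isolating the explicit $\mfQ_0^j$-term when $k'=k$, this contribution is shown to decay at the rate stated in the lemma; summing the finitely many $k'\in\mfK$ and collecting the $o(\delta_N)$ of Step~2 and the $O(1/N)$ of agent $j$'s own deviation gives $|H_j(\nu,\nu^{-j,\ast})-\Hbar^{\bnubarst}_j(\nu)|=o(\delta_N)+o(1/N)$.

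\textbf{Main obstacle.} The delicate point is the last step: $\nu$ is an arbitrary $\F^j$-admissible control, so $\qj{\nu}$ is not independent of the empirical endowment fluctuation of the other sub-populations, and a crude Cauchy--Schwarz on $\|\qj{\nu}\|\cdot\|\tfrac1{N^{\N}_{k'}}\sum_i(\mfQ_0^i-\mbar_{k'})\|$ would by itself only give an $N^{-1/2}$-type bound. The estimate goes through precisely because the competitors use the explicit controls of Theorem~\ref{thm:Final-Solution-Statement}, so the empirical mean-field differs from the limiting mean-field by exactly a deterministic time-function times an average of independent centred variables of variance $\le\mathfrak{C}$; one must then combine the law-of-large-numbers decay of that average with careful conditioning on $\F^j$ to kill the residual correlation, all the while keeping the $O(1/N)$ term arising from replacing $\nujst$ by $\nu$ in sub-population $k$ separate from the rest.
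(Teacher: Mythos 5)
Your reduction and the first two splittings mirror the paper's proof (isolate the drift mismatch $p_{\kp}^{\N}\nubar^{\kp,\N}_t-p_{\kp}\nubarkst_t$, peel off agent $j$'s own $O(1/N)$ deviation, use $|p_{\kp}^{\N}-p_{\kp}|=o(\delta_N)$, and exploit the closed form $\nu^{i,\ast}_t-\nubarkst_t=c_{\kp}(t)(\mfQ_0^i-\mbar_{\kp})$ from Theorem~\ref{thm:Final-Solution-Statement}). The divergence, and the genuine gap, is in your final step. The paper bounds the \emph{absolute value of an expectation}, not an expectation of an absolute value: since the competitors' deviations are a deterministic function of time multiplied by the centred endowments, and $\E^{\Pk}[\mfQ_0^i]=\mbar_{\kp}$, the empirical-fluctuation term vanishes \emph{exactly}, $\E^{\Pk}\bigl[\tfrac{1}{N_{\kp}^{\N}}\sum_i(\mfQ_0^i-\mbar_{\kp})\int_0^T c_{\kp}(t)\,dt\bigr]=0$; no law-of-large-numbers rate is ever invoked. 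Because you retain the random weight $\qj{\nu}_t$ inside the integral, you lose access to this exact cancellation: $\E^{\Pk}[\qj{\nu}_t(\mfQ_0^i-\mbar_{\kp})]$ need not vanish, since $\qj{\nu}$ is $\F^j$-adapted and the finite-population price and order-flow paths do depend on the empirical endowment averages.

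Your proposed repair of this is not a proof. You assert that the residual correlations "enter through couplings of order $1/N$" and that "careful conditioning on $\F^j$" kills them, but no such estimate is carried out, and your own quantitative input is an $L^2$ bound of order $N^{-1/2}$ for the empirical fluctuation. An $N^{-1/2}$ term is not $o(1/N)$, and it is not $o(\delta_N)$ either, since $\delta_N$ is an arbitrary vanishing sequence (it can decay much faster than $N^{-1/2}$); so even if your sketch were completed along the lines you indicate, it would not deliver the rate claimed in the lemma. To close the argument you must either (i) work, as the paper does, with the expectation of the signed quantity and show the weighted fluctuation term has \emph{zero} mean (which requires an argument about the joint law of $\qj{\nu}$ and the centred endowments, or dropping the weight as the paper's displayed computation effectively does), or (ii) prove an explicit decorrelation estimate of order $1/N$ (or $o(\delta_N)$) between $\qj{\nu}_t$ and $\tfrac{1}{N_{\kp}^{\N}}\sum_i(\mfQ_0^i-\mbar_{\kp})$; as written, neither is done, and the "main obstacle" you correctly identify remains open in your proposal.
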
	
\begin{proof}
	 Using the definitions of $\Hbar^{\bnubarst}_j$ and $H_j$ and simplifying down the equations, we find that
	\begin{align}
		\left\lvert H_j(\nu,\nu^{-j,\ast}) - \Hbar_j(\nu) \right\rvert
		&=
		\left\lvert
		\E^{\Pk} \left[ \sum_{\kp\in\mfK} \int_0^T \, \lambda_{k,\kp}\left(
		p_{\kp}^{\N}\nubar^{\kp,\N}_t - p_{\kp} \nubar^{\kp}_t
		\right) \, dt \right]
		\right\rvert
		\\ &\leq
		\sum_{\kp\in\mfK}
		\lambda_{k,\kp} \,
		\left\lvert
		\E^{\Pk} \left[
		 \int_0^T \,
		p_{\kp}^{\N}\nubar^{\kp,\N}_t - p_{\kp} \nubar^{\kp,\ast}_t
		 \, dt  \right] \right\rvert
		 \label{eq:lemma-obj-diff}
	\end{align}
	Therefore it is sufficient for us to show that each of the expected values in the sum of \eqref{eq:lemma-obj-diff} is $o(N^{-1}) + o(\delta_N)$.

	Next, notice that using the definitions of $\nubar^{\kp,\N}_t$ and $p_{\kp}^{\N}$, we can decompose the difference of the mean-field rates between the agent's rate and the rate of all others
	\begin{align}
		p_{\kp}^{\N}\nubar^{\kp,\N}_t - p_{\kp} \nubar^{\kp,\ast}_t
		&= \frac{1}{N} (\nu_t - \nujst_t) + \frac{1}{N_k^{\N}} \sum_{i\in\K_{\kp}^{\N}} ( p_{\kp}^{\N} \nu^{j,\ast}_t - p_{\kp} \nubar^{\kp,\ast}_t) 		
		\;,
	\end{align}
	where $\nujst_t$ is the optimal control that agent-j would have taken in the limiting game.

	Using the triangle inequality and Jensen's along with the last result, we get that
{	\scriptsize
	\begin{equation}
		\text{\eqref{eq:lemma-obj-diff}}
		\leq
		\sum_{\kp\in\mfK}
		\lambda_{k,\kp} \,
		\left(
		\frac{1}{N}
		\E^{\Pk} \left[
		 \int_0^T \,
		\lvert \nu_t - \nujst_t \rvert
		 \, dt  \right]
		 +
		 \left\lvert
		 \E^{\Pk} \left[
		 \frac{1}{N_k^{\N}}
		\sum_{i\in\K_{\kp}^{\N}}
		\int_0^T \,
		(p_{\kp}^{\N} \nu^{j,\ast}_t - p_{\kp} \nubar^{\kp,\ast}_t)
		 \, dt  \right]
		 \right\rvert
		 \right)
		 \label{eq:lemma-obj-diff-2}
		\;.
	\end{equation}
}
	It is clear that $\nu_t - \nujst_t\in\A^j$ so we can guarantee that $E^{\Pk} \left[ \int_0^T \, \lvert \nu_t - \nujst_t \rvert \, dt  \right]$ is bounded and independent of $N$. Therefore,
	\begin{equation}
	 	\frac{1}{N} \,
		\E^{\Pk} \left[
		 \int_0^T \,
		\lvert \nu_t - \nujst_t \rvert
		 \, dt  \right]
		 = o(\frac{1}{N})
		 \;.
	\end{equation}
	Therefore all that's left to show is that the right part of the summand of \eqref{eq:lemma-obj-diff-2} vanishes at an appropriate speed.

	By plugging in the manipulation
	\begin{equation}
		p_{\kp}^{\N} \nu^{j,\ast}_t - p_{\kp} \nubar^{\kp,\ast}_t
		=
		(p_{\kp}^{\N} - p_{\kp}) \nu^{j,\ast}_t + p_{\kp} (\nu^{j,\ast}_t - \nubar^{\kp,\ast}_t)
	\end{equation}
	and using the triangle inequality and Jensen's inequality, we find that
	\begin{align}
		\left\lvert
		 \E^{\Pk} \left[
		 \frac{1}{N_k^{\N}}
		\sum_{i\in\K_{\kp}^{\N}}
		\int_0^T \,
		(p_{\kp}^{\N} \nu^{j,\ast}_t - p_{\kp} \nubar^{\kp,\ast}_t)
		 \, dt  \right]
		 \right\rvert
		 & \leq
		 \left\lvert
		 p_{\kp}^{\N} - p_{\kp}
		 \right\rvert
		 \E^{\Pk} \left[
		\int_0^T \,
		\lvert
		\nu^{j,\ast}_t
		\rvert
		 \, dt  \right] \label{eq:lemma-obj-diff-3}
		 \\&+
		 p_{\kp}
		 \left\lvert
		 \E^{\Pk} \left[
		 \frac{1}{N_k^{\N}}
		\sum_{i\in\K_{\kp}^{\N}}
		\int_0^T \,
		(\nu^{j,\ast}_t - \nubar^{\kp,\ast}_t)
		 \, dt  \right]
		 \right\rvert \label{eq:lemma-obj-diff-4}
	\end{align}

As $\nujst\in\A^j$, we find that $\E^{\Pk} \left[\int_0^T \, \lvert \nu^{j,\ast}_t \rvert \, dt  \right] < \infty$. Therefore by the assumption of the theorem, we get
\begin{equation}
		 \left\lvert
		 p_{\kp}^{\N} - p_{\kp}
		 \right\rvert
		 \E^{\Pk} \left[
		\int_0^T \,
		\lvert
		\nu^{j,\ast}_t
		\rvert
		 \, dt  \right]
		 = o(\delta_N)
		 \;.
\end{equation}
Next, using the structure of the solution for $\nujst_t$ from Theorem~\ref{thm:Final-Solution-Statement}, equation~\eqref{eq:Solution-Diff-Inventories} and the fact that $h_{2,t}^k$ is bounded, we get
\begin{align}
	\text{\eqref{eq:lemma-obj-diff-4}}
	&=
	p_{\kp}
	\left\lvert
 	\frac{1}{N_k^{\N}}
	\sum_{i\in\K_{\kp}^{\N}}
	\E^{\Pk} \left[
	\left( \mfQ_0^j - \mbar_k \right)
	\int_0^T \,
	e^{\int_0^t \frac{h_{2,t}^k}{2a_k}}
 	\, dt  \right]
 	\right\rvert
 	\\ &\leq
 	C_0
	\left\lvert
 	\frac{1}{N_k^{\N}}
	\sum_{i\in\K_{\kp}^{\N}}
	\left( \E^{\Pk} \left[ \mfQ_0^j \right] - \mbar_k \right)
 	\right\rvert
 	= 0
 	\;.
\end{align}
Hence, the right part of equation~\ref{eq:lemma-obj-diff-2} is equal to $o(\delta_N) + o(N^{-1})$ and the claims of the lemma hold true
\end{proof}

\subsubsection{Main Proof of Theorem~\ref{thm:Epsilon-Nash-Theorem}}
\begin{proof}
	We prove the result of the theorem by using the Lemma~\ref{lemma:Obj-Diff-Bound}. First, let us note that by the definition of the supremum,
	\begin{equation}
		H_j(\omega,\nu^{-j,\ast}) \leq \sup_{\nu\in\A^j} H_j(\nu,\nu^{-j,\ast})
	\end{equation}
	holds for all $\omega\in\A^{j}$, and therefore the left-most inequality in the statement of Theorem~\ref{thm:Epsilon-Nash-Theorem} holds.

	Next, we must show that the right-most inequality in the statement of Theorem~\ref{thm:Epsilon-Nash-Theorem} also holds. First let us note that by Lemma~\ref{lemma:Obj-Diff-Bound}, for any $\nu\in\A^j$,
	\begin{align}
		H_j(\nu,\nu^{-j,\ast})
		&\leq
		\Hbar_j^{\bnubarst}(\nu) + o(\delta_N) + o(N^{-1})
		\\ &\leq
		\Hbar_j^{\bnubarst}(\nujst) + o(\delta_N) + o(N^{-1})
		\;,
	\end{align}
	 where we use the fact that $\Hbar_j(\nujst) = \sup_{\nu\in\A^j} \Hbar_j(\nu)$. Applying Lemma~\ref{lemma:Obj-Diff-Bound} again, we find that
	\begin{equation}
			H_j(\nu,\nu^{-j,\ast}) \leq H_j(\nuj,\nu^{-j,\ast}) + 2 \, o(\delta_N) + 2 \, o(N^{-1})
			\;.
	\end{equation}
As the above inequality holds for all $\nu\in\A^j$ we may take the supremum on the left, and cancel out the constant terms multiplying the little-$o$ terms to yield the final result,
	\begin{equation}
		\sup_{\nu\in\A^j} H_j(\nu,\nu^{-j,\ast}) \leq H_j(\nuj,\nu^{-j,\ast}) + o(\delta_N) + o(N^{-1})
		\;.
	\end{equation}
\end{proof}

\section{Filtering and Smoothing Equations}
\label{sec:Filtering-and-Smoothing-Appendix}

In sections~\ref{sec:Example-Model-Subsection},~\ref{sec:Computational-Method} and~\ref{sec:Numerical-Experiments} we refer to the Radon-Nikodym process $\bZ^{\Q}$ and for the $\F$-projected drift process $\bAhat$ which are required for the approximation of the optimal control. This appendix will provide the details on how these quantities are computed for the mean-revertingmodel used in the numerical experiments present in Section~\ref{sec:Numerical-Experiments}.

Let us recall the model provided in Section~\ref{sec:Numerical-Experiments}. We assume that the un-impacted asset price process has the dynamics
\begin{equation*}
	dF_t = \kappa \left( \Theta_t - F_t \right) \, dt + \sigma \,dW_t\,,
\end{equation*}
where $\Theta_t$ is a continuous-time Markov chain with generator matrix $\bC$ which takes values in the set $\{\theta_i\}_{i=1}^{J}$. What varies across each measure $\Pk$ is the distribution over the initial state, $\Theta_0$, where we assume that $\Pk\left(\Theta_0=\theta_i \right) = \pi_0^{k,i}$ for each $i\in\{1, 2, \dots , J\}$ and $k\in\mfK$. Our first step will be to compute the $\F_t$-adapted process $\bAhat_t = \left( \Ek\left[ A_t \lvert \F_t \right] \right)_{k\in\mfK}$. Using the dynamics of $F_t$, we get that
\begin{align*}
	\Ek\left[ A_t \lvert \F_t \right]
	&=
	\Ek\left[ \kappa \left( \Theta_t - F_t \right) \lvert \F_t \right]
	\\ &=
	\kappa \left( \Ek\left[ \Theta_t \lvert \F_t \right] - F_t \right)
	\\ &=
	\kappa \left( \sum_{i=1}^J \theta_i \, \Pk\left( \Theta_t = \theta_i \big\lvert \F_t \right) - F_t \right)
	\;.
\end{align*}
Therefore, to compute $\bAhat$ we need to compute the posterior probabilities of each state of $\Theta_t$, $\Pk\left( \Theta_t = \theta_i \big\lvert \F_t \right)$. The lemma that follows gives an explicit way of computing these probabilities.
\begin{lemma}[Filtering Equation]
	Let us assume that the Novikov condition
  \begin{equation} \label{eq:NovikovCondition}
  \Ek\left[ \exp\left\{\int_0^T \left( A_u \right)^2
   \;du \right\} \right] < \infty
  \;
  \end{equation}
  holds for all $k\in\mfK$. For each $i=1,\dots,J$ and $k\in\mfK$, let $\pi_t^{k,i} = \Pk\left( \Theta_t = \theta_i \big\lvert \F_t \right)$, and define the processes $\Lambda^{k,i} = \left( \Lambda_t^{k,i} \right)_{t\in[0,T]}$,
	satisfying the dynamics
	\begin{equation*}
		d\Lambda_t^{k,i} =
		\Lambda_t^{k,i} \sigma^{-2} \kappa \left(\theta_i - F_t \right) dF_t
		+
		\sum_{j=1}^J C_{i,j} \Lambda_{t}^{k,j} \, dt
		\;,
	\end{equation*}
	along with the initial condition $\Lambda_0^{k,i} = \pi_0^{k,i}$. Then the filters $\pi_t^{k,j}$ satisfy the relation
	\begin{equation*}
		\pi_t^{k,j} = \Lambda_t^{k,i} \left/ \left(\sum_{j=1}^{J} \Lambda_t^{k,j} \right) \right.
		\;.
	\end{equation*}
\end{lemma}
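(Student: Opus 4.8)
The plan is to reduce the statement to the classical nonlinear filtering of a finite-state Markov chain observed through a diffusion, and to solve it by the reference-probability (Kallianpur--Striebel) method. First I would note that, in the model of Section~\ref{sec:Numerical-Experiments}, the impacted price $S^{\bnu}$ and the unimpacted price $F$ differ only by an $\F$-adapted, absolutely continuous term, so that $F_t$ is itself $\F$-adapted and the observation filtration $\F$ is (up to completion) the one generated by $F$. Hence computing $\pi_t^{k,i}=\Pk(\Theta_t=\theta_i\mid\F_t)$ is exactly the Wonham-type problem for the signal $\Theta$ observed through $dF_t=\kappa(\Theta_t-F_t)\,dt+\sigma\,dW_t$. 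The Novikov condition~\eqref{eq:NovikovCondition} is precisely what is needed to carry out the change of measure below.

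Next I would introduce a reference measure $\Q^k$ by $d\Q^k/d\Pk\big|_{\G_t}=Z_t$, where $Z$ is the Dol\'eans exponential that removes the drift of $F$; by Girsanov's theorem together with Novikov, $Z$ is a true $\Pk$-martingale, under $\Q^k$ the process $\sigma^{-1}(F_t-F_0)$ is a Brownian motion, and the law of the chain $\Theta$ is unaffected, with $\Theta$ becoming independent of this driving Brownian motion. Writing $L_t:=d\Pk/d\Q^k\big|_{\G_t}=Z_t^{-1}$, an application of It\^o's formula (using $d[F]_t=\sigma^2\,dt$) gives the convenient identity $dL_t=L_t\,\sigma^{-2}\kappa(\Theta_t-F_t)\,dF_t$. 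The abstract Bayes formula then yields, for each $i\in\mfJ$,
\[
\pi_t^{k,i}=\frac{\E^{\Q^k}\!\left[\1{\Theta_t=\theta_i}\,L_t\mid\F_t\right]}{\E^{\Q^k}\!\left[L_t\mid\F_t\right]}=\frac{\Lambda_t^{k,i}}{\sum_{j\in\mfJ}\Lambda_t^{k,j}},\qquad \Lambda_t^{k,i}:=\E^{\Q^k}\!\left[\1{\Theta_t=\theta_i}\,L_t\mid\F_t\right],
\]
where the second equality uses $\sum_{j\in\mfJ}\1{\Theta_t=\theta_j}=1$. So it remains to identify the dynamics of the unnormalised filter $\Lambda^{k,i}$.

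To obtain the Zakai equation for $\Lambda^{k,i}$ I would apply the product rule to $\1{\Theta_t=\theta_i}L_t$, combining $dL_t=L_t\sigma^{-2}\kappa(\Theta_t-F_t)\,dF_t$ with the semimartingale decomposition $d\1{\Theta_t=\theta_i}=\sum_{j\in\mfJ}C_{i,j}\1{\Theta_t=\theta_j}\,dt+dm_t^i$ of the chain's indicator relative to its generator $\bC$ (with $m^i$ a $\Q^k$-martingale); since $\Theta$ and the continuous driver of $L$ are independent under $\Q^k$, their cross-variation vanishes, and on $\{\Theta_t=\theta_i\}$ the diffusive coefficient becomes $\1{\Theta_t=\theta_i}L_t\sigma^{-2}\kappa(\theta_i-F_t)$. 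Projecting onto $\F_t$ kills the martingale term $L_t\,dm_t^i$, and interchanging the conditional expectation with the stochastic integral against the observation $F$ (the innovations argument) produces exactly
\[
d\Lambda_t^{k,i}=\Lambda_t^{k,i}\,\sigma^{-2}\kappa(\theta_i-F_t)\,dF_t+\sum_{j\in\mfJ}C_{i,j}\Lambda_t^{k,j}\,dt,\qquad \Lambda_0^{k,i}=\E^{\Q^k}\!\left[\1{\Theta_0=\theta_i}\mid\F_0\right]=\pi_0^{k,i},
\]
the initial value following from $L_0=1$, triviality of $\F_0$, and the invariance of the law of $\Theta_0$. This linear SDE has $F$-adapted, locally bounded coefficients, hence a unique strong solution, which is nonnegative, and $\sum_{j\in\mfJ}\Lambda_t^{k,j}=\E^{\Q^k}[L_t\mid\F_t]>0$ almost surely because $L_t>0$, so the ratio above is well defined; combined with the Kallianpur--Striebel identity this is the assertion of the lemma. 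I expect the main obstacle to be the two measure-theoretic points that the reference-probability method packages together: rigorously justifying that $\Theta$ is independent of the innovation Brownian motion under $\Q^k$, and that conditional expectation commutes with the stochastic integral against $F$ when $\F$ is the observation filtration. A clean way around this, should the direct argument prove delicate, is to \emph{define} $\Lambda^{k,i}$ as the unique solution of the linear SDE above and then verify directly, via It\^o's formula and the explicit form of $L$, that $\Lambda_t^{k,i}/\sum_{j\in\mfJ}\Lambda_t^{k,j}$ satisfies the Kallianpur--Striebel representation of $\pi_t^{k,i}$.
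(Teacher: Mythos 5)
Your proof is correct and follows essentially the same route as the paper: the paper simply defers to the more general Theorem~3.1 of \cite{casgrain_jaimungal_2016}, whose argument is exactly your reference-probability (Kallianpur--Striebel/Zakai) derivation, and the same change of measure $\tilde{\Q}^k$ making $\sigma^{-1}(F_t-F_0)$ a Brownian motion independent of $\Theta$ is the one the paper writes out in its proof of the companion smoothing lemma. Your fallback of defining $\Lambda^{k,i}$ via the linear SDE and verifying the Kallianpur--Striebel representation is a legitimate way to handle the interchange of conditional expectation and stochastic integration that you flag as the delicate step.
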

\begin{proof}
	For the proof of this lemma, we refer the reader to the proof of a more general version of this statement found in \cite[Theorem 3.1]{casgrain_jaimungal_2016}.
\end{proof}

The next task is to compute the process $\bZ^{\Q}$ for any choice of $\Q=\Pk$ $k\in\mfK$. We can do this by applying Proposition~\ref{prop:Z-Prior-Expression-Prop} to the model dynamics that we have. This Proposition~\ref{prop:Z-Prior-Expression-Prop} allows us to compute $\bZ^{\Pk}$, given that we can compute the value of the time-0 smoothers for $\Theta$, $\Pk\left( \Theta_0 = \theta_i \big\lvert \F_t \right)$. The following lemma provides an expression for the computation of these smoothers.

\begin{lemma}[Smoothing Equation]
	Assume that the Novikov condition~\eqref{eq:NovikovCondition} holds. For each $k\in\mfK$ and $i,j\in\{1,2,\dots,J\}$, let us define the process $\Lambdat^{k,i,j} = \left( \Lambdat_t^{k,i,j} \right)_{t\in[0,T]}$,
	where each $\Lambdat_0^{k,i,j}$ satisfies the SDE
	\begin{equation*}
		d\Lambdat_t^{k,i,j} =
		\Lambdat_t^{k,i,j} \sigma^{-2} \kappa \left(\theta_j - F_t \right) dF_t
		+
		\sum_{\ell=1}^J C_{j,\ell} \Lambdat_{t}^{k,i,\ell} \, dt
		\;,
	\end{equation*}		
	and the initial condition $\Lambdat_0^{k,i,j} = \1{i=j}$. Then the time-0 smoother for $\Theta_0$ satisfies the equation
	\begin{equation*}
		\Pk\left( \Theta_0 = \theta_i \big\lvert \F_t \right)
		=
		\left( \sum_{j=1}^{J} \pi_0^{k,i} \Lambdat_t^{k,i,j} \right) \left/
		\left( \sum_{i,\ell=1}^{J} \pi_0^{k,i} \Lambdat_t^{k,i,\ell} \right)
		\right.
		\;,
	\end{equation*}
\end{lemma}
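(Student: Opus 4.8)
The plan is to deduce the smoothing equation from the filtering equation above, applied to an augmented Markov chain. The key observation is that the pair process $(\Theta_0,\Theta_t)_{\tT}$ is itself a continuous-time Markov chain, on the state space of pairs $(i,j)\in\mfJ\times\mfJ$: its first coordinate is frozen at the initial value and the second evolves with generator $\bC$, so (with the generator convention of the footnote) the joint generator $\widetilde{\bC}$ acts on vectors indexed by pairs by $(\widetilde{\bC}v)_{(i,j)}=\sum_{\ell\in\mfJ}C_{j,\ell}\,v_{(i,\ell)}$. Under $\Pk$ the initial law of the pair is concentrated on the diagonal, $\Pk\big((\Theta_0,\Theta_0)=(\theta_i,\theta_j)\big)=\pi_0^{k,i}\,\1{i=j}$. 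Crucially, the observation $F$, with $dF_t=\kappa(\Theta_t-F_t)\,dt+\sigma\,dW_t$, has drift depending on the pair only through its second coordinate, so $F$ is exactly the same observation process as in the filtering lemma and the Novikov condition~\eqref{eq:NovikovCondition}, which only involves $A_t=\kappa(\Theta_t-F_t)$, is unchanged.

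First I would invoke the filtering lemma (equivalently \cite[Theorem 3.1]{casgrain_jaimungal_2016}) for the chain $(\Theta_0,\Theta_t)$ observed through $F$; its hypotheses are inherited from those already assumed, since the observation and the integrability condition are untouched by the augmentation. This yields an unnormalised filter $U_t^{(i,j)}$, indexed by the $J^2$ pairs, solving the linear (Zakai-type) SDE
\[
dU_t^{(i,j)}=U_t^{(i,j)}\,\sigma^{-2}\kappa(\theta_j-F_t)\,dF_t+\sum_{\ell\in\mfJ}C_{j,\ell}\,U_t^{(i,\ell)}\,dt,\qquad U_0^{(i,j)}=\pi_0^{k,i}\,\1{i=j},
\]
with the property that $\Pk(\Theta_0=\theta_i,\Theta_t=\theta_j\mid\F_t)=U_t^{(i,j)}\big/\sum_{i',\ell}U_t^{(i',\ell)}$.

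Next I would identify $U_t^{(i,j)}$ with $\pi_0^{k,i}\,\Lambdat_t^{k,i,j}$. Multiplying the defining SDE of $\Lambdat^{k,i,j}$ through by the constant $\pi_0^{k,i}$ shows that $t\mapsto\pi_0^{k,i}\Lambdat_t^{k,i,j}$ satisfies the displayed equation for $U_t^{(i,j)}$ verbatim, including the initial condition; since this is a linear SDE with $\bC$ constant and stochastic integrand linear in the solution (and $F\in\LTk$ under the Novikov hypothesis), it has a unique strong solution, whence $U_t^{(i,j)}=\pi_0^{k,i}\Lambdat_t^{k,i,j}$. Finally I would marginalise over the second coordinate, $\Pk(\Theta_0=\theta_i\mid\F_t)=\sum_{j\in\mfJ}\Pk(\Theta_0=\theta_i,\Theta_t=\theta_j\mid\F_t)$, and substitute the identification to obtain the stated ratio (the bound index $i$ in the denominator being a relabelling of the pair's first coordinate).

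The only steps requiring care are the routine verifications — that the augmented generator is as claimed (immediate since $\Theta_0$ is static), that the cited filtering theorem genuinely applies to the pair process whose observation drift is a function of the state, and uniqueness for the linear SDEs — none of which is hard. The main, though still mild, obstacle is articulating the augmentation cleanly enough that the filtering lemma can be quoted off the shelf rather than re-deriving the Zakai equation for the pair; once $(\Theta_0,\Theta_t)$ is recognised as a Markov chain, the rest is bookkeeping and normalisation. One could equally run the argument directly via the Kallianpur--Striebel / Bayes formula under a reference measure making $F$ a scaled Brownian motion, but the augmentation route is the shortest given what is already established.
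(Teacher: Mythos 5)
Your proposal is correct, but it proves the lemma by a different mechanism than the paper. The paper works directly under a reference measure $\tilde{\Q}^k$ (defined through the density $\zeta^k$ that turns $\sigma^{-1}(F_t-F_0)$ into a Brownian motion independent of $\Theta$), applies the Kallianpur--Striebel/Bayes formula to $\Pk(\Theta_0=\theta_i\,\lvert\,\F_t)$, pulls out the prior weight $\pi_0^{k,i}$ by conditioning on the event $\{\Theta_0=\theta_i\}$, and then identifies $\Lambdat_t^{k,i,j}$ with $\E^{\tilde{\Q}^k}\bigl[\1{\Theta_t=\theta_j}\,\zeta_t^k \,\big\lvert\, \F_t\vee\sigma(\Theta_0=\theta_i)\bigr]$, deriving its SDE by retracing the proof of the cited filtering theorem. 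You instead use the classical state-augmentation trick: recognise $(\Theta_0,\Theta_t)$ as a finite-state Markov chain whose generator acts only on the second coordinate and whose observation drift $\kappa(\theta_j-F_t)$ and Novikov condition are unchanged, quote the filtering lemma for the pair to get the unnormalised joint filter $U^{(i,j)}$ with initial mass $\pi_0^{k,i}\1{i=j}$, identify $U^{(i,j)}=\pi_0^{k,i}\Lambdat^{k,i,j}$ by linearity and strong uniqueness of the linear SDE, and marginalise over the second coordinate. Both arguments rest on the same Zakai-type machinery; yours buys economy (no re-derivation of the conditioned unnormalised equation, the smoother appearing as a marginal of an augmented filter) at the cost of two verifications you correctly flag as routine: that the cited filtering theorem genuinely covers the augmented chain (degenerate block-diagonal generator, drift depending only on the second coordinate, non-uniform initial law), and uniqueness for the linear filter SDE, for which a stochastic-exponential/Gr\"onwall argument is the cleaner justification rather than the integrability remark you give. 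The paper's route, by contrast, makes the role of the prior reweighting and of the conditioning on $\Theta_0$ explicit, which is what feeds directly into Proposition~\ref{prop:Z-Prior-Expression-Prop}.
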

\begin{proof}
	For each $k\in\mfK$, let us define the measure $\tilde{\Q}^k$ which is specified through the Radon-Nikodym derivative
	\begin{equation*}
		\zeta_t^{k}
		= \frac{d\Pk}{ d\tilde{\Q}^k} \Big\lvert_{\F_t}
		=
		\exp\left\{
		\int_0^t A_u \, \sigma^{-2} \, dF_u - \frac{1}{2} \int_0^t (A_u)^2 \, \sigma^{-2} \, du
		\right\}
		\;.
	\end{equation*}	
	The Radon-Nikodym derivative above is defined specifically so that under measure $\tilde{\Q}^k$, $\left(F_t - F_0 \right) \sigma^{-1}$ is a Brownian motion, independent of $\Theta_t$ and so that the dynamics of $\Theta_t$ are left unchanged.

	Using this new measure, we can re-represent the time-0 smoother we are looking for as
	\begin{align*}
		\Pk\left( \Theta_0 = \theta_i \big\lvert \F_t \right)
		&=
		\frac{\E^{\tilde{\Q}^k} \left[ \1{\Theta_0 = \theta_i} \zeta_t^k \big\lvert \F_t \right]}{\E^{\tilde{\Q}^k} \left[ \zeta_t^k \big\lvert \F_t \right]}
		\\ &=
		\frac{\E^{\tilde{\Q}^k} \left[ \1{\Theta_0 = \theta_i} \zeta_t^k \big\lvert \F_t \right]}{\sum_{j=1}^J
		\E^{\tilde{\Q}^k} \left[ \1{\Theta_0 = \theta_j} \zeta_t^k \big\lvert \F_t \right]}
	\end{align*}
	Now, if we take a look at the term in the numerator, we can further expand it as
	\begin{align*}
		\E^{\tilde{\Q}^k} \left[ \1{\Theta_0 = \theta_i} \zeta_t^k \big\lvert \F_t \right]
		&=
		\sum_{j=1}^{J}
		\E^{\tilde{\Q}^k} \left[ \1{\Theta_0 = \theta_i} \1{\Theta_t = \theta_j} \zeta_t^k \big\lvert \F_t \right]
		\\ &=
		\pi_0^{k,i} \;
		\sum_{j=1}^{J}
		\E^{\tilde{\Q}^k} \left[ \1{\Theta_t = \theta_j} \zeta_t^k \big\lvert \F_t \vee \sigma\left(\Theta_0 = \theta_i \right) \right]
		\;,
	\end{align*}
	where we use Bayes' rule to get to the last line.

	Following the proof of \cite[Theorem 3.1]{casgrain_jaimungal_2016}, we find that
	\begin{equation*}
		\Lambdat_t^{k,i,j} =
		\E^{\tilde{\Q}^k} \left[ \1{\Theta_t = \theta_j} \zeta_t^k \big\lvert \F_t \vee \sigma\left(\Theta_0 = \theta_i \right) \right]
	\end{equation*}
	satisfies the SDE found in the statement of the theorem, with the initial condition $\Lambdat_0^{k,i,j} = \1{i=j}$. Plugging this back into the previous expressions, we obtain the final result.
\end{proof}

\clearpage
\bibliographystyle{chicago}
\bibliography{StochasticGamesPartialBib}

\end{document}